\documentclass[11pt]{article}

\usepackage[utf8]{inputenc}
\usepackage{lipsum}
\usepackage[table,xcdraw]{xcolor}
\usepackage{amsmath, amssymb,amsfonts,amsthm,mathtools,nicefrac}
\usepackage{xspace,graphicx,relsize,bm}
\usepackage{soul} 

\usepackage{parskip}  
\usepackage{comment}

\usepackage{libertinus}  
\usepackage[zerostyle=a,scaled=.97]{newtxtt}  
\usepackage[T1]{fontenc}
\usepackage[libertine]{newtxmath}  
\usepackage{dsfont}  
\usepackage[cal=stix, calscaled=.97]{mathalpha}  
\usepackage{anyfontsize}

\usepackage{url}
\usepackage[
    backend=biber,
    style=alphabetic,
    sorting=nyt,
    backref=true,
    maxbibnames=99
    ]{biblatex}
\DefineBibliographyStrings{english}{%
  backrefpage = {page},
  backrefpages = {pages},
  mathesis = {MSc thesis},
}

\usepackage[linktocpage=true]{hyperref}
\usepackage[margin=1in]{geometry}
\usepackage[singlespacing]{setspace}
\definecolor{linkcol}{rgb}{0.0,0.55,0.7}
\definecolor{citecol}{rgb}{0.0, 0.6, 0.45}
\definecolor{urlcol}{rgb}{0.7, 0.0, 0.55}
\hypersetup{
	colorlinks,
	linkcolor={linkcol},
	citecolor={citecol},
	urlcolor={urlcol}
}

\usepackage{subcaption}
\usepackage{mleftright}
\usepackage{tipa}
\usepackage{multirow}
\usepackage{physics}  
\usepackage{authblk}  

\usepackage{subfiles} 

\usepackage{algorithm}
\usepackage{algpseudocode}

\renewcommand{\algorithmicrequire}{\textbf{Input:}}
\renewcommand{\algorithmicensure}{\textbf{Output:}}

\usepackage{multirow}

\usepackage{pifont}

\usepackage{float}

\usepackage{chngpage}

\usepackage[capitalize,noabbrev]{cleveref}  

\usepackage{longtable}
\usepackage{lscape}

\usepackage{algorithm}
\usepackage{algpseudocode}
\usepackage{complexity}

\def\01{\{0,1\}}
\newcommand{\mc}[1]{\mathcal{#1}}

\DeclareMathOperator*{\Ex}{\mathbf{E}}
\let\Pr\relax
\DeclareMathOperator*{\Pr}{\mathbf{Pr}}

\DeclareMathOperator{\GL}{GL}
\DeclareMathOperator{\Z}{\mathbb{Z}}
\DeclareMathOperator*{\argmax}{arg\,max}

\renewcommand{\eqref}[1]{Eq.~(\ref{#1})} 

\newcommand{\Cl}{\ensuremath{\mathrm{Cl}}}

\newcommand{\tv}{\mathrm{d}_\mathrm{TV}}

\newcommand{\td}{\mathrm{d}_\mathrm{TD}}

\newcommand{\lr}[1]{\left(#1\right)}
\newcommand{\lrq}[1]{\left[#1\right]}

\newcommand{\lra}[1]{\left |#1\right |}

\DeclarePairedDelimiter\floor{\lfloor}{\rfloor}

\renewcommand{\norm}[1]{\lVert{#1}\rVert}

\DeclarePairedDelimiter\lrangle{\langle}{\rangle}
\makeatletter
\let\oldlrangle\lrangle
\def\lrangle{\@ifstar{\oldlrangle}{\oldlrangle*}}
\makeatother

\allowdisplaybreaks
\newtheoremstyle{mydefinitionsty}
{10pt}
{10pt}
{}
{}
{}
{}
{.5em}
{\textbf{\thmname{#1}~\thmnumber{#2}:  }\thmnote{(#3)}}
\theoremstyle{mydefinitionsty}
\newtheorem{definition}{Definition}
\newtheorem{remark}{Remark}

\newtheorem{fact}{Fact}

\newtheoremstyle{mythmsty}
{10pt}
{10pt}
{\itshape}
{}
{}
{}
{.5em}
{\textbf{\thmname{#1}~\thmnumber{#2}:  }\thmnote{(#3)}}
\theoremstyle{mythmsty}

\newtheorem{theorem}{Theorem}
\newtheorem{lemma}{Lemma}

\newtheorem{corollary}{Corollary}
\newtheorem{claim}{Claim}

\newtheorem{proposition}{Proposition}

\definecolor{nbcolor}{rgb}{0.2, 0.5, 0.5}

 \title{Cloning is as Hard as Learning for Stabilizer States}

\author[1]{Nikhil Bansal\footnote{\href{mailto:nikhil.bansal@warwick.ac.uk}
{nikhil.bansal@warwick.ac.uk}}}
\author[1]{Matthias C. Caro\footnote{\href{mailto:matthias.caro@warwick.ac.uk}
{matthias.caro@warwick.ac.uk}}}
\author[2]{Gaurav Mahajan\footnote{\href{mailto:gaurav.mahajan@yale.edu}{gaurav.mahajan@yale.edu}}}
\affil[1]{Department of Computer Science, University of Warwick, Coventry, UK}
\affil[2]{Department of Computer
Science, Yale University, Connecticut,
USA}
\date{}

\begin{document}
\maketitle

\begin{abstract}
    The impossibility of simultaneously cloning non-orthogonal states lies at the foundations of quantum theory.
    Even when allowing for approximation errors, cloning an arbitrary unknown pure state requires as many initial copies as needed to fully learn the state.
    Rather than arbitrary unknown states, modern quantum learning theory often considers structured classes of states and exploits such structure to develop learning algorithms that outperform general-state tomography.
    This raises the question: How do the sample complexities of learning and cloning relate for such structured classes?

    We answer this question for an important class of states. Namely, for $n$-qubit stabilizer states, we show that the optimal sample complexity of cloning is $\Theta(n)$. 
    Thus, also for this structured class of states, cloning is as hard as learning.
    To prove these results, we use representation-theoretic tools in the recently proposed Abelian State Hidden Subgroup framework and a new structured version of the recently introduced random purification channel to relate stabilizer state cloning to a variant of the sample amplification problem for probability distributions that was recently introduced in classical learning theory.
    This allows us to obtain our cloning lower bounds by proving new sample amplification lower bounds for classes of distributions with an underlying linear structure. 
    Our results provide a more fine-grained perspective on No-Cloning theorems, opening up connections from foundations to quantum learning theory and quantum cryptography.
\end{abstract}

\newpage
\tableofcontents
\newpage

\section{Introduction}
The \textit{No-Cloning} theorem~\cite{wootters1982single, DIEKS1982271} is a foundational result in quantum theory. 
It states: There is no \textit{one-fits-all} cloning unitary $U$ such that $ U (\ket{\psi}\ket{0}) = \ket{\psi}\ket{\psi}$ holds simultaneously for non-orthogonal pure states $\ket{\psi}$. 
We also have what one may call an \textit{approximate No-Cloning} theorem~\cite{Werner_1998, Keyl_1999}:
Approximately cloning arbitrary pure states requires as many copies as pure state tomography~\cite{gross2010, baldwin2016, haah2016sample-optimal, odonnell2016efficient, pelecanos2025debiasedkeylsalgorithmnew}. 
While this tells us cloning general pure states is exponentially costly, imposing sufficiently stringent structure can make cloning easy. 
As a trivial example, any class of orthogonal states can be cloned (and learned) perfectly from a single copy. 
Thus, there are two extremes: Approximate cloning and learning are exponentially costly without structure, but even exact cloning and learning become trivial with orthogonality structure. 

In many areas of quantum computation and information, and in particular in quantum learning theory~\cite{arunachalam2017surveyquantumlearningtheory, anshu2023surveycomplexitylearningquantum}, we often consider neither of these extremes. 
Instead, we focus on classes of states that are structured yet sufficiently rich to exhibit non-trivial quantum behavior. 
Prominent examples include stabilizer states~\cite{montanaro2017learningstabilizerstatesbell}, tensor network states{~\cite{landoncardinal2010efficientdirecttomographymatrix, cramer2010efficient, landau2013polynomialtimealgorithmgroundstate, Arad_2017, Bakshi_2025}, quantum example states~\cite{bshouty1995learning-dnf, arunachalam2017surveyquantumlearningtheory, caro2024classical-verification, caro2025testingclassicalpropertiesquantum}, and phase states~\cite{arunachalam2023optimalalgorithmslearningquantum}. 
However, whereas questions of learning such structured classes of states are well-studied, questions of (approximate) cloning under structural assumptions remain unexplored. This leads us to our first central question: 

\begin{center}
    \textit{Can approximate cloning be easier than learning for structured classes of states?}
\end{center}

A computational version of this question was raised in~\cite{fefferman2025hardnesslearningquantumcircuits}, and there is evidence~\cite{nehoran2024, bostanci2025duality} towards a positive answer. %
We, however, interpret the question in terms of sample complexity: Are there structured classes of states for which the sample complexity of approximate cloning is strictly smaller than that of learning?

As classical information can be copied, ``classical cloning'' does not provide any insights useful for the quantum case. However, classical learning theorists have recently introduced the task of \textit{sample amplification}~\cite{axelrod2019sampleamplificationincreasingdataset, axelrod2024statisticalcomplexitysampleamplification}: Given a dataset of random samples drawn from an unknown distribution from some known class, produce a larger dataset that looks like it consists of true samples. 
Notably, \cite{axelrod2019sampleamplificationincreasingdataset} proved that sample amplification can be strictly easier than learning.
When thinking of states loosely as a quantum counterpart of classical probability distributions, we see that the task of sample amplification is conceptually similar to approximate quantum cloning. %
We posit that, via this analogy, new results in sample amplification can lead to novel insights into quantum cloning. 

To this end, we connect sample amplification with a prominent area in computational learning theory, the learnability of Boolean functions classes.
Here, given i.i.d.~samples of the form  $(x, f(x))$, with the input $x$ drawn from some distribution $\mc V$, and with the unknown function $f$ promised to lie in some known class $\mc F$, we aim to learn $f$. 
If $\mc V$ is known, this can be viewed as learning the distribution $(\mc V, f)$ from samples. 
Hence, from the perspective of computational learning theory, the following question arises naturally: 
\begin{center}
    \textit{Can sample amplification for structured classes of distributions $\{(\mc V, f)\}_{f\in\mc F}$ be easier than learning?}
\end{center}

In this work, we study the above two questions for specific kind of structures. 
First, we exhibit function classes for which sample amplification is as hard as learning in terms of sample complexity. Concretely, we show that amplifying parities is as hard as learning them, and we provide a coding-theoretic interpretation of this result. 
Second, we draw on our classical no-go result for amplification to show a no-go for structured quantum cloning: Stabilizer states are no easier to approximately clone than to learn in terms of sample complexity. This serves as an approximate No-Cloning Theorem for stabilizer states, the to our knowledge first No-Cloning Theorems for practically relevant class of structured quantum states.

\subsection{Framework}

Our first contribution is to propose structured variants of sample amplification 
and quantum cloning. %

\paragraph{Structured sample amplification.} 
Informally, sample amplification as introduced by \cite{axelrod2019sampleamplificationincreasingdataset, axelrod2024statisticalcomplexitysampleamplification} is the following task: Let $\mc D$ be a known class of distributions. Given $t$ samples drawn i.i.d.~from an unknown distribution $D \in \mc D$, ``amplify'' them to $t+m$ samples that are indistinguishable from $t+m$ true i.i.d~samples drawn from $D$. 
Here, it is natural to formalise indistinguishability in terms of small total variation (TV) distance. Thus, a distribution class $\mc D$ over domain $\mc X$ admits a $(t, t+m, \epsilon)$ sample amplification scheme if there exists a (randomised) map $T_{\mc{D}, t, m, \epsilon}: \mc{X}^{t} \rightarrow \mc{X}^{t+m}$ such that,
\begin{equation}
        \sup_{D \in \mc{D}} \tv\lr{D^{\otimes t} \circ T^{-1}_{\mathcal{D}, t, m, \epsilon} , D^{\otimes t+m}} \leq \epsilon\, .
\end{equation}

As already highlighted in \cite{axelrod2019sampleamplificationincreasingdataset}, sample amplification can be viewed as a game between an amplifier and a distinguisher. The amplifier tries to ampify $t$ samples from an unknown distribution $ D$ to $t+m$ samples. The distinguisher, who knows the distribution $ D$ but not the specific samples available to the amplifier, receives a set of $t+m$ samples and aims to distinguish whether it is true i.i.d.~data or an output produced by the amplifier. 
The notion of distance originates as the advantage over random guessing that the distinguisher can achieve. That is, equivalently to the above, a distribution class $\mc{D}$ admits a $(t, t+m, \epsilon)$ sample amplification procedure if there exists a (randomised) map $T_{\mc{D}, t, m, \epsilon}: \mc{X}^{t} \rightarrow \mc{X}^{t+m}$ such that,
\begin{equation}
    \sup_{D \in \mc D}\sup_{\mc A} \lra{\Pr_{X^{t+m} \leftarrow D^{\otimes t+m}}\lrq{\mc{A}(X^{t+m}) = 1} - \Pr_{X^t \leftarrow D^{\otimes t}}\lrq{\mc{A}(T_{\mc{D}, t, m, \epsilon}(X^t)) = 1}} \leq \epsilon,
\end{equation}
where the second supremum is over all possible distinguishers. 

\cite{axelrod2019sampleamplificationincreasingdataset, axelrod2024statisticalcomplexitysampleamplification} studied sample amplification for different classes of discrete and continuous distributions. For discrete distributions over a bounded support of size $k$, %
they gave a $(t, t+\Theta(t\epsilon/\sqrt{k}), \epsilon)$ sample amplification scheme whenever $t=\Omega(\sqrt{k}/\epsilon)$. 
This in particular implies that we can amplify by one sample to constant accuracy already from $\mc O (\sqrt{k})$ samples, which constitutes a quadratic improvement over the sample complexity of learning this class of distributions to constant accuracy. %
So, for (unstructured) discrete distributions of bounded support, sample amplification is strictly easier than learning. 

While \cite{axelrod2019sampleamplificationincreasingdataset, axelrod2024statisticalcomplexitysampleamplification} were motivated by distribution learning, we propose a variant of sample amplification that is inspired by the standard setting of computational probably approximate correct (PAC) learning theory \cite{valiant1984theory}: 
Given i.i.d.~samples $\{(x_i, f(x_i))\}_i$ of some unknown function $f:\{0, 1\}^n \rightarrow \{0, 1\}$ from some known class $\mc F$, output a hypothesis $\hat{f}$ s.t.~$\Pr_x [f(x)\neq \hat{f}(x)]$ is small with high success probability. 
Often, the inputs $\{x_i\}_i$ are assumed to be i.i.d.~uniformly random $n$-bit strings, 
so that the learning task can equivalently be formulated as that of learning an unknown distribution from a known class $\mc D_{\mc F} = \{(\mc U_n, f)\}_{f \in \mc F}$ from samples to good approximation in total variation distance. %
We thus propose the following structured version of sample amplification for distributions described by Boolean functions: 

\begin{definition}[Sample amplification of functions -- Informal]
    A class $\mc F$ of Boolean functions is said to admit a \emph{$(t, t+m, \epsilon)$-sample amplification scheme} if the class $\mc D_{\mc F} = \{(\mc U_n, f)\}_{f \in \mc F}$ of distributions does.
\end{definition}
We will study sample amplification for different Boolean function classes. This will then serve as a tool for understanding approximate quantum cloning of structured classes of states.

\paragraph{Structured quantum cloning.} 

Approximate quantum cloning\footnote{From here on, we often omit the ``approximate.'' We always consider \emph{approximate} cloning unless specified otherwise.}~\cite{Werner_1998} is the task of mapping $t$ many i.i.d.~copies of an $n$-qubit state to a $((t+m)\cdot n)$-qubit state that is close to $t+m$ i.i.d.~copies of the original state. The natural notions of distance to consider are (in)fidelity and trace distance. \cite{Werner_1998} studied the problem for pure state cloning with respect to infidelity, so that the figure of merit for a linear completely positive and trace-preserving (CPTP) cloning map $\Lambda:\mc{B}((\mathbb{C}^{2^n})^{\otimes t}) \rightarrow \mc{B}((\mathbb{C}^{2^n})^{\otimes t+m})$ is given as
\begin{equation}
    \mathfrak{F} (\Lambda) = \inf_{|\psi\rangle} \tr\lr{| \psi\rangle \langle \psi|^{\otimes t+m}\,  \Lambda\lr{| \psi\rangle \langle \psi|^{\otimes t}}}.
\end{equation}
\cite{Werner_1998} showed that the optimal cloning fidelity for pure state cloning from $t$ copies to $t+m$ copies equals $\frac{d[t]}{d[t+m]}$, where $d[t] = {{2^n + t-1} \choose t}$. This is achieved by projecting $|\psi\rangle^{\otimes t}\otimes I_{2^{nm}}/{2^{mn}}$ onto the symmetric subspace of $(\mathbb{C}^{2^n})^{\otimes t+m}$. 
In particular, the optimal achievable fidelity for cloning one extra copy ($m=1$) is $\frac{t+1}{t+2^n}$. Consequently, the sample complexity necessary and sufficient to achieve $1-\epsilon$ fidelity when cloning one extra copy is $t = \Theta\lr{2^n/\epsilon}$. This matches that of optimal pure state tomography with respect to infidelity~\cite{Bru__1999, gross2010, baldwin2016, haah2016sample-optimal, odonnell2016efficient, pelecanos2025debiasedkeylsalgorithmnew}. 
While exact cloning is known to become impossible as soon as we allow for any two non-orthogonal states, these bounds for approximate cloning 
are derived for the case of an arbitrary unknown state; the analysis of approximate cloning in \cite{Werner_1998, Keyl_1999} relies on the representation theory of the \textit{full} unitary group and in particular on Schur-Weyl duality. Thus, it does not carry over to structured classes of quantum states.

Quantum learning theory has recently focused on learning structured classes of states such as stabilizer states~\cite{montanaro2017learningstabilizerstatesbell} or states prepared using few non-Clifford gates~\cite{Leone2024learningtdoped, Grewal_2025}, matrix-product states~\cite{landoncardinal2010efficientdirecttomographymatrix, cramer2010efficient, landau2013polynomialtimealgorithmgroundstate, Arad_2017, Bakshi_2025}, states with bounded gate complexity~\cite{zhao2024}, output states of shallow circuits~\cite{Huang_2024, vasconcelos2025learning, Landau_2025}, Gibbs states~\cite{anshu2021sample, Rouze2024learningquantummany, haah2024learning, bakshi2024learning, chen2025learningquantumgibbsstates, bluhm2025certifyinglearningquantumising, arunachalam2025testing}, and more. This motivates us to connect the modern perspective of quantum learning theory with the foundational question of cloning by proposing a structured version of approximate cloning:

\begin{definition}[Cloning of structured states -- Informal]
    A class $\mc S$ of quantum states admits a $(t, t+m, \epsilon)$-quantum cloning scheme if there exists a linear CPTP map $\Lambda_{\mc{S}, t, m, \epsilon}:\mc{B}((\mathbb{C}^{2^n})^{\otimes t}) \rightarrow \mc{B}((\mathbb{C}^{2^n})^{\otimes t+m})$ s.t.
    \begin{equation}
        \sup_{\rho \in \mc{S}} \td  \lr{\Lambda_{\mc{S}, t, m, \epsilon}(\rho^{\otimes t}), \rho^{\otimes t+m}} \leq \epsilon \, .
    \end{equation}
    We call $\epsilon$ the error incurred by the cloning scheme.
\end{definition}

We will study structured cloning of stabilizer states. 
While adapting the reasoning of \cite{Werner_1998} to stabilizer state cloning may be achievable via the recently developed representation theory of the Clifford group and the Clifford commutant~\cite{Gross_2021, bittel2025completetheorycliffordcommutant}, we take an alternative path that relies on more elementary representation theory and an argument based on linear independence.

\subsection{Overview of the Main Results} 

With the framework established, we can now study the two main questions highlithed above: First, is structured sample amplification easier than learning? And second, is structured quantum cloning easier than learning? In this work, we show that the two questions are related for specific structured classes of functions and states, and for those classes we give negative answers to both questions. Our results on sample amplification and cloning sample complexity lower bounds for different structured classes are highlighted and compared with learning sample complexities in Table~\ref{table:overview}.

\begin{table}[]
\centering
\begin{tabular}{|c|c|c|}
\hline
\textbf{Problem Instance} & \textbf{Learning} & \textbf{Sample Amplification/Cloning}  \\
\textbf{(Classical/Quantum)} & \textbf{(with small constant error)} & \textbf{(with small constant error)} \\
\hline
$n$-bit parities   & $\Theta(n)$ & $\Omega(n)$ (Corollary~\ref{corollary:parities-sa-lb}) \\
\hline
$k_{\mc F}$-dimensional & $\Theta(k_{\mc F})$ & $\Omega(k_{\mc F})^{*}$ (Theorem~\ref{thm:coding-theory-general})  \\
Boolean function subspace &  &\smaller{( $^{*}$depending on properties of the code $C_{\mc F}$)} \\
\hline
$n$-qubit stabilizer states  & $\Theta(n)$~\cite{montanaro2017learningstabilizerstatesbell}  & $\Omega(n)$ (Theorem~\ref{thm:stab-states-clone}) \\
\hline
\end{tabular}
\caption{\textbf{Overview of our main results:} A comparison between known sample complexity bounds for learning (with small constant error) and our sample complexity lower bounds for amplification/cloning.}
\label{table:overview}
\end{table}

\paragraph{Structured sample amplification.} On the classical side, we develop a general lower bound on the optimal achievable error in structured sample amplification for any structured distribution class of the form $\mc{D} = \{(\mc U_n, f)\}_{f\in \mc F}$, with $\mc F$ a known Boolean function class.
We then instantiate this bound for parity functions, and more generally for function classes that form a linear space.

\begin{theorem}[Structured Sample Amplification Lower Bounds -- Informal]\label{thm:classical-results-informal}
    \hspace{1cm}
    
    \begin{enumerate}
        \item \textbf{General lower bound:} Let $\mc F$ be a class of Boolean function on $n$ bits. Let $b_{\mc F}$ be the teaching dimension \cite{GOLDMAN199520} of $\mc F$ (see Section~\ref{section:ssa} for a definition). 
        Sample amplification for $\mc F$ from $<b_{\mc F}$ samples incurs at least an error directly proportional to the probability of exactly learning $\mc F$ from $b_{\mc F}$ random samples. 
        \item \textbf{Lower bound for parities:} Sample amplification for the class $\mathcal{F}_{\mathrm{par}}=\{x\mapsto s\cdot x\mod 2\}_{s\in\mathbb{Z}_2^n}$ of $n$-bit parity functions from $<n$ samples incurs at least a constant error of $0.14$.
        \item \textbf{Lower bounds from coding theory:} Let $\mathcal{F}$ be some $k_{\mc F}$-dimensional linear subspace of all Boolean functions on $n$ bits.
        Sample amplification for $\mathcal{F}$ from $<k_{\mc F}$ samples incurs at least an error directly proportional to the probability that a certain corresponding linear code (see Section~\ref{section:ssa} for a definition) corrects $k_{\mc F}$ random erasure errors.
    \end{enumerate}
\end{theorem}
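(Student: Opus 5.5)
The plan is to prove the general lower bound (item 1) by a distinguisher argument, and then obtain items 2 and 3 by instantiating it.

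\textbf{General lower bound.} Fix $t<b_{\mc F}$ and any amplifier $T$ from $t$ to $t+m$ samples, with $m\ge 1$. I will use the following distinguisher $\mc A$, which knows $f$: on input $((x_1,y_1),\dots,(x_{t+m},y_{t+m}))$ it outputs $1$ iff every label is consistent, $y_i=f(x_i)$, and the sample set is a teaching set for $f$, i.e.\ $f$ is the unique function in $\mc F$ consistent with the labelled points.

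On true i.i.d.\ data, $\Pr[\mc A=1]$ is the probability that the $t+m$ random samples identify $f$ exactly. For $t+m\ge b_{\mc F}$ this is at least the probability $p_f$ of exactly learning $f$ from $b_{\mc F}$ random samples, by monotonicity in the sample size. Without loss of generality we may take $t+m\ge b_{\mc F}$; otherwise we pad by discarding samples, which only weakens the amplifier.

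On amplified data, the output of $T$ is a randomised function of the $t$ input samples. Since $t<b_{\mc F}$, these input samples never form a teaching set for any $f$, so there is always some $f'\neq f$ consistent with them. I will use an averaging argument: pick $f$ uniformly from $\mc F$, or from a suitable subset. Conditioned on the input samples, $f$ is then uniform over the version space $V$, which has size at least $2$. The amplifier's output can be fully consistent with and teaching for at most one element of $V$. Hence the average probability that $\mc A$ accepts is at most $\mathbb{E}[1/|V|]\le 1/2$.

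Combining the two cases, the advantage is at least $\mathbb{E}_f[p_f]-\mathbb{E}[1/|V|]$. The $\sup_D$ in the definition dominates this average, which yields an error lower bound proportional to the exact-learning probability. The precise constant will depend on how $|V|$ is controlled.

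\textbf{Main obstacle.} The hard step is making the bound genuinely proportional rather than a difference that may be negative. For item 1, I would refine the bound by noting that $|V|\ge 2$ always. For linear classes, $|V|$ is a power of $2$, so the acceptance probability under $T$ is at most $\tfrac12\,\Pr[\text{the output is a teaching set}]$. This can be bounded by $\tfrac12$ times the true-data teaching probability plus the TV error, and rearranging gives $\epsilon\gtrsim p/4$-type bounds.

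\textbf{Item 2 (parities).} Here $b_{\mc F}=n$. The probability that $n$ uniform vectors in $\mathbb{Z}_2^n$ span the space is $\prod_{i\ge1}(1-2^{-i})\approx 0.289$. Plugging this into the refined bound gives roughly $0.289/2\approx0.14$.

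\textbf{Item 3 (linear subspaces).} For a $k_{\mc F}$-dimensional subspace, identify $f$ with a coefficient vector. Samples $(x,f(x))$ are then codeword coordinates of the code $C_{\mc F}$ whose generator columns are the evaluations of basis functions at the points $x$. Exact identification from $k_{\mc F}$ random coordinates is equivalent to recovering the message from those positions, i.e.\ correcting the corresponding random erasures. The same linear-version-space argument then gives the stated bound.
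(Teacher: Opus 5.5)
\textbf{Gap 1: the constant.} There is a genuine gap at the step you called the main obstacle. Your teaching-set distinguisher, as you set it up, does not give the factor $\tfrac12$, so the value $0.14$ for parities is asserted rather than derived. Your argument has only two handles: the events ``accept'' and ``the output is a teaching set''.

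Averaging over $f$, you get $\Pr_T[\mathrm{acc}]\le\tfrac12\Pr_T[\mathrm{teach}]\le\tfrac12(\bar p+\epsilon)$. Comparing with $\Pr_{\mathrm{true}}[\mathrm{acc}]=\bar p$ yields only $\epsilon\ge\bar p/3\approx 0.096$ for parities, and your ``$p/4$-type'' bound is weaker still.

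The missing idea is the paper's distinguisher. It outputs a uniformly random hypothesis consistent with the data and accepts iff that hypothesis equals $f$. Its acceptance probability $1/n(w)$ also rewards partial shrinkage of the version space, not just full identification.

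With your own averaging ($f$ uniform, posterior uniform on $V(z_t)$), the amplified acceptance is at most $\mathbb{E}\bigl[|V(w)\cap V(z_t)|/(|V(w)|\cdot|V(z_t)|)\bigr]\le\mathbb{E}[1/n(z_t)]$. So the amplifier does no better than the $t$ genuine inputs would.

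Now take $t=b-1$, and let $q(z_{b-1})$ be the conditional probability that the $b$-th genuine sample completes a teaching set. Using $n(z_b)\le n(z_{b-1})$ and $n(z_{b-1})\ge 2$,
\[
\mathbb{E}[1/n(z_b)]-\mathbb{E}[1/n(z_{b-1})]\;\ge\;\mathbb{E}\bigl[q(z_{b-1})\bigl(1-1/n(z_{b-1})\bigr)\bigr]\;\ge\;\tfrac12\,\mathbb{E}[q]\;=\;\tfrac{p}{2}.
\]
The paper reaches $p/2$ via a pointwise claim $q\le p$. That claim is false: for a full-rank parity sample $q=1/2>p$. The averaged inequality above is what actually carries the bound.

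\textbf{Gap 2: general classes.} Your claim that $t<b_{\mathcal F}$ inputs are never a teaching set for any $f$ is false. Since $b_{\mathcal F}=\max_f b_{\mathcal F}(f)$, functions with smaller teaching dimension can be identified from $t$ samples, so $|V|=1$ occurs under the uniform prior. Restricting the prior to the hardest functions $S$ does not help: the posterior is then uniform on $V\cap S$, which can be a singleton.

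Your argument is valid when all $b_{\mathcal F}(f)$ coincide. This holds for every linear class, where version spaces are cosets of $\{h\in\mathcal F: h(x_i)=0\ \forall i\}$. That is all items 2 and 3 need.

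Your averaging over $f$ is in fact the right way to make ``amplification cannot create information'' rigorous. Proposition~\ref{prop:sa-con} asserts it for each fixed $f$, which fails for amplifiers biased toward one hypothesis.

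For heterogeneous classes the constant $p_{\mathcal F}/2$ cannot be rescued. Take $\mathcal F=\{0,\delta_0,\delta_1\}$ on one bit; then $b_{\mathcal F}=2$ and $p_{\mathcal F}=1/2$. Consider the amplifier that:
\begin{itemize}
\item on input $(x,1)$, appends a fresh sample from $D_{\delta_x}$;
\item on input $(x,0)$, appends $(x,0)$, $(1-x,0)$ or $(1-x,1)$ with probabilities $\tfrac12,\tfrac13,\tfrac16$;
\item outputs the pair in random order.
\end{itemize}
Its error is at most $1/6<p_{\mathcal F}/2$ for all three functions.

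\textbf{Minor points.} To reduce to $t=b_{\mathcal F}-1$, you pass extra genuine samples through the amplifier unchanged rather than discarding samples. In item 3, the code in the statement is $C_{\mathcal F}^{\perp}$, whose parity-check matrix is $G(C_{\mathcal F})$. Identifiability from positions $S$ (full column rank of $G[S]$) is therefore, by Lemma~\ref{lemma:prop_2_codes}, correctability of the erasure pattern $S$ in $C_{\mathcal F}^{\perp}$. This is equivalent to your ``decoding $C_{\mathcal F}$ from $S$''.
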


Whereas sample amplification is strictly easier than learning for the class of all Boolean functions as a consequence of results from \cite{axelrod2019sampleamplificationincreasingdataset, axelrod2024statisticalcomplexitysampleamplification} (see \Cref{sec:sa-all-functions}),
Theorem~\ref{thm:classical-results-informal}, proved in Section~\ref{section:ssa}, shows that structured sample amplification to small constant error is as hard as learning for particular structured classes of functions. 

\paragraph{Structured quantum cloning.}
On the quantum side, we make use of our lower bounds for structured sample amplification to derive optimal cloning lower bounds for different families of states. We first prove a general lower bound on the optimal achievable error for cloning a set of (mixed) states that ``hide'' unknown symmetry subgroups of a large (known) Abelian group.
Here, we say that a state $\rho$ ``hides'' a subgroup $H\leq G$ under some fixed unitary representation $\mu$ if the elements of $H$ leave the state invariant, i.e., $\tr\lr{\rho \mu(h)} = 1, \forall h \in H$, and the elements not in $H$ change the state significantly, i.e., $|\tr\lr{\rho \mu(g)}|$ is bounded away from $1$ for all $g \in G\setminus H$.
We then instantiate this bound for the class of stabilizer states.

\newpage
\begin{theorem}[Structured Quantum Cloning Lower Bounds -- Informal]\label{thm:quantum-results-informal}
\hspace{1cm}
    \begin{enumerate}
        \item \textbf{General lower bound}: 
        Let $G$ be a known Abelian group.
        Let $\mc{S}_{\alpha}$ be a class of (mixed) $n$-qubit states that ``hide'' unknown symmetry subgroups $H\leq G$ of order $\alpha$. Let $n_{H^{\perp}}$ be the number of generators of $H^{\perp}$, the dual of $H$ (see Section~\ref{section:preliminaries} for definitions). Cloning for $S_{\alpha}$ from $< n_{H^{\perp}}$ samples incurs at least an error directly proportional to the probability of exactly learning $H$ from $n_{H^{\perp}}$ copies of the state.
        \item \textbf{Lower bound for stabilizer states}: Cloning for the class of pure $n$-qubit stabilizer states from $< \floor{n/4}$ copies incurs at least a constant error of $0.14$.
    \end{enumerate}
\end{theorem}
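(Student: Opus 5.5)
The plan is to reduce cloning to a classical sample-amplification problem with linear structure, and then reuse the argument behind Theorem~\ref{thm:classical-results-informal}.

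\textbf{Step 1 (measure in the Fourier basis).} Take a state $\rho$ that hides $H\le G$ under the Abelian representation $\mu$. Write $\mu$ in its block-diagonal decomposition into characters of $G$. Measuring $\rho$ in this decomposition (the Abelian State Hidden Subgroup measurement) returns a character $\chi\in H^{\perp}$, with distribution $p_\rho(\chi)=\tr(\rho\,\Pi_\chi)$. Here $\Pi_\chi$ is the isotypic projector; it is supported on $H^\perp$ because $\tr(\rho\mu(h))=1$ for all $h\in H$.

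\textbf{Step 2 (symmetrize so the measurement loses nothing).} We want the outcomes to be essentially all the information. To arrange this, average the unknown state over a structured random purification/twirl that commutes with $\mu$ and fixes the class. For stabilizer states $\ket{S}$, $G$ is the Pauli group mod phases, $H=S$ is the stabilizer group, and $H^\perp\cong S$ via the symplectic form. A random Pauli twirl of $\ket{S}^{\otimes t}$ gives a state that is diagonal in the joint Bell-type basis. Measuring it yields i.i.d.\ uniformly random elements of $H^{\perp}$, and this is essentially the Bell-difference sampling picture.

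After this step:
\begin{itemize}
\item $t$ copies of $\rho$ are equivalent, up to a fixed channel in both directions, to $t$ i.i.d.\ uniform samples from $H^\perp$;
\item a $(t,t+m,\epsilon)$ cloner followed by the measurement gives a $(t,t+m,\epsilon)$ amplifier for the class $\{\mathrm{Unif}(H^\perp)\}$, because trace distance does not increase under channels.
\end{itemize}
The main obstacle is making this equivalence hold in the direction we need. Uniform samples from $H^\perp$ must faithfully determine the post-twirl state, so that a cloner yields an amplifier of no greater error. I would try to prove this by showing that the twirled $\rho^{\otimes t}$ is block diagonal with blocks indexed by the measured tuple, and that each block depends on $H$ only through its label.

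\textbf{Step 3 (classical lower bound).} For the class $\{\mathrm{Unif}(K)\}_{K}$, ranging over subgroups $K$ with $n_{K}$ generators, apply the linear-independence argument used for parities. From $t<n_{H^\perp}$ samples, the span of the inputs is a proper subgroup. Any amplifier's outputs then lie in that span together with fresh randomness that is independent of $H$. Meanwhile, $t+m\ge n_{H^\perp}$ true samples span all of $H^\perp$ with the probability $p$ of exactly learning $H$.

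A distinguisher that knows $H$ tests whether the samples span $H^\perp$. Averaging over a random $H$ from the class, the amplifier's output spans $H^\perp$ with probability at most about $|\text{outputs}|/|\text{choices}|$, which is small. This gives error at least a constant times $p$, establishing part~1.

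\textbf{Step 4 (stabilizer instance).} For $n$-qubit stabilizer states, $H^\perp$ has $n$ generators, but the Bell-type sampling in Step~2 costs copies: two copies per sample, with further loss from the structure of $H^\perp$. This loss is where I expect the threshold $\floor{n/4}$ to come from. The probability that the obtained samples span $H^\perp$ is a product $\prod_i(1-2^{-i})$, giving $p\ge 0.28$ and an error of at least $0.14$ as in the parity bound of Theorem~\ref{thm:classical-results-informal}, establishing part~2.
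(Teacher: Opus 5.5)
There is a genuine gap in Steps~2 and~4, located at the obstacle you flag yourself. A cloning lower bound needs the direction \emph{samples $\to$ copies}: an amplifier holding $t$ classical samples must be able to manufacture the cloner's input.

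Your two-sided equivalence between $t$ copies and $t$ uniform samples from $H^\perp$ holds only under two conditions. First, $\mu$ must have non-isomorphic, one-dimensional irreps. Second, the state must be diagonal in that non-degenerate character basis, i.e.\ of the form $\sigma_H=|H^\perp|^{-1}\sum_{\lambda\in H^\perp}|\lambda\rangle\langle\lambda|$. In that case $\sigma_H^{\otimes t}$ is literally the law of $|\lambda_1\rangle\cdots|\lambda_t\rangle$ with i.i.d.\ uniform $\lambda_i$.

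The equivalence fails for pure stabilizer states. Twirling $|S\rangle^{\otimes t}$ is an irreversible operation that you apply to the copies; it cannot be undone from the measured labels. Moreover, $|S\rangle^{\otimes t}$, and even its twirl under a common $\mu(g)^{\otimes t}$, is not block diagonal with one-dimensional blocks. The content inside the degenerate isotypic blocks still carries information about $S$ and its signs. So your proposed fix (``each block depends on $H$ only through its label'') is exactly what is false for pure states, and the cloner$\Rightarrow$amplifier reduction does not go through.

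The missing idea is the paper's mixed-to-pure lift, done in two steps.
\begin{enumerate}
\item The paper first proves the bound for the $2n$-qubit Bell-diagonal mixed instances $\sigma_L=2^{-n}\sum_{y\in L^\perp}|\Phi_y\rangle\langle\Phi_y|$ under $x\mapsto V_x^{\otimes 2}$. The irreps are non-isomorphic, so Bell-basis measurement is optimal, and cloning from $n-1$ to $n$ copies has error at least $0.14$.
\item It then applies the structured random purification channel $C^{t}$, which maps $\sigma_L^{\otimes t}$ to $\Ex_g |\sigma_{L,g}\rangle\langle\sigma_{L,g}|^{\otimes t}$. Every $|\sigma_{L,g}\rangle$ is a $(4n)$-qubit stabilizer state (Lemma~\ref{lemma:stabilizer-purifications}).
\end{enumerate}
What is simulable from samples is this $g$-average, not any individual pure state. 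A pure-state cloner $\Lambda$ then yields the mixed-state cloner $\tr_{PR}\circ\Lambda\circ C^{t-1}$, with no larger error by linearity, the triangle inequality and data processing.

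This is also the real source of $\lfloor n/4\rfloor$. The factor $4$ is a blow-up in the \emph{number of qubits} ($2n$-qubit mixed instances, $4n$-qubit purifications, same number of copies), followed by padding with $|0\rangle$ and monotonicity in $t$. Your explanation via ``two copies per Bell sample plus further loss'' points the wrong way. The copy cost of turning copies into samples only constrains Bell-sampling-based strategies (for those the paper notes a threshold of $2n$ copies). It says nothing about general cloners, and it would push the threshold above $n$ rather than down to $n/4$.

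For part~1, your reduction is fine once you restrict to the instances $\sigma_H\in\mc S_\alpha$ and assume non-isomorphic irreps, a hypothesis you never invoke but need. There it is even slightly more direct than the paper's appeal to optimality of character POVMs. However, your Step~3 counting heuristic (``outputs lie in the span plus fresh randomness'', ``probability about $|\text{outputs}|/|\text{choices}|$'') should be replaced by a clean argument such as Proposition~\ref{prop:sa-con}.
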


Approximate cloning is already known to be as hard as learning for the unstructured class of all pure states \cite{Werner_1998, Keyl_1999}.
Theorem~\ref{thm:quantum-results-informal}, which is proved in Section~\ref{section:sc}, shows that the same is true for stabilizer states. Thus, imposing such structure does not separate cloning from learning in terms of sample complexity.
In proving Theorem~\ref{thm:quantum-results-informal}, we also show the optimality of character POVMs for particular classes of Abelian StateHSPs~\cite{bouland2024statehiddensubgroupproblem, hinsche2025abelianstatehiddensubgroup} (with non-isomorphic irreps).
This allows us to characterize the sample complexity of solving such problems up to an additive rather than a multiplicative constant, which is needed in our argument.
For our proof, we also develop a structured random purification channel, a version of the random purification channel introduced in~\cite{tang2025conjugatequerieshelp} that is tailored to the structured classes of states relevant in Abelian StateHSPs. 
Together, these results allow us to derive sample complexity lower bounds for different Abelian StateHSPs, and thereby cloning lower bounds. 

\subsection{Techniques}

\paragraph{Structured sample amplification lower bound.} 
Sample amplification schemes for discrete distributions~\cite{axelrod2019sampleamplificationincreasingdataset, axelrod2024statisticalcomplexitysampleamplification} are based on repeating some observed samples since, by birthday paradox, for a discrete distribution with support size $k$ and for a sample size of $\Omega(\sqrt{k})$, repetitions are not overly suspicious.
We show that for specific structured distributions, this approach fails. 
To understand why, consider a simple example: Take the class of distributions over $\Z_2^{2n}$ that are uniform on some $n$-dimensional subspace. 
Learning such a distribution comes down to seeing $n$ linearly independent samples, which is reasonably likely from $n$ samples but impossible from $n-1$ samples.
In this sense, $n$ uniformly random samples hold significantly more information about the unknown distribution than $n-1$ samples do.
As information cannot be created ``for free,'' amplifying from $n-1$ to $n$ samples with small error becomes impossible.
This reasoning can be formalized via the ``triangle inequality''~\cite{axelrod2019sampleamplificationincreasingdataset, axelrod2024statisticalcomplexitysampleamplification}
\begin{equation}\label{eq:sa-learning-triangle-eq}
    \epsilon_{SA}^{\star}(\mc D, t-1, t) \geq \epsilon_L^{\star}(\mc D, t-1) - \epsilon_L^{\star}(\mc D, t)\, ,
\end{equation}
where $\epsilon_{SA}^{\star}(\mc D, t-1, t)$ is the sample amplification error from $t-1$ samples to $t$ samples for the distribution class $\mc D$, and $\epsilon_L^{\star}(\mc D, t-1)$ and $\epsilon_L^{\star}(\mc D, t)$ are the learning errors from $t-1$ and $t$ samples, respectively, for the distribution class $\mc D$. 
This triangle inequality arises from the following simple algorithm for learning from $t-1$ samples: First, optimally sample amplify to $t$ samples, then run an optimal learner from $t$ samples. Clearly, this algorithm achieves a learning error of at most $\epsilon_{SA}^{\star}(\mc D, t-1, t)  + \epsilon_L^{\star}(\mc D, t)$.

An obstacle to using \Cref{eq:sa-learning-triangle-eq} in proving sample amplification lower bounds is that sample complexity upper and lower bounds in learning theory are often established only up to \emph{multiplicative} constants.   
In contrast, to obtain meaningful lower bounds from \Cref{eq:sa-learning-triangle-eq}, we require a detailed understanding of the change in learning error when the sample size is increased by an \emph{additive} constant. 
Using a folklore argument based on the probability of randomly drawn bit strings being linearly independent, we can obtain such understanding for parity learning. Namely, the sample complexity of (exactly) learning $n$-bit parities with respect to uniformly random inputs (with some high, constant probability $2/3$) is $>n-1$ and $\leq n+\Theta(1)$.
We combine this with \Cref{eq:sa-learning-triangle-eq} to prove a constant lower bound on the optimal error $\epsilon_{SA}^{\star}(\mc D_{\mathrm{par}}, n-1, n)$ in amplifying from $n-1$ to $n$ samples for parity distributions $\mc D_{\mathrm{par}} = \{(\mc{U}_n, f)\}_{f\in\mc F_{\mathrm{par}}}$. 
This shows that non-trivial sample amplification of parity functions requires $\geq n$ samples, which matches the sample complexity of parity learning up to an additive constant. We mention that a similar linear independence idea was considered in~\cite{axelrod2024statisticalcomplexitysampleamplification} while presenting a very different example of a distribution which is as hard to sample amplify as to learn.

To repeat, the simple but crucial observation in the lower bound for parity amplification is as follows: Seeing $n-1$ samples do not suffice to uniquely specify the unknown parity function, and guessing the wrong one incurs non-neglible error. In contrast, when seeing $n+\Theta(1)$ samples, they contain a linearly independent subset of size $n$ with high probability, thus uniquely specifying the unknown parity function and allowing us to learn it with no error.  
Accordingly, only a small (constant) increase in sample size leads to a sizeable improvement in learning. And as such an improvement cannot come``for free,'' achieving such an increase in sample size is not possible without incurring a large error.
We demonstrate that this linear independence-based reasoning extends beyond parities to general linear spaces of Boolean functions, i.e., to function classes $\mathcal{F}$ that form a linear subspace of the space of all Boolean functions on $n$ bits. 
More precisely, starting from $\mathcal{F}$ we construct a linear code $C_\mathcal{F}$, and we show that the properties of parities that we relied on above---any two distinct parities are far from each other; and we likely obtain a maximal linearly independent set when seeing a sample of size slightly larger than $n$, the dimension of the space of parity functions---become properties of the code $C_\mathcal{F}$ and its dual---relating to the distance of $C_\mathcal{F}$ and to the ability of its dual code to correct random erasures. 


\paragraph{Bell sample amplification lower bound.} 
The superficial similarity between quantum cloning and sample amplification becomes immediate if we consider a restricted version of cloning in which the goal is to generate additional outcomes obtained when performing a fixed measurement on an unknown state.
In particular, given the power of Bell sampling in extracting information from stabilizer states~\cite{montanaro2017learningstabilizerstatesbell, Gross_2021, Hangleiter_2024, Grewal_2024}, one may ask: Given the $t$ outcomes obtained from performing Bell sampling on $2t$ copies of an unknown $n$-qubit stabilizer state, can we generate an (approximate) additional Bell sampling outcome?
Relying on the by now well-developed understanding of Bell sampling, this question is easily seen to be exactly one of sample amplifying a probability distribution uniform over an unknown $n$-dimensional subspace of $\Z^{2n}_2$.
By the linear independence argument explained above, this sample amplification task requires at least $n$ samples. That is, sample amplification of the outcome distribution of Bell sampling performed on copies of an unknown stabilizer state requires at least $n$ samples.

\paragraph{Optimality of Bell sampling for mixed-phaseless-stabilizer StateHSP instance cloning.} 
The result of the previous paragraph in particular implies that one cannot clone stabilizer states from $<2n$ copies when using only Bell sampling followed by classical post-processing and state preparation.
However, a general stabilizer state cloner may process the initial copies differently, potentially using a global measurement.
Thus, to deduce general lower bounds for stabilizer state cloning from the above, 
we show optimality of Bell sampling measurements for cloning instances of mixed-phaseless-stabilizer StateHSP (See Section~\ref{section:preliminaries} for a definition), that is, for mixed $(2n)$-qubit states that have a well-defined structure and phaseless stabilizer group. Then, we lift the resulting mixed-phaseless-StateHSP instance cloning lower bound to pure stabilizer states via a novel structured version of the recently introduced random purification channel \cite{tang2025conjugatequerieshelp} that we tailor to mixed-phaseless-stabilizer StateHSP instances.

We take the special instances that are diagonal in the Bell basis. 
For such states, Bell sampling measurement can be used as a generic pre-processing that does not affect the state, and after observing a Bell sampling outcome, no further information can be extracted from the post-measurement state.  
This makes Bell sampling followed by classical post-processing optimal for such instances, which then allows us to reduce to a linear independence-based sample amplification lower bounds as above.
This idea can be applied to more general mixed-state version of Abelian StateHSP. Namely, we show the optimality of character POVMs in solving mixed Abelian StateHSP when all irreducible representations are non-isomorphic. This allows us to argue that true copies hold strictly more information about the hidden subgroup than cloned copies, leading us to our cloning lower bound for the mixed state case.

\paragraph{From mixed to pure via structured random purification.} 
As mentioned in the previous paragraph, to obtain cloning lower bounds for pure stabilizer states from those for mixed-phaseless-stabilizer StateHSP, we develop a structured version of the random quantum purification channel~\cite{tang2025conjugatequerieshelp}. Namely, we construct a channel that maps i.i.d.~copies of our mixed-phaseless-stabilizer StateHSP instances to i.i.d.~copies of random purifications that are themselves pure stabilizer states\footnote{We also show that this random purification channel can be viewed as a concrete special case of \cite{walter2025randompurificationchannelarbitrary}'s random purification for general symmetries.}. Thus, we can clone mixed-phaseless-stabilizer instances by first applying the structured random purification channel, then applying a pure stabilizer state cloner, and finally tracing out the auxiliary registers. 
Thereby, the lower bound from the mixed-phaseless-stabilizer StateHSP implies a cloning lower bound also for pure stabilizer states.

\subsection{Directions for Future Work}

We have studied quantum cloning for particular classes of structured quantum states related to Abelian StateHSP problems, with stabilizer states as a concrete example.
For this class, we have shown that the sample complexity of cloning essentially coincides with that of learning. 
We have achieved this using representation theory and a structured version of random purification to relate quantum cloning to classical sample amplification.
Here, we have established that sample amplification is as hard as learning for parity functions, based on a simple linear independence argument. Finally, we have highlighted some connections between sample amplification and coding theory.
Thus, our work raises several natural follow-up questions.

\paragraph{Sample amplification of Reed-Muller codes.}
As we have given a general recipe for obtaining sample amplification lower bounds for linear spaces of Boolean functions from coding theory, a clear challenge is to instantiate this recipe beyond our example of parities.
A natural next example to explore here is the space of low-degree polynomials over $\Z_2$ and the corresponding Reed-Muller codes. 
Using our arguments, the structured sample amplification error for degree-$d$ polynomials on $n$ bits can be lower bounded by the probability that a Reed-Muller (with suitable parameters) can correct $n \choose \leq d$ random erasure errors. While the question of robustness to random erasures has been studied extensively for Reed-Muller codes and they are known to achieve capacity for erasure noise channels in a variety of regimes~\cite{abbe2014reedmullercodesrandomerasures, bhandari2022}, characterising the above probability, to the best of our knowledge, remains open. 

\paragraph{Combinatorial dimension for sample amplification.}
It has been a fruitful endeavour in computational learning theory to characterize learnability of concept classes in different models via combinatorial parameters such as the VC dimension~\cite{vapnik1971}, teaching dimension~\cite{GOLDMAN199520}, or Eluder dimension~\cite{li2022understandingeluderdimension}, among many others.
\Cref{thm:classical-results-informal} already shows that the teaching dimension plays a key role in lower bounds for sample amplification.
However, our results fall short of a full combinatorial characterization of the complexity of sample amplification.
Achieving such a characterization may require a new combinatorial dimension and would constitute a significant advance over the current case-by-case understanding of sample amplification.

\paragraph{Cloning hypergraph states.}
We have shown that stabilizer states are as hard to (approximately) clone as to learn, an approximate No-Cloning theorem for stabilizer states. It is now natural to consider the same question for other structured classes of states. An interesting example to explore here is the class of hypergraph states of order $d$~\cite{Rossi_2013}, which also admit a description in terms of stabilizer operators that include Toffoli gates in addition to Paulis. 
These hypergraph states can alternatively be thought of as phase states
\begin{equation}
    |\psi_f\rangle = \frac{1}{\sqrt{2^n}} \sum_{x} (-1)^{f(x)}|x\rangle\, ,
\end{equation}
with $f$ a degree-$d$ polynomial over $\Z_2$~\cite{arunachalam2023optimalalgorithmslearningquantum}. 
It is an intriguing question whether cloning hypergraph states can be related to sample amplification of low-degree polynomials, and whether such a connection can give rise to lower bounds.
Beyond hypergraph states, the study of cloning for classes such as matrix product states, output states of $\mathsf{QNC}^0$ and $\mathsf{QAC}^0$ circuits may be of interest.

\paragraph{Cloning against restricted distinguishers.}
In the distinguisher-based perspective on cloning, we have not restricted the computational power of the distinguisher. 
As \cite{axelrod2019sampleamplificationincreasingdataset} has highlighted for sample amplification, we may obtain interesting variants of the amplification and cloning tasks by considering restricted distinguishers. 
Concretely, it is natural to consider the task of cloning structured classes of states against polynomial-time adversaries. 
This is naturally motivated by quantum cryptography, with potential applications in quantum money~\cite{wiesner1983, jls_pseudo_2018, molina2012}, and related questions have been considered in~\cite{fefferman2025hardnesslearningquantumcircuits, nehoran2024, bostanci2025duality}. 
While our lower bounds for stabilizer state cloning immediately carry over to cloning against polynomial-time adversaries, since our proofs of these bounds use computationally bounded distinguishers, it remains to explore the impact of imposing computational restrictions on the adversary for cloning other classes of quantum states.
Specifically, it would be interesting to see whether cloning and learning can be separated for relevant classes of states in this computational model of cloning.

\paragraph{Average-case quantum cloning.}
Finally, our formulation of quantum cloning as well as our proven lower bounds are in a worst-case picture. 
In particular, our result on the optimality of Bell sampling for phaseless stabilizer group learning holds for a worst-case notion of learning.
In contrast, Werner's optimal cloning map for general pure states is optimal both in the worst case and in the average case~\cite{Werner_1998}. 
We conjecture that our worst-case lower bounds for phaseless stabilizer StateHSP and stabilizer state cloning similarly extend to the average case.

\section{Notation and Preliminaries}~\label{section:preliminaries}

In this section, we introduce the notation and underlying notions used throughout the paper. A familiarity with standard notions as well as Dirac bra-ket notation in quantum information theory is assumed, for this the reader is referred to excellent textbooks such as~\cite{nielsen2010quantum}.

\subsection{Representation Theory}

In this subsection, we recall basic notions from representation theory. We refer the reader to the excellent references~\cite{mele2025rep, fulton2013representation, serre1977linear} for a detailed introduction.
Throughout, we will use $V$ to denote a finite-dimensional vector space. We use $\GL(V)$ to denote the group of all bijective linear maps from $V$ to itself.

\begin{definition}[Representation of finite groups]
    Let $G$ be a finite group. Then a tuple $(\mu, V)$ with finite-dimensional vector space $V$ and map $\mu: G \rightarrow \GL(V)$ is called a \emph{representation} of the group $G$ over the vector space $V$ if $\mu$ is a homomorphism. That is, 
    \begin{equation}
        \mu(gh) = \mu(g)\mu(h)   \quad\forall g,h\in G.
      \end{equation}
\end{definition}

At times, we will be abusing the terminology: We will not explicitly mention the vector space $V$ and simply call $\mu$ a representation, wherever the vector space is clear from the context.

\begin{definition}[Sub-representation of finite Groups]
    Let $G$ be a finite group and let $(\mu, V)$ be a representation of the group $G$. Furthermore, let $W\subseteq V$ be a subspace such that $W$ is $G$-invariant, i.e.,
    \begin{equation}
       \mu(g) |w\rangle \in W \hspace{3ex} \forall |w\rangle \in W, \forall g \in G\, . 
    \end{equation}
    Then $(\mu|_W, W)$ is called a \emph{sub-representation} of the representation $(\mu, V)$.
\end{definition}

\begin{definition}[Irreducible Representation]
    Let $G$ be a group with a representation $(\mu, V)$. Then a sub-representation $(\mu|_W, W)$ is called \emph{irreducible (or, an irrep)} if there are no non-trivial sub-representations of $(\mu|_W, W)$.
\end{definition}

Here, we say that a sub-representation is non-trivial if it is onto a strict but non-trivial (i.e., not equal to $\{0\}$) subspace.
It is trivial to see that any sub-representation onto a one-dimensional vector space is irreducible.

\begin{definition}[Unitary Representation]
    Let $G$ be a finite group and let $(V, \langle \cdot | \cdot \rangle)$ be a complex inner product space. Then, a representation $(\mu, (V, \langle \cdot | \cdot \rangle))$ is called a \emph{unitary representation} with respect to the inner product if,
    \begin{equation}
        \langle \mu(g) v | \mu(g) w\rangle = \langle v | w\rangle , \hspace{3ex}\forall v, w \in V \text{ and } \forall g \in G.
    \end{equation}
\end{definition}
In other words, a unitary representation is a representation $\mu$ such that $\mu(g)$ is unitary w.r.t.~the inner product space $(V, \langle \cdot | \cdot \rangle)$ for all $g\in G$. For finite groups, any representation can be treated as a unitary representation with respect to a suitably defined inner product.
Thus, it is natural to restrict our focus to unitary representations for finite groups. 

The following is the natural notion of equivalence between representations: 

\begin{definition}[Isomorphic Representations] Two representations $(\mu_1, V_1)$ and $(\mu_2, V_2)$ are called \emph{isomorphic} if there exists an invertible linear map $\phi: V_1\rightarrow V_2$ such that
\begin{equation}
    \mu_2(g) = \phi\mu_1(g)\phi^{-1}, \hspace{3ex} \forall g \in G\, .
\end{equation}
If $\mu_1$ and $\mu_2$ are isormorphic, we write, $\mu_1 \cong \mu_2$.
\end{definition}

The following complex-valued map induced by a representation is useful.

\begin{definition}[Character]
    For a representation $\mu$ of a finite group $G$, its \emph{character} is the function
    \begin{equation}
        \chi_{\mu}:G\to \mathbb{C}\, ,\chi_{\mu}(g) = \tr(\mu(g))\, .
    \end{equation}
\end{definition}

Two representations are isomorphic if and only if their characters are same, i.e.,
\begin{equation}
    \mu_1 \cong \mu_2 \, \iff\,  \chi_{\mu_1}(g) = \chi_{\mu_2}(g), \hspace{3ex} \forall g \in G\, . 
\end{equation}

\begin{lemma}[Schur's orthonormality condition]\label{lemma:schur-on}
    Let $\mu$ and $\nu$ be two irreducible representations of a finite group $G$. Then,  
    \begin{equation}
        \Ex_{g \in G} [\overline{\chi_{\mu}(g)} \chi_{\nu}(g)] = \delta_{\mu \cong \nu}\, .
    \end{equation}
\end{lemma}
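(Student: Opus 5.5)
We prove the orthonormality relation using Schur's lemma and the standard averaging trick. Write $d_\mu=\dim V_\mu$ and $d_\nu=\dim V_\nu$. Since $G$ is finite, we may take $\mu$ and $\nu$ to be unitary, as noted in the preliminaries, so that $\overline{\chi_\mu(g)}=\tr(\mu(g)^\dagger)=\tr(\mu(g^{-1}))=\chi_\mu(g^{-1})$.

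The first step is Schur's lemma in its two standard forms. (i) If $\phi:V_\mu\to V_\nu$ intertwines the representations, i.e.\ $\phi\mu(g)=\nu(g)\phi$ for all $g$, then $\phi$ is either zero or an isomorphism. This holds because $\ker\phi$ and $\operatorname{im}\phi$ are invariant subspaces, and irreducibility forces each to be trivial or everything. (ii) Any self-intertwiner of an irreducible representation over $\mathbb{C}$ is a scalar multiple of the identity. To see this, pick an eigenvalue $\lambda$ of the intertwiner $\phi$; then $\phi-\lambda I$ is an intertwiner that is not invertible, so by (i) it vanishes.

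Next, for an arbitrary linear map $A:V_\mu\to V_\nu$, form the average
\begin{equation}
    \phi_A=\Ex_{g\in G}\,\nu(g)A\mu(g^{-1}).
\end{equation}
Reindexing the average shows that $\phi_A$ is an intertwiner.
\begin{itemize}
    \item If $\mu\not\cong\nu$, then $\phi_A=0$ by (i).
    \item If $\mu\cong\nu$, we may identify the spaces and take $\mu=\nu$. By (ii), $\phi_A=c_A I$. Taking traces gives $c_A d_\mu=\tr(A)$, so $c_A=\tr(A)/d_\mu$.
\end{itemize}

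Now choose $A=E_{ij}$, a matrix unit, in orthonormal bases. Reading off the $(k,l)$ matrix entries of $\phi_A$ gives
\begin{equation}
    \Ex_g\, \nu(g)_{ki}\,\mu(g^{-1})_{jl}=\delta_{\mu\cong\nu}\,\delta_{ij}\delta_{kl}/d_\mu .
\end{equation}
Set $k=i$ and $l=j$, and sum over $i$ and $j$. The left-hand side becomes $\Ex_g\chi_\nu(g)\chi_\mu(g^{-1})=\Ex_g\overline{\chi_\mu(g)}\chi_\nu(g)$. The right-hand side becomes $\delta_{\mu\cong\nu}\sum_{i,j}\delta_{ij}/d_\mu=\delta_{\mu\cong\nu}$.

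When $\mu\cong\nu$ but the representations are not literally equal, the result still holds: characters are invariant under isomorphism, as recalled above, so we may replace $\nu$ by $\mu$.

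The only delicate point is Schur's lemma (ii). It needs the field to be algebraically closed, so that an eigenvalue exists, and this is why we work over $\mathbb{C}$. Everything else is bookkeeping.
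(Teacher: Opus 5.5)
The paper gives no proof of this lemma; it is quoted as standard background from the representation-theory references cited at the start of the preliminaries. Your argument is the standard textbook proof: Schur's lemma in both forms, then averaging $\nu(g)A\mu(g^{-1})$ over $G$, then contracting matrix units. It is correct and complete. That includes the reduction of the case $\mu\cong\nu$ to $\mu=\nu$ via isomorphism-invariance of characters, and the identity $\overline{\chi_\mu(g)}=\chi_\mu(g^{-1})$ obtained from unitarity.

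Note that the paper only ever applies the lemma to one-dimensional representations of an Abelian group (e.g.\ in the proof of Theorem~\ref{thm:algo-purification}). There it reduces to orthogonality of distinct homomorphisms $\chi,\psi:G\to\mathbb{C}^\times$, which has a one-line proof. Put $S=\Ex_{g}\overline{\chi(g)}\psi(g)$. If $\chi(h)\neq\psi(h)$, reindexing gives $S=\overline{\chi(h)}\psi(h)\,S$ with $\overline{\chi(h)}\psi(h)\neq 1$, so $S=0$. Your general argument of course contains this case.
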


In this work, we restrict our focus to Abelian groups, i.e., groups $G$ in which the group operation satisfies $gh=hg$ for all $g,h\in G$. For such Abelian groups, it is easy to fully characterize the irreducible representations and characters. To this end, first notice that, if $G$ is a Abelian and if $\mu$ is a representation of $G$, then $\mu(g)$ and $\mu(h)$ commute for all $g,h\in G$, i.e., $[\mu(g), \mu(h)]=0$. If $\mu$ is additionally unitary, then in particular this implies that $\mu(g)$ is unitarily diagonalizable for every $g\in G$, and together with the commutativity this implies that all $\mu(g)$ can be simultaneously diagonalized. It thus makes sense to speak of the eigenvectors of the representation $\mu$. 

\begin{lemma}
    For a unitary representation $\mu$ of an Abelian group $G$, the irreducible sub-representations of $\mu$ are exactly the one-dimensional spaces spanned by the eigenvectors of $\mu$, and the corresponding eigenvalues are the corresponding characters.
\end{lemma}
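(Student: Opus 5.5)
The plan is to use the simultaneous diagonalizability noted just before the statement and then identify the irreducible sub-representations directly. Since $G$ is Abelian and $\mu$ is unitary, the operators $\{\mu(g)\}_{g\in G}$ are commuting normal operators. Hence there is an orthonormal basis $\{\ket{v_1},\dots,\ket{v_d}\}$ of $V$ consisting of common eigenvectors, with $\mu(g)\ket{v_i} = \lambda_i(g)\ket{v_i}$. I would carry out the argument in three steps.

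\textbf{Step 1: common eigenvectors give irreps.} Each line $\mathrm{span}\{\ket{v_i}\}$ is $G$-invariant, because $\mu(g)\ket{v_i}$ is a multiple of $\ket{v_i}$. So it is a sub-representation, and it is irreducible because it is one-dimensional. Its character is $\chi(g)=\tr(\mu(g)|_{\mathrm{span}\{v_i\}})=\lambda_i(g)$, which is the eigenvalue.

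\textbf{Step 2: every irrep is such a line.} Let $W$ be an irreducible $G$-invariant subspace. The restrictions $\mu(g)|_W$ are commuting unitaries on $W$, since invariance of $W$ under unitaries also gives invariance of $W^\perp$, so the restrictions stay normal. They therefore have a common eigenvector $\ket{w}\in W$. Then $\mathrm{span}\{\ket{w}\}$ is a nonzero invariant subspace of $W$, and irreducibility forces $W=\mathrm{span}\{\ket{w}\}$. Thus $W$ is one-dimensional and spanned by an eigenvector of $\mu$, that is, a common eigenvector of all $\mu(g)$. By the computation in Step 1, its character equals the corresponding eigenvalue function $g\mapsto\lambda(g)$.

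\textbf{Step 3: the main obstacle.} The only delicate point is interpretation. When eigenvalue functions repeat, "the eigenvectors of $\mu$" should mean \emph{any} common eigenvector, not just members of one fixed basis. Step 2 already handles this, since it produces an arbitrary common eigenvector. I would state this explicitly, noting that any vector in a joint eigenspace $\{v:\mu(g)v=\lambda(g)v\ \forall g\}$ spans an irrep with character $\lambda$. The rest of the argument is routine.
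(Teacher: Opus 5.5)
Your proof is correct and takes the same route the paper indicates. The paper only remarks that the commuting unitaries $\mu(g)$ can be simultaneously diagonalized and then states the lemma without a proof. Your Step 2, which uses a common eigenvector inside an invariant $W$ to force $\dim W=1$, supplies the converse direction the paper leaves implicit, and reading ``eigenvector of $\mu$'' as a common eigenvector of all $\mu(g)$ is the right interpretation.
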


In the remainder of the paper, all representations will be unitary representation of an Abelian group over some Hilbert space, unless stated otherwise. 
We can then write the eigenvectors as $|\lambda, v_{\lambda}\rangle$, where $\lambda$ indicates the eigenvalue and where $v_{\lambda}$ enumerates the eigenvectors with the same eigenvalue $\chi_{\lambda}(g), \forall g \in G$. Equivalently, the $v_\lambda$ enumerates the isomorphic irreducible representations with character $\chi_\lambda$. The space spanned by isomorphic irreps, $\mathrm{span}\{|\lambda, v_{\lambda}\rangle\}_{v_\lambda}$, is called \emph{$\lambda$-isotypic component} of the Hilbert space 
Now, we can define the projective measurement $\{\Pi_{\lambda}\}_{\lambda}$ via
\begin{equation}
    \Pi_{\lambda} = \sum_{v_{\lambda}} |\lambda, v_{\lambda}\rangle \langle \lambda, v_{\lambda}| \, ,
\end{equation}
which are projectors on the $\lambda$-isotypic subspaces of the Hilbert space. This measurement is called the character POVM of the representation. 
The following alternative expression for the character POVM elements will be useful:

\begin{fact}[\cite{hinsche2025abelianstatehiddensubgroup}]\label{fact:character-povm}
    The character POVM elements can be written as
    \begin{equation}
        \Pi_{\lambda} = \frac{1}{|G|} \sum_{g \in G} \overline{\chi_{\lambda}(g)} \mu(g).
    \end{equation}
\end{fact}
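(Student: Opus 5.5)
We verify directly that the operator $P_\lambda := \frac{1}{|G|}\sum_{g\in G}\overline{\chi_\lambda(g)}\,\mu(g)$ coincides with the spectral projector $\Pi_\lambda$. Both sides are built from the representation $\mu$, so it suffices to compare their action on the simultaneous eigenbasis $\{|\lambda', v_{\lambda'}\rangle\}$ provided by the preceding lemma. That lemma states that for a unitary representation of an Abelian group, this basis exists and spans the whole Hilbert space. In this basis each $\mu(g)$ is diagonal, with $\mu(g)|\lambda', v_{\lambda'}\rangle = \chi_{\lambda'}(g)|\lambda', v_{\lambda'}\rangle$.

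\textbf{Step 1.} Apply $P_\lambda$ to a basis vector $|\lambda', v_{\lambda'}\rangle$. By linearity,
\begin{equation}
P_\lambda|\lambda', v_{\lambda'}\rangle = \Big(\Ex_{g\in G}\overline{\chi_\lambda(g)}\chi_{\lambda'}(g)\Big)|\lambda', v_{\lambda'}\rangle .
\end{equation}

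\textbf{Step 2.} Evaluate the scalar with Schur's orthonormality condition (Lemma~\ref{lemma:schur-on}). The one-dimensional representations $g\mapsto\chi_\lambda(g)$ and $g\mapsto\chi_{\lambda'}(g)$ are irreducible. They are isomorphic if and only if their characters agree, i.e.\ iff $\lambda=\lambda'$, since distinct labels $\lambda$ index distinct eigenvalue functions. Hence the scalar equals $\delta_{\lambda\lambda'}$.

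\textbf{Step 3.} Conclude. By Steps 1 and 2, $P_\lambda$ acts as the identity on the $\lambda$-isotypic component $\mathrm{span}\{|\lambda, v_\lambda\rangle\}_{v_\lambda}$ and annihilates every other basis vector. Since the basis is complete, this is exactly the definition of $\Pi_\lambda=\sum_{v_\lambda}|\lambda,v_\lambda\rangle\langle\lambda,v_\lambda|$, and the identity holds.

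The only point needing care is the identification of the isomorphism classes of one-dimensional representations with the labels $\lambda$ in Step 2. This follows directly from the stated fact that representations are isomorphic if and only if their characters coincide, so there is no real obstacle.
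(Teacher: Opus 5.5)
Your proof is correct. On the simultaneous eigenbasis, $\frac{1}{|G|}\sum_{g}\overline{\chi_\lambda(g)}\,\mu(g)$ acts on $|\lambda',v_{\lambda'}\rangle$ by the scalar $\frac{1}{|G|}\sum_g\overline{\chi_\lambda(g)}\chi_{\lambda'}(g)$, and this scalar equals $\delta_{\lambda\lambda'}$. Two facts justify that last equality. First, the paper's labelling convention makes distinct labels correspond to distinct eigenvalue functions. Second, each function $g\mapsto\chi_{\lambda'}(g)$ is a one-dimensional, hence irreducible, representation, which is the content of the lemma preceding the statement. It also follows directly from $\mu(gh)=\mu(g)\mu(h)$ on a common eigenvector.

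The paper gives no proof of its own and simply imports the statement from \cite{hinsche2025abelianstatehiddensubgroup}, so there is no competing argument to compare against. Your computation is the standard one. It is the Abelian specialization of the general isotypic-projector formula $\frac{d_\lambda}{|G|}\sum_g\overline{\chi_\lambda(g)}\,\mu(g)$. In the non-Abelian case one first shows this operator commutes with every $\mu(h)$, since $\chi_\lambda$ is a class function. One then invokes Schur's lemma to see that it is a scalar on each irreducible summand, and evaluates that scalar via a trace and character orthogonality. With one-dimensional irreps you skip the Schur's-lemma step and read the scalar directly off the eigenvalue equation, which is the cleaner route here. As a side remark, for a label $\lambda$ that does not occur in $\mu$ your formula correctly yields the zero operator.
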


\subsection{Abelian State Hidden Subgroup Problem}
Symmetries are at the core of physics. It is thus a natural question to ask whether we can learn the symmetries of an unknown quantum state given some form of access to the state. This serves as a generalization of the \textit{Hidden Subgroup Problem (HSP)} \cite{shor1994, Jozsa2001}, which entails finding the hidden symmetries of a Boolean function from (either classical or quantum) query access. A restricted version of this problem for quantum states, the so-called \emph{Abelian State Hidden Subgroup Problem (Abelian StateHSP)} has recently been formalised~\cite{bouland2024statehiddensubgroupproblem, hinsche2025abelianstatehiddensubgroup}. Formally, the problem is defined as follows:

\begin{definition}[Abelian StateHSP $(G, \mu, \Psi_H^{\epsilon})$~\cite{hinsche2025abelianstatehiddensubgroup}]\label{defn:abelian-statehsp}
    Let $G$ be an Abelian group and let $H\leq G$ be a subgroup. Further, let $\mu:G\rightarrow U(\mathcal{H})$ be a unitary representation of $G$ over some Hilbert space $\mathcal{H}$. Then, we say that a state $\rho \in \cal{B}(\mathcal{H})$ \emph{$\epsilon$-hides the subgroup $H\leq G$} (for which we write $\rho \in \Psi_H^{\epsilon}$) if, 
    \begin{enumerate}
        \item $\forall h \in H$, $ \tr(\mu(h)\rho)  = 1$, and
        \item $\forall g \in G\setminus H$, $|\tr(\mu(g) \rho)| \leq 1-\epsilon$.
    \end{enumerate}
    The Abelian StateHSP for $G$ is the following problem: Given i.i.d.~copies of a state $\rho$ that is promised to $\epsilon$-hide $H$, identify $H$. If we are additionally promised that $\rho$ is pure, then we refer to this variant of the problem as \textit{pure Abelian StateHSP}.
\end{definition}

In this formulation, we can also think of the Abelian StateHSP as a problem of discriminating between sets $ \Psi_H^{\epsilon}$ of states, each of which $\epsilon$-hides a symmetry subgroup $H \leq G$ of fixed order $\alpha$. 
The Abelian StateHSP problem can be solved using $O(\log |G|/\epsilon)$ copies of the state~\cite{hinsche2025abelianstatehiddensubgroup}. The main algorithmic step is to perform the character POVM $\{\Pi_{\lambda}\}_{\lambda}$ and to then classically post-process the observed measurement outcomes. Next, following \cite{hinsche2025abelianstatehiddensubgroup}, we sketch how the Abelian StateHSP algorithm works in a bit more detail.

Note that the outcome probabilities when performing the character POVM are
\begin{equation}
    q_{\rho}(\lambda) = \tr(\rho \Pi_{\lambda}).
\end{equation}
For a finite Abelian group $G$, we can define the dual group $\widehat{G}$ as the group of all character functions over $G$ corresponding to irreducible representations. Then, for a subgroup $H\leq G$, we can define the dual subgroup $H^{\perp} \leq \widehat{G}$ \begin{equation}
    H^{\perp} = \{\lambda \in \widehat{G} ~|~ \chi_{\lambda} (h) = 1, \forall h \in H\}\, .
\end{equation}
We also define
\begin{equation}
    q_{\rho}(H^{\perp}) = \sum_{\lambda \in {H}^{\perp}} q_{\rho}(\lambda) = \tr\lr{\rho \sum_{\lambda \in H^{\perp}} \Pi_{\lambda}}\, .
\end{equation}
As in \cite{hinsche2025abelianstatehiddensubgroup}, the probability can be shown to be equal to
\begin{equation}
    q_{\rho}(H^{\perp}) = \frac{1}{|H|}\sum_{h \in H} \tr(\mu(h) \rho)\, .
\end{equation}
For an Abelian StateHSP instance, $\rho \in \Psi_H^{\epsilon}$, we have
\begin{equation}
    q_{\rho}(H^{\perp}) = 1.
\end{equation}
Thus, if we use character POVMs $\{\Pi_{\lambda}\}_{\lambda}$, we will always sample a $\lambda \in H^{\perp}$, and given enough independent samples, we can determine $H^{\perp}$ and hence its dual $H$.

\subsection{Stabilizer States and Phaseless Stabilizer StateHSP}~\label{section:intro-to-stab-states-and-stabhsp}
Let $\mc{P}_n$ be the $n$-qubit Pauli group. The single-qubit Pauli group is generated by $X$ and $Z$ operators, which act as
\begin{equation}
    \begin{split}
        &X|a\rangle = |a\oplus 1\rangle,  \\
        &Z|a\rangle =  (-1)^a|a\rangle,
    \end{split}
\end{equation}
where $a \in \{0, 1\}.$ 
The $n$-qubit Pauli group consists of tensor products of elements of the single-qubit Pauli group.
An $n$-qubit stabilizer state $|\psi\rangle \in \mathrm{Stab}_n$ is defined to have a stabilizer group $S \subset \mc{P}_n$ of order $2^n$ such that
\begin{equation}
    P|\psi\rangle = |\psi\rangle, \forall P \in S.
\end{equation}
We define the following $n$-qubit operators
\begin{equation}\label{eq:phaseless-weyl-ops}
    V_x =  \bigotimes_{i = 1}^n Z^{a_i} X^{b_i}, \hspace{3ex} \forall x \in \Z_2^{2n}\, ,
\end{equation}
which are Weyl operators up to an imaginary phase. Moreover, these operators satisfy,
\begin{equation}
    V_x V_y = (-1)^{[x, y]} V_y V_x\, .
\end{equation}
and that,
\begin{equation}
    V_x V_y = (-1)^{b \cdot c} V_{x \oplus y}\, .
\end{equation}
where $x = (a, b)$, $y = (c, d)$ and $[x, y] = a\cdot d + b \cdot c \mod 2$. Now, define the $(2n)$-qubit Bell basis states as,
\begin{equation}
    |\Phi_y\rangle= (V_y \otimes I) |\Phi_0\rangle, \hspace{3ex} \text{ with } \hspace{3ex} |\Phi_0\rangle= \frac{1}{\sqrt{2^n}} \sum_{x \in \Z_2^{2n}} |x\rangle \otimes |x\rangle\, ,
\end{equation}
for any $y \in \Z_2^{2n}$. It is easy to see that these are orthonormal and forms a basis for the space of $(2n)$-qubit states. Moreover,
\begin{equation}
    V_x^{\otimes 2} |\Phi_y\rangle = (-1)^{[x, y]} |\Phi_y\rangle, \hspace{3ex} \forall x, y \in \Z_2^{2n}.
\end{equation}

\subsection{Mixed Phaseless Stabilizer StateHSP}
We consider the following unitary representation,
\begin{equation}\label{eq:unitary-rep-stab-hsp}
    \begin{split}
        \mu:\mathbb{Z}_2^{2n} \rightarrow \mathcal{U}(((\mathbb{C}^{2})^{\otimes n})^{\otimes 2})\, ,x \mapsto V_{x}^{\otimes 2}.
    \end{split}
\end{equation}
Then, we define the states $\sigma_L$ as,
\begin{equation}
    \sigma_L = \frac{1}{\sqrt{2^n}} \sum_{y \in L^{\perp}} |\Phi_y\rangle \langle \Phi_y|\, ,
\end{equation}
where $L$ is an $n$-dimensional subspace of $\Z_2^{2n}$ and $L^{\perp}$ is its dual subspace defined as,
\begin{equation}
    L^{\perp} = \{y \in \Z_2^{2n} ~|~ (-1)^{[x, y]} = 1, \forall x \in L\}.
\end{equation}

\begin{lemma}\label{lemma:sigma-stab-states}
    The states $\sigma_L$ as defined above have the properties,
    \begin{enumerate}
        \item $\forall x \in L$, $\tr \lr{V_x^{\otimes 2} \sigma_L} = 1$, and
        \item $\forall x \notin L$, $\tr \lr{V_x^{\otimes 2} \sigma_L} = 0$.
    \end{enumerate}
\end{lemma}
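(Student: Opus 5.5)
The plan is a direct computation in the Bell basis, using the eigenvalue relation $V_x^{\otimes 2}|\Phi_y\rangle = (-1)^{[x,y]}|\Phi_y\rangle$ stated above. First I will fix the normalization. Since $[\cdot,\cdot]$ is a nondegenerate bilinear form on $\Z_2^{2n}$ and $L$ is $n$-dimensional, its dual $L^{\perp}$ is also $n$-dimensional, so $|L^{\perp}| = 2^n$. Hence for $\sigma_L$ to be a state, the prefactor must be $1/|L^{\perp}| = 2^{-n}$. I will read the displayed $1/\sqrt{2^n}$ as this normalization; with the literal prefactor, both traces below would just be rescaled by $\sqrt{2^n}$.

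By linearity and orthonormality of the Bell basis,
\begin{equation}
    \tr\lr{V_x^{\otimes 2}\sigma_L} = \frac{1}{2^n}\sum_{y\in L^{\perp}} \langle \Phi_y| V_x^{\otimes 2}|\Phi_y\rangle = \frac{1}{2^n}\sum_{y\in L^{\perp}} (-1)^{[x,y]}.
\end{equation}
So both items reduce to evaluating a character sum over the subgroup $L^{\perp}$.

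For item 1, take $x\in L$. By the definition of $L^{\perp}$, every term satisfies $(-1)^{[x,y]}=1$, so the sum equals $|L^{\perp}|/2^n = 1$.

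For item 2, take $x\notin L$. The map $y\mapsto (-1)^{[x,y]}$ is a group homomorphism $L^{\perp}\to\{\pm1\}$. If it is nontrivial, orthogonality of characters (Lemma~\ref{lemma:schur-on}, applied to the trivial and this one-dimensional character of $L^{\perp}$) makes the sum vanish. So the key step is to show that this character is nontrivial on $L^{\perp}$, i.e.\ that $(L^{\perp})^{\perp} = L$. This is the only real obstacle. It follows from nondegeneracy of the symplectic form: $L\subseteq (L^{\perp})^{\perp}$ trivially, and applying the dimension formula $\dim W^{\perp} = 2n-\dim W$ twice gives $\dim (L^{\perp})^{\perp} = n = \dim L$. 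Hence some $y\in L^{\perp}$ has $[x,y]=1$, the character is nontrivial, and the trace is $0$.
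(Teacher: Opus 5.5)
Your proof is correct and follows the paper's argument. The paper likewise reduces the trace to $2^{-n}\sum_{y\in L^{\perp}}(-1)^{[x,y]}$ via the Bell-basis eigenvalue relation, and its own proof uses the $2^{-n}$ normalization, so your reading of the $1/\sqrt{2^n}$ as a typo is right; the only differences are that the paper shows the sum vanishes by factoring it as $2^{-n}\prod_i\bigl(1+(-1)^{[x,l_i]}\bigr)$ over a basis $\{l_i\}$ of $L^{\perp}$ rather than by character orthogonality, and that it merely asserts the existence of $y\in L^{\perp}$ with $[x,y]=1$, which you justify through $(L^{\perp})^{\perp}=L$ via nondegeneracy and dimension counting.
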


Thus, using ~\Cref{lemma:sigma-stab-states}, we can define an Abelian StateHSP $(\Z_2^{2n}, \mu, \Psi_{L}^{1})$ where the unitary representation $\mu$ is given by~\eqref{eq:unitary-rep-stab-hsp} and,
\begin{equation}
    \Psi_{L}^{1} = \{\sigma_L\}.
\end{equation}

By~\Cref{lemma:sigma-stab-states}, the state $\sigma_L$ hides the subgroup $L$ of $\Z_2^{2n}$. Now, the irreducible representations (eigenstates) of the unitary representation $V_x^{\otimes 2}$ are just the Bell basis defined in the previous section $\{|\Phi_y\rangle\}_{y \in \Z_{2}^{2n}}$. Moreover, the character (eigenvalues) are given as,
\begin{equation}
    \chi_{y}(x) = (-1)^{[x, y]}\, .
\end{equation}
From this, we in particular see that the characters for any two irreps $y \neq \tilde{y}$ satisfy $\chi_y\neq \chi_{\tilde{y}}$. Thus, from the previous discussion, all irreps are non-isomorphic. And, the character POVM is given by,
\begin{equation}
    \Pi_{y} = |\Phi_y\rangle \langle \Phi_y|.
\end{equation}
We call the Abelian StateHSP $(\Z_2^{2n}, \mu, \Psi_L^1)$ as mixed-phaseless-stabilizer StateHSP because of the similarity of the hidden group to the (phaseless) stabilizer group of the stabilizer states.

\subsection{Random Purification}
The random purification channel by~\cite{tang2025conjugatequerieshelp} allows one to transform $t$ i.i.d. copies of an arbitrary unknown mixed state to $t$ i.i.d. copies of a randomly chosen purification of that state. 

\begin{lemma}[\cite{tang2025conjugatequerieshelp}, Lemma 2.11]
    Let $m,r\in\mathbb{N}$.
    There is a unitary circuit $C^{(m)}$ such that, such that for all mixed states $\rho \in \mathbb{C}^{d \times d}$ of rank at most $r$,
    \begin{equation}
        C^{(m)}(\rho^{\otimes m}) = \Ex_{|\rho\rangle} |\rho\rangle \langle \rho|^{\otimes m} \, ,
    \end{equation}
    where the expectation is over random purifications $|\rho\rangle\in\mathbb{C}^{d}\otimes \mathbb{C}^r$ of $\rho$. Additionally, the circuit $C^{(m)}$ can be implemented to accuracy $\epsilon$ with $\poly(m, \log d, \log(1/\epsilon))$ gate complexity.
\end{lemma}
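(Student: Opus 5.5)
The plan is to prove the lemma with Schur--Weyl duality. I would identify the exact form of the twirled state $\Ex_{|\rho\rangle}|\rho\rangle\langle\rho|^{\otimes m}$, show that it is a fixed, $\rho$-independent function of $\rho^{\otimes m}$, and realise that function with Schur transforms.

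\textbf{Setting up the random purification.} Fix one purification $|\rho_0\rangle\in\mathbb{C}^d\otimes\mathbb{C}^r$, which exists because $\mathrm{rank}(\rho)\le r$. Every other purification has the form $(I\otimes U)|\rho_0\rangle$ with $U\in U(r)$. I take ``random purification'' to mean $U$ drawn from the Haar measure, so that
\begin{equation}
\Ex_{|\rho\rangle}|\rho\rangle\langle\rho|^{\otimes m}=\Ex_{U}\,(I\otimes U)^{\otimes m}\,|\rho_0\rangle\langle\rho_0|^{\otimes m}\,(I\otimes U^\dagger)^{\otimes m}.
\end{equation}
The state $|\rho_0\rangle^{\otimes m}$ lies in the symmetric subspace $\mathrm{Sym}^m(\mathbb{C}^d\otimes\mathbb{C}^r)$. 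By the Cauchy (Howe) decomposition,
\begin{equation}
\mathrm{Sym}^m(\mathbb{C}^d\otimes\mathbb{C}^r)\cong\bigoplus_{\lambda\vdash m,\ \ell(\lambda)\le\min(d,r)} Q^d_\lambda\otimes Q^r_\lambda .
\end{equation}
Concretely, after Schur transforms on both tensor factors, block $\lambda$ is $Q^d_\lambda\otimes P^d_\lambda\otimes Q^r_\lambda\otimes P^r_\lambda$ with the two permutation registers in a fixed maximally entangled state $|\Phi_\lambda\rangle_{P^d_\lambda P^r_\lambda}$.

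\textbf{Computing the twirl.} The group $U(r)$ acts only through the irreps $Q^r_\lambda$, and each of these appears exactly once inside $\mathrm{Sym}^m$. Schur's lemma therefore forces two things. First, the coherences between different $\lambda$ vanish. Second, within block $\lambda$ the twirled state equals $X_\lambda\otimes|\Phi_\lambda\rangle\langle\Phi_\lambda|\otimes I_{Q^r_\lambda}/\dim Q^r_\lambda$ for some positive operator $X_\lambda$ on $Q^d_\lambda$.

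Tracing out the purifying registers must return $\rho^{\otimes m}$. By Schur--Weyl duality, $\rho^{\otimes m}$ takes the form $\bigoplus_\lambda q_\lambda(\rho)\otimes I_{P^d_\lambda}$. Tracing $|\Phi_\lambda\rangle$ over $P^r_\lambda$ yields $I_{P^d_\lambda}/\dim P^d_\lambda$, so matching the two expressions fixes $X_\lambda=\dim(P^d_\lambda)\,q_\lambda(\rho)$.

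The rank assumption enters here: $q_\lambda(\rho)=0$ whenever $\ell(\lambda)>\mathrm{rank}(\rho)$. Hence no weight of $\rho^{\otimes m}$ sits on irreps that are absent from the purified side when $r<d$.

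\textbf{The channel.} The computation above suggests a channel $C^{(m)}$ built as follows:
\begin{enumerate}
\item Apply the Schur transform to $(\mathbb{C}^d)^{\otimes m}$, producing registers $|\lambda\rangle\,Q^d_\lambda\,P^d_\lambda$.
\item Coherently discard $P^d_\lambda$ and replace it by a fresh $|\Phi_\lambda\rangle_{P^d_\lambda P^r_\lambda}$. This is harmless because $P^d_\lambda$ is maximally mixed and uncorrelated in $\rho^{\otimes m}$.
\item Prepare $Q^r_\lambda$ maximally mixed, for instance as half of an ancilla-purified pair.
\item Apply the inverse Schur transform on the $r$-side.
\end{enumerate}
By the block formulas above, the output equals the twirled state exactly for every $\rho$ of rank at most $r$. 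Since the channel has a Stinespring dilation with ancillas, it is a unitary circuit in the lemma's sense.

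For gate complexity, I would use the efficient Schur transform of Bacon--Chuang--Harrow, which costs $\poly(m,\log d,\log(1/\epsilon))$, together with the cheap preparation of $|\Phi_\lambda\rangle$ in the Young--Yamanouchi basis.

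\textbf{Main obstacle.} I expect the hardest step to be making the Howe-duality bookkeeping precise: checking that the $\mathrm{Sym}^m$ embedding really pairs $P^d_\lambda$ with $P^r_\lambda$ through the same maximally entangled state for each $\lambda$, and that the phase conventions of the two Schur transforms agree. If that becomes messy, my fallback is to bypass explicit formulas. Uniqueness up to $U(r)$ of the $U(r)$-invariant extension inside $\mathrm{Sym}^m$ already determines the output from its marginal, so it would suffice to exhibit any channel whose output is $U(r)$-invariant, supported on $\mathrm{Sym}^m$, and has the correct marginal.
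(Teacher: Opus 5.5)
Your representation-theoretic argument is correct, and it is the construction the paper points to: weak Schur sampling, keeping the $Q^d_\lambda$ register, and preparing a $\lambda$-dependent state on the purifying side. Concretely, your steps 1--4 are exactly the channel $\mathcal{P}_{\mathcal{A}}$ of Theorem~\ref{thm:general_random_purification} for $\mathcal{A}$ the algebra generated by permutations of the $m$ copies, with purifying space $(\mathbb{C}^r)^{\otimes m}$. In that instance $L_\lambda=P_\lambda$ is discarded and replaced by a maximally entangled state, $R_\lambda=Q^d_\lambda$ is kept, and $R'_\lambda=Q^r_\lambda$ is prepared maximally mixed.

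A few small corrections to the exact part:
\begin{itemize}
\item As a $U(r)$-representation, $Q^r_\lambda$ occurs in $\mathrm{Sym}^m$ with multiplicity $\dim Q^d_\lambda$, not once. What you actually use is that distinct $\lambda$ are inequivalent, and your block form with an arbitrary $X_\lambda$ on $Q^d_\lambda$ already encodes this correctly.
\item Step 4 must also undo the Schur transform on the $d$-side.
\item Your worry about conventions dissolves if both Schur transforms use the same real (Young orthogonal) form of $S_m$. Then the diagonal-$S_m$-invariant vector in $P^d_\lambda\otimes P^r_\lambda$ is simply $\sum_T|T\rangle|T\rangle/\sqrt{\dim P_\lambda}$.
\item Your fallback is sound. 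On $\mathrm{Sym}^m(\mathbb{C}^d\otimes\mathbb{C}^r)\cong\bigoplus_\lambda Q^d_\lambda\otimes Q^r_\lambda$ the $U(r)$-commutant is $\bigoplus_\lambda\mathcal{B}(Q^d_\lambda)\otimes I_{Q^r_\lambda}$. Hence an invariant state there is determined by its $d$-marginal.
\end{itemize}

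The genuine gap is in the gate-complexity claim. The Bacon--Chuang--Harrow circuit costs $\mathrm{poly}(m,d,\log(1/\epsilon))$, which is polynomial in $d$, so it does not give the stated $\mathrm{poly}(m,\log d,\log(1/\epsilon))$. You need a high-dimensional Schur transform instead, such as Krovi's algorithm or the $\log d$ modification noted in Harrow's thesis, applied on both the $d$-side and the $r$-side.

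You also do not say how to prepare $I_{Q^r_\lambda}/\dim Q^r_\lambda$ efficiently, even though $\dim Q^r_\lambda$ is in general exponential in $m$. Purifying it with an ancilla presupposes a uniform superposition over the Gelfand--Tsetlin basis of $Q^r_\lambda$, which you have not constructed. One route that uses your own uniqueness observation:
\begin{itemize}
\item Place any fixed vector in $Q^r_\lambda$.
\item Apply $U^{\otimes m}$ to the purifying registers, with $U$ drawn from an efficiently implementable approximate unitary $m$-design on $\mathbb{C}^r$.
\end{itemize}
Since $U^{\otimes m}$ acts on block $\lambda$ as $q_\lambda(U)\otimes I_{P^r_\lambda}$, this leaves $|\Phi_\lambda\rangle$ intact and reproduces the Haar twirl on $Q^r_\lambda$ up to the design error. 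Alternatively, sample semistandard tableaux coherently using hook-content counts. Either way, this step needs to be made explicit before the complexity bound follows.
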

On a very high level, the approach in \cite{tang2025conjugatequerieshelp} is to first perform weak Schur sampling on the $m$ copies of the mixed input state, and to then prepare a specific state conditioned on the outcome observed. 
The random purification channel has attracted considerable attention in the community recently, evidenced by an alternate construction and proof \cite{girardi2025randompurificationchannelsimple}, an extensions to bosonic or fermionic states \cite{mele2025randompurificationchannelpassive, walter2025randompurificationchannelarbitrary}, extensions to quantum channels \cite{girardi2025randomstinespringsuperchannelconverting, yoshida2025randomdilationsuperchannel}, and applications in quantum learning theory \cite{pelecanos2025mixedstatetomographyreduces, mele2025optimallearningquantumchannels}.  

We also highlight the more general random purification channel introduced recently for general symmetries~\cite{walter2025randompurificationchannelarbitrary}. We denote the Hilbert space on which our mixed state lives in by $\mc H$ and the Hilbert space for the purifying register by $\mc H^{\prime}$.
\begin{theorem}[{\cite[from Theorem 3.1 and its proof]{walter2025randompurificationchannelarbitrary}}]\label{thm:general_random_purification}
For any $*$-algebra $\mc{A}$, there is a quantum channel $\mc{P}_{\mc{A}}: \mc{B}(\mc H) \rightarrow \mc{B}(\mc{H}\otimes \mc{H}^{\prime})$ with the following properties:
\begin{enumerate}
    \item Random Purification of symmetric states: For any quantum state $\rho \in \mc{S}(\mc H)$ that commutes with $\mc{A}$ i.e. $\rho \in \mc{A}^{\prime}$, the commutant of algebra $\mc{A}$, we have that
    \begin{equation}
        \mc{P}_{\mc A}(\rho) = \int_G (I \otimes g^T) \psi_{\rho}^{\mathrm{std}}(I \otimes \bar{g}) dg
    \end{equation}
    where $G \subseteq U(\mc H)$ is any closed subgroup with $\mathbb{C}G = \mc{A}^{\prime}$, and $\psi_{\rho}^{\mathrm{std}} = (\sqrt{\rho}\otimes I)|\tilde{\Phi}_0\rangle$, with $|\tilde{\Phi}_0\rangle = \sum_{x}|x\rangle |x^{\prime}\rangle \in \mc{S}(\mc{H} \otimes \mc{H}^{\prime})$ the un-normalized maximally entangled state on $\mc{H} \otimes \mc{H}^{\prime}$. 
    \item Assume that the Hilbert space decomposes as $\mc{H} \cong \bigoplus_{\lambda \in \Lambda} L_{\lambda} \otimes R_{\lambda}$ (and $\mc{H}^{\prime} \cong \bigoplus_{\lambda \in \Lambda} L^{\prime}_{\lambda} \otimes R^{\prime}_{\lambda}$) with finite index set $\Lambda$, such that
    \begin{equation}
        \mc{A} \cong \bigoplus_{\lambda \in \Lambda} \mc{B}(L_{\lambda}) \otimes I_{R_{\lambda}}, \hspace{3ex} \text{and} \hspace{3ex} \mc{A}^{\prime} \cong \bigoplus_{\lambda \in \Lambda} I_{L_{\lambda}} \otimes \mc{B}(R_{\lambda}), 
    \end{equation}
    (which is always possible for $*$-algebras). Then, the channel $\mc{P}_{\mc A}$ is given by
    \begin{equation}\label{eqn:general_random_purification_channel}
        \mc{P}_{\mc A}(\rho) = \bigoplus_{\lambda \in \Lambda} \frac{|\tilde{\Phi}_0\rangle_{L_{\lambda} L^{\prime}_{\lambda}} \langle \tilde{\Phi}_0|_{L_{\lambda} L_{\lambda}^{\prime}}}{\textrm{dim } L_{\lambda}} \otimes \tr_{L_{\lambda}}\lrq{P_{\lambda} \rho P_{\lambda}} \otimes \frac{I_{R^{\prime}_{\lambda}}}{\textrm{dim } R^{\prime}_{\lambda}},
    \end{equation}
    where $|\tilde{\Phi}_0\rangle_{L_{\lambda} L^{\prime}_{\lambda}}$ is the un-normalised maximally entangled state on the $L_{\lambda} \otimes L_{\lambda}^{\prime}$ component, $P_{\lambda}$ be the projector on $\lambda$-component $L_{\lambda} \otimes R_{\lambda}$ of $\mc{H}$.
\end{enumerate}
\end{theorem}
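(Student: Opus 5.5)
The plan is to take the formula in \eqref{eqn:general_random_purification_channel} as the \emph{definition} of $\mc{P}_{\mc A}$, show that it is a quantum channel, and then check the identity in item~1 by computing both sides block by block in the decomposition of item~2.

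\textbf{Step 1: $\mc{P}_{\mc A}$ is a channel.} The map is a composition of elementary CPTP steps. First we pinch, $\rho\mapsto\bigoplus_\lambda P_\lambda\rho P_\lambda$. Within each block $\lambda$ we trace out $L_\lambda$. We then append the normalized maximally entangled state $|\tilde\Phi_0\rangle\langle\tilde\Phi_0|_{L_\lambda L'_\lambda}/\dim L_\lambda$ and the maximally mixed state $I_{R'_\lambda}/\dim R'_\lambda$. Finally we embed the result into $\mc H\otimes\mc H'$. Each step is completely positive, and the total trace equals $\sum_\lambda\tr[P_\lambda\rho P_\lambda]=\tr\rho$, so $\mc{P}_{\mc A}$ is CPTP.

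\textbf{Step 2: the right-hand side of item~1 does not depend on the choice of $G$.} The map $X\mapsto\int_G(I\otimes g^T)X(I\otimes\bar g)\,dg$ is the orthogonal projection onto the commutant of $\{I\otimes g^T\}_{g\in G}$. This commutant depends only on the algebra spanned by $G$. Since $\mathbb{C}G=\mc A'$, every admissible $G$ gives the same twirl. We may therefore take the largest choice,
\[
G=\Bigl\{\textstyle\bigoplus_\lambda I_{L_\lambda}\otimes U_\lambda \;:\; U_\lambda\in U(R_\lambda)\Bigr\},
\]
with Haar measure, which is a product of independent Haar unitaries, one per block.

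\textbf{Step 3: decompose the purification.} For $\rho\in\mc A'$ we have $\rho=\bigoplus_\lambda I_{L_\lambda}\otimes\rho_\lambda$, and hence $\sqrt\rho=\bigoplus_\lambda I_{L_\lambda}\otimes\sqrt{\rho_\lambda}$. The unnormalized maximally entangled state splits along the decompositions of $\mc H$ and $\mc H'$ as
\[
|\tilde\Phi_0\rangle=\textstyle\sum_\lambda|\tilde\Phi_0\rangle_{L_\lambda L'_\lambda}\otimes|\tilde\Phi_0\rangle_{R_\lambda R'_\lambda}.
\]
Consequently
\[
\psi^{\mathrm{std}}_\rho=\textstyle\sum_\lambda|\tilde\Phi_0\rangle_{L_\lambda L'_\lambda}\otimes(\sqrt{\rho_\lambda}\otimes I)|\tilde\Phi_0\rangle_{R_\lambda R'_\lambda}.
\]

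\textbf{Step 4: twirl the projector block by block.} Apply the twirl to $\psi^{\mathrm{std}}_\rho(\psi^{\mathrm{std}}_\rho)^\dagger$.
\begin{itemize}
\item For $\lambda\neq\lambda'$, the cross terms vanish. Inserting independent global phases $U_\lambda\mapsto e^{i\theta_\lambda}U_\lambda$, which preserve the Haar measure, shows that the off-diagonal blocks average to zero.
\item For each diagonal block, the $R'_\lambda$ register is twirled by a Haar random unitary on that register alone. Schur's lemma then gives $X\mapsto\tr_{R'}(X)\otimes I_{R'}/\dim R'$.
\item Here the reduced operator on $R_\lambda$ is $\sqrt{\rho_\lambda}\,\sqrt{\rho_\lambda}^{\,T\,T}=\rho_\lambda$, using $(A\otimes I)|\tilde\Phi_0\rangle=(I\otimes A^T)|\tilde\Phi_0\rangle$.
\end{itemize}
Each block therefore becomes
\[
|\tilde\Phi_0\rangle\langle\tilde\Phi_0|_{L_\lambda L'_\lambda}\otimes\rho_\lambda\otimes\frac{I_{R'_\lambda}}{\dim R'_\lambda}.
\]
On the other side, $\tr_{L_\lambda}[P_\lambda\rho P_\lambda]=\dim L_\lambda\cdot\rho_\lambda$. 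This matches \eqref{eqn:general_random_purification_channel} exactly, including the normalization factor $1/\dim L_\lambda$.

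\textbf{Main obstacle.} The delicate points are the bookkeeping of transposes and conjugates when $g^T$ acts on the primed register, and Step~2, which justifies reducing to the full unitary group of $\mc A'$. For Step~2 I would argue via the commutant characterization of twirls: two groups that span the same algebra generate the same commutant, so they define the same projection.
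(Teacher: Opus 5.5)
Your proof is correct. The paper gives no proof of its own to compare against: it imports the statement from \cite{walter2025randompurificationchannelarbitrary}. It only records the operational reading of \eqref{eqn:general_random_purification_channel}: measure $\{P_\lambda\}_\lambda$, discard $L_\lambda$, and prepare the maximally entangled state on $L_\lambda L'_\lambda$ and the maximally mixed state on $R'_\lambda$. That is exactly your Step~1.

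What you add is the verification of item~1. You use the fact that a Haar twirl is the Hilbert--Schmidt-orthogonal projection onto the commutant of $I\otimes(\mathbb{C}G)^T$, which reduces everything to the largest group $\prod_\lambda U(R_\lambda)$. There, independent block phases kill the cross terms, and a single-block Haar twirl gives $\tr_{R'_\lambda}(\cdot)\otimes I_{R'_\lambda}/\dim R'_\lambda$.

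An alternative is to keep the given $G$ and note that $\mathbb{C}G=\mathcal{A}'$ forces the $G$-actions on the $R_\lambda$ to be irreducible and pairwise inequivalent. Schur's lemma then gives the same block formula directly. This is closer to how the paper handles its abelian special case via Schur orthogonality in the proof of \Cref{thm:algo-purification}. Your route has the advantage of making the claimed independence of the choice of $G$ explicit.

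One thing you should state rather than leave as ``bookkeeping'' is the convention the statement itself leaves implicit. The decomposition $\mathcal{H}'\cong\bigoplus_\lambda L'_\lambda\otimes R'_\lambda$ has to be the one induced from that of $\mathcal{H}$ through $|\tilde\Phi_0\rangle$. Concretely, expand $|\tilde\Phi_0\rangle$ in a product basis adapted to $\bigoplus_\lambda L_\lambda\otimes R_\lambda$, using $(U\otimes\bar U)|\tilde\Phi_0\rangle=|\tilde\Phi_0\rangle$. Also read $g^T$ intrinsically via $(g\otimes I)|\tilde\Phi_0\rangle=(I\otimes g^T)|\tilde\Phi_0\rangle$.

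With this convention, your splitting of $|\tilde\Phi_0\rangle$ and $g^T=\bigoplus_\lambda I_{L'_\lambda}\otimes U_\lambda^T$ are immediate. The channel also does not depend on which admissible decomposition you fix. Any two such decompositions differ blockwise by $V_\lambda\otimes Y_\lambda$, up to relabelling, and \eqref{eqn:general_random_purification_channel} is covariant under such changes.
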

From \eqref{eqn:general_random_purification_channel}, we see that to implement the random purification channel $\mc{P}_{\mc A}$, we first measure the given state $\rho$ using projectors $\{P_{\lambda}\}_{\lambda \in \Lambda}$, obtaining outcome $\lambda$ and the (un-normalised) post-measurement state $P_{\lambda} \rho P_{\lambda}$ on $L_{\lambda} \otimes R_{\lambda}$. Then, we discard the state on register $L_{\lambda}$, prepare the maximally entangled state on $L_{\lambda} \otimes L_{\lambda}^{\prime}$ and the maximally mixed state on register $R_{\lambda}^{\prime}$.

\section{Structured Sample Amplification}\label{section:ssa}
We now formally define the notion of sample amplification for distributions with a special structure induced by a function class. Namely, we consider the distributions of the form $(\mc V_n, f)$ over $\Z_2^n \times \Z_2$, where $\mc V_n$ is some fixed distribution over the $n$-bit input marginal, and where $f$ is some Boolean function.

\begin{definition}[Structured Sample Amplification]\label{def:structured-sample-amplification}
    Let $\mathcal{F}$ be a class of Boolean functions on $n$ bits and let $\mathcal{V}_n$ be some distribution over the domain $\mathbb{Z}^n_2$. We say that the class of distributions $\mathcal{D}= \{D_f\}_{f \in \mathcal{F}} = \{(\mathcal{V}_n, f)\}_{f \in \mathcal{F}}$ over the domain $\mathcal{X} = \mathbb{Z}^n_2 \times \mathbb{Z}_2$ \emph{admits a $(t, t+m, \epsilon)$-sample amplification scheme} if there exists a stochastic map $T_{\mathcal{D}, t, m, \epsilon}: \mathcal{X}^{t} \rightarrow \mathcal{X}^{t+m}$ such that, 
    \begin{equation}
        \sup_{f \in \mathcal{F}} \tv\lr{D_f^{\otimes t} \circ T^{-1}_{\mathcal{D}, t, m, \epsilon}, D_f^{\otimes t+m}} \leq \epsilon \, ,
    \end{equation}
    where $\tv$ denotes the total variation distance. \
   The \emph{minimax sample amplification error} for the distribution class $\mathcal{D}$ is defined by minimising the sample amplification error over all stochastic maps $T_{SA}: \mathcal{X}^{t} \rightarrow \mathcal{X}^{t+m}$,
    \begin{equation}
        \epsilon^{\star}_{SA}(\mathcal{D}, t, t+m) := \min_{T_{SA}} \sup_{f \in \mathcal{F}}\tv\lr{D_f^{\otimes t} \circ T_{SA}^{-1}, D_f^{\otimes t+m}}.
    \end{equation}
\end{definition}

This definition of structured sample amplification is simply that of general sample amplification as in \cite{axelrod2019sampleamplificationincreasingdataset, axelrod2024statisticalcomplexitysampleamplification} with the additional assumption that the unknown distribution is of the form $(\mc V_n, f)$, where $\mc V_n$ is known and where $f\in \mc F$ for some known function class $\mc F$.
Structured sample amplification can alternatively be thought of as a game between two parties, an amplifier and a distinguisher, where the distinguisher knows the distribution. Then, the minimax sample amplification error is the maximum advantage achievable by an adversarial distinguisher $\mc A$ in telling the true samples and amplified samples apart: 
\begin{equation}\label{eq:sa-error-distinguisher-formulation}
    \epsilon^{\star}_{SA}(\mc{D}, t, t+1) = \min_{T_{SA}} \sup_{f \in \mc F}\max_{\mc{A}}\mathrm{Adv}(\mc{A}, \mc{V}_n,f, T_{SA}, t) \, ,
\end{equation}
where
\begin{equation}
    \mathrm{Adv}(\mc{A}, f, T_{SA}, t) = \lra{\Pr_{Z_{t+1}}[\mc{A}^f(Z_{t+1}) = 1] - \Pr_{Z_{t}}[\mc{A}^f(T_{SA}(Z_{t})) = 1]}.
\end{equation}
We omit the superscript $f$ from the distinguisher, but it is to be understood that the distinguisher has knowledge of underlying Boolean function (and of the sample amplification procedure).

\subsection{General Lower Bound}

We denote the random sequence of $t$ elements drawn independently from the distribution $(\mc{V}_n, f)$ by $Z_t=(Z_t^{(i)})_{i=1}^t = (X_t^{(i)},Y_t^{(i)})_{i=1}^t $. A realization of this random variable will be denoted by $z_t=(z_t^{(i)})_{i=1}^t = (x_t^{(i)},y_t^{(i)})_{i=1}^t $. For any function $f\in\mc F$, define $b_{\mc F}(f)$ as the minimum natural number $t$ such that there exists at least one training dataset of size $t$ which uniquely determines $f$ in $\mc F$. 
That is, 
\begin{equation}
    b_{\mc F}(f) = \min \{t\in\mathbb{N} ~|~\exists z_t \textrm{ s.t. } \forall f'\in\mathcal{F}\setminus\{f\}~\exists 1\leq i\leq t:y_t^{(i)}\neq f'(x_t^{(i)})\}\, .
\end{equation}
Next, let $p^t_{\mc F}(f)$ be the probability of the random sequence $Z_{t}$ determining $f$ uniquely, i.e., 
\begin{equation}
    p^t_{\mc F}(f) = \frac{\lra{\{x_t\in(\{0,1\}^n)^t ~|~ z_t = (x_t, f(x_t)) \text{ uniquely determines $f$}\}}}{2^{nt}} \, .
\end{equation}
Then, define
\begin{equation}
    \begin{split}
        &b_{\mc{F}} = \max_{f \in \mc{F}}  b_{\mc F}(f)\, ,\hspace{3ex}  \hspace{3ex} p_{\mc{F}} = \min_{f\in \mc F} p^{b_{\mc F}}_{\mc F}(f) \, .
    \end{split}
\end{equation}
We mention that $b_{\mc F}(f)$ is the minimum size of a teaching sequence for the function $f$ in $\mc F$. Consequently, $b_{\mc F}$ equals the teaching dimension for the function class $\mc F$~\cite{GOLDMAN199520}. Also, for uniform $\mc V_n$ we can understand $p^{t}_{\mc F}(f)$ as the fraction of sequence of length $t$ that are teaching sequences for the function $f$ in $\mc F$. 
Finally, we let $n_{\mc F}(z_t)$ be the number of hypotheses in $\mc F$ consistent with the sample $z_t$. That is, 
\begin{equation}
    n_{\mc F}(z_t)
    = |\{f\in\mc F~|~f(x_t^{(i)}=y_t^{(i)}~\forall 1\leq i\leq t\}|\, .
\end{equation}

With this notation established, we can now state and proof our first main result.

\begin{theorem}[Formal Statement of Theorem~\ref{thm:classical-results-informal}, Point 1: Error Lower Bound]\label{thm:semi-general-lower-bound}
    The minimax sample amplification error for the distribution class $\mc{D} = \{(\mc{V}_n, f)\}_{f \in \mc{F}}$ over the domain $\Z_2^n \times \Z_2$ in amplifying $t$ samples to $t+1$ samples for any $t \leq b_{\mc{F}} - 1$ satisfies
    \begin{equation}
         \epsilon_{SA}^{\star}(\mathcal{D}, t, t+1) \geq p_{\mathcal{F}}/2, \hspace{3ex} \forall t \leq b_{\mathcal{F}} - 1.
    \end{equation}
\end{theorem}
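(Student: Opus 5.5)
The first step is a monotonicity reduction to $t=b_{\mc F}-1$. Suppose $T$ amplifies $t$ samples to $t+1$ with error $\epsilon$, and let $t'\geq t$. Given $t'$ samples, apply $T$ to the first $t$ of them and append the remaining $t'-t$ untouched. By the data-processing inequality for $\tv$ (and independence of the untouched samples), this amplifies $t'$ samples to $t'+1$ with error at most $\epsilon$. Hence $\epsilon^\star_{SA}(\mc D,t,t+1)$ is non-increasing in $t$, and it suffices to prove $\epsilon^\star_{SA}(\mc D,b_{\mc F}-1,b_{\mc F})\geq p_{\mc F}/2$.

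For $t=b_{\mc F}-1$ I would use the distinguisher formulation \eqref{eq:sa-error-distinguisher-formulation} together with the triangle-inequality idea \eqref{eq:sa-learning-triangle-eq}, applied to \emph{exact identification}. Fix any amplifier $T$ and run the following learner $L$ on $t$ samples: amplify to $t+1=b_{\mc F}$ samples with $T$, then output $f$ if the amplified sample is consistent with exactly one $f\in\mc F$, and output ``fail'' otherwise. Under true data from $f$, the event ``the $b_{\mc F}$ samples uniquely determine $f$'' has probability $p^{b_{\mc F}}_{\mc F}(f)\geq p_{\mc F}$. This event is a test the $f$-aware distinguisher $\mc A^f$ can run, so $L$ identifies $f$ with probability at least $p_{\mc F}-\mathrm{Adv}(\mc A^f,f,T,t)$, for every $f$.

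It then remains to show that no procedure using only $t=b_{\mc F}-1$ samples can identify every $f$ with probability exceeding $1-p_{\mc F}/2$, or some analogous bound strong enough to force $\max_f \mathrm{Adv}\geq p_{\mc F}/2$. The plan is a two-point/posterior argument. Take $f^\ast$ with $b_{\mc F}(f^\ast)=b_{\mc F}$; then no sample $z_t$ labelled by $f^\ast$ determines $f^\ast$, so every such $z_t$ is also consistent with some $f'\neq f^\ast$. Conditional on the inputs $x_t$, the likelihood of $z_t$ is $\mc V_n^{\otimes t}(x_t)$ for every consistent hypothesis, so the data cannot distinguish $f^\ast$ from $f'$ on these realizations. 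Pairing $f^\ast$ with the consistent alternatives, the success probabilities of $L$ on $f^\ast$ and on its alternatives cannot all be close to $p_{\mc F}$. Combining $\Pr_{f^\ast}[L=f^\ast]+\Pr_{f'}[L=f']\leq 1+(\text{overlap loss})$ with the lower bound $p_{\mc F}-\mathrm{Adv}$ for both functions should give $\mathrm{Adv}\geq p_{\mc F}/2$ for one of them.

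I expect the main obstacle to be this indistinguishability step. The set of $x_t$ on which $f^\ast$ and a fixed $f'$ agree depends on $x_t$, and the consistent alternatives may include functions $f'$ with $b_{\mc F}(f')<b_{\mc F}$. So a clean pointwise pairing is not automatic. My first attempt is to put a uniform prior on the consistent set and use that the posterior given $z_t$ is uniform on the $n_{\mc F}(z_t)\geq 2$ consistent functions whenever the labels come from $f^\ast$. That bounds the probability that $L$ (equivalently, the amplified sample) singles out the true function by $1/2$ of the mass on which true samples would do so, which yields the factor $1/2$ in $p_{\mc F}/2$.
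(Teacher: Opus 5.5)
Your monotonicity reduction and the inequality $\Pr_f[L=f]\geq p^{b_{\mc F}}_{\mc F}(f)-\mathrm{Adv}(\mc A^f,f,T,t)$ are fine, and your plan is essentially the paper's route. The step you flagged, however, is a genuine gap that cannot be closed.

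A uniform-posterior argument bounds only a prior average $\sum_f\pi(f)\Pr_f[L=f]$. It never bounds $\Pr_{f^\ast}[L=f^\ast]$ for the fixed $f^\ast$ with $b_{\mc F}(f^\ast)=b_{\mc F}$. Consider an amplifier that, whenever its input is consistent with $f^\ast$, appends a fresh sample labelled by $f^\ast$. Under $f^\ast$ it outputs exactly $D_{f^\ast}^{\otimes b_{\mc F}}$, so the amplified sample singles out $f^\ast$ with probability $p^{b_{\mc F}}_{\mc F}(f^\ast)$, not half of it. The pairing bound $\Pr_{f^\ast}[L=f^\ast]+\Pr_{f'}[L=f']\leq 1+\tv(D_{f^\ast}^{\otimes t},D_{f'}^{\otimes t})$ does not rescue this. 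It only yields $\max_f\mathrm{Adv}\geq p_{\mc F}-\tfrac12\bigl(1+\tv(D_{f^\ast}^{\otimes t},D_{f'}^{\otimes t})\bigr)$, which is vacuous once $p_{\mc F}\leq 1/2$.

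To get a worst-case statement you must average over a prior. That prior puts weight on functions that may have smaller teaching number, whose gain from one extra sample can be small. This loss is real: the statement is false for general $\mc F$. Take uniform $\mc V_1$ and $\mc F=\{f_0\equiv 0,\ f_1(x)=x,\ f_2(x)=1+x\}$ (for larger $n$, let the functions depend on the first input bit only). Then $b_{\mc F}(f_0)=2$, $b_{\mc F}(f_1)=b_{\mc F}(f_2)=1$, $p^2_{\mc F}(f_0)=1/2$ and $p^2_{\mc F}(f_1)=p^2_{\mc F}(f_2)=3/4$, so the claim reads $\epsilon^\star_{SA}(\mc D,1,2)\geq 1/4$.

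Now amplify $(x_1,y_1)$ by keeping it and appending one fresh sample. If $y_1=1$, draw it from the identified function. If $y_1=0$, the consistent set is $\{f_0,f_j\}$ ($j=1$ if $x_1=0$, $j=2$ if $x_1=1$), and you draw from $\tfrac23D_{f_0}+\tfrac13D_{f_j}$. The first sample is untouched and $\tv(D_{f_0},D_{f_j})=1/2$. So the error is $\tfrac13\cdot\tfrac12=\tfrac16$ under $f_0$ and $\tfrac12\cdot\tfrac23\cdot\tfrac12=\tfrac16$ under $f_1$ and $f_2$. Hence $\epsilon^\star_{SA}(\mc D,1,2)\leq 1/6<1/4$.

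The paper's proof fails at the same point. Proposition~\ref{prop:sa-con} is exactly this pointwise claim; it tacitly assumes the amplifier picks among consistent hypotheses uniformly. The amplifier above violates it: under $f_0$ it raises the acceptance probability of Algorithm~\ref{alg:distinguisher-general-lower-bound} from $1/2$ to $7/12$.

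What survives is the averaged version of your idea. Put the uniform prior on $\mc F$ and compose any amplifier $T$ with the uniformly-random-consistent learner. The result is a $(b_{\mc F}-1)$-sample learner, so its average success is at most the Bayes value. This gives $\max_f\tv(D_f^{\otimes(b_{\mc F}-1)}\circ T^{-1},D_f^{\otimes b_{\mc F}})\geq\frac{1}{|\mc F|}\sum_{f}\Delta_f$, where $\Delta_f=\mathbb{E}_f[1/n_{\mc F}(Z_{b_{\mc F}})]-\mathbb{E}_f[1/n_{\mc F}(Z_{b_{\mc F}-1})]\geq 0$.

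Condition on $Z_{b_{\mc F}-1}$. The extra sample makes $n_{\mc F}=1$ with some probability $q$ satisfying $\mathbb{E}_f[q]=p^{b_{\mc F}}_{\mc F}(f)$, and it never decreases $1/n_{\mc F}$. So $\Delta_f\geq\mathbb{E}_f[q\,(1-1/n_{\mc F}(Z_{b_{\mc F}-1}))]\geq p^{b_{\mc F}}_{\mc F}(f)/2$ whenever $b_{\mc F}(f)=b_{\mc F}$. Hence the claimed bound holds when every $f\in\mc F$ has teaching number $b_{\mc F}$. This is the case for linear spaces of functions, so the parity and linear-code corollaries can be rescued this way. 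In the example above, the averaged bound equals $(\tfrac14+\tfrac18+\tfrac18)/3=\tfrac16$, matching the amplifier.
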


\begin{proof}

We can focus on the case $t = b_{\mc F}-1$ for now due to monotonicity of sample amplification
\begin{equation}
    \epsilon_{SA}^{\star}(\mathcal{D}, t_1, t_1+1) \geq \epsilon_{SA}^{\star}(\mathcal{D}, t_2, t_2+1), \hspace{3ex} \forall t_1 \leq t_2.
\end{equation}
 This can be understood by noting that if there is a sample amplification procedure by one sample from $t_1$ samples, than this implies sample amplification from any $t_2 \geq t_1$ because we can use sample amplification on first $t_1$ samples out of $t_2$ to produce an extra sample. We prove this theorem using a distinguisher-based approach as mentioned in the Definition~\ref{def:structured-sample-amplification}. 

We use the simple distinguisher in \Cref{alg:distinguisher-general-lower-bound}. 
\begin{algorithm}
        \caption{Distinguisher $\mc A$}\label{alg:distinguisher-general-lower-bound}
        \algorithmicrequire{ $t$ samples, description of $f$.} \\
        \algorithmicensure{ Accept or Reject} \\
        1. Run a consistent learning algorithm that outputs a uniformly random consistent hypothesis $\hat{f}$ from $\mc F$ using the $t$ samples.\\
        2. Accept (aka, output $1$) if $\hat{f} = f$, else reject (aka, output $0$).
\end{algorithm}

  Recall that the unknown  distribution is of the form $(\mathcal{V}_n, f)$ for some $f \in \mathcal{F}$. Now, we denote by $Z_{b_{\mc F}(f)}$ the true $b_{\mc F}(f)$ samples and by $Z_{b_{\mc F}(f)}^{SA} = T_{SA}(Z_{b_{\mc F}(f)-1})$ the amplified samples, starting from $Z_{b_{\mc F}(f)-1}$. Then
    \begin{equation}
        \Pr\lrq{\mc{A}(Z_{b_{\mc F}(f)}) = 1} = \sum_{z_{b_{\mc F}(f)}} \Pr\lr{z_{b_{\mc F}(f)}} \frac{1}{n_{\mc F}(z_{b_{\mc F}(f)})}\, ,
    \end{equation}
    where $n_{\mc F}(Z_{b_{\mc F}(f)})$ is the number of hypotheses consistent with the data $Z_{b_{\mc F}(f)}$. 
    Moreover, we mention that sample amplification cannot provide new information useful for learning which we argue in Proposition~\ref{prop:sa-con}. Thus, 
    \begin{equation}
        \Pr\lrq{\mc{A}(Z_{b_{\mc F}(f)}^{SA}) = 1} \stackrel{(a)}{\leq} \Pr\lrq{\mc{A}(Z_{b_{\mc F}(f)-1})= 1} = 
        \sum_{z_{b_{\mc F}(f)-1}} \Pr\lrq{z_{b_{\mc F}(f)-1}} \frac{1}{n_{\mc F}(z_{b_{\mc F}(f)-1})}\, .
    \end{equation}
    where $(a)$ is proven in Proposition~\ref{prop:sa-con} below.
    Thus, the advantage of the distinguisher satisfies
    \begin{equation}
        \mathrm{Adv}(\mc A) \geq \sum_{z_{b_{\mc F}(f)}} \Pr\lrq{z_{b_{\mc F}(f)}} \frac{1}{n_{\mc F}(z_{b_{\mc F}(f)})} - \sum_{z_{b_{\mc F}(f)-1}} \Pr\lrq{z_{b_{\mc F}(f)-1}} \frac{1}{n_{\mc F}(z_{b_{\mc F}(f)-1})} \, ,
    \end{equation}
    where we used that the difference is non-negative because $b_{\mc F}(f)$ samples are at least as good $b_{\mc F}(f)-1$ samples for a learner that samples randomly from the set of consistent hypotheses.
    Moreover, denote by $p_f^{b_{\mc F}(f)}$ (as mentioned earlier), the probability that $Z_{b_{\mc F}(f)}$ uniquely specifies $f$, then
    \begin{equation}
        \begin{split}
           \mathrm{Adv}(\mc A)  \geq p^{b_{\mc F}(f)}_{\mc F}(f) + \sum_{\substack{z_{b_{\mc F}(f)} \\ n_{\mc F}(z_{b_{\mc F}(f)}) \geq 2}} \Pr\lrq{z_{b_{\mc F}(f)}}\frac{1}{n_{\mc F}(z_{b_{\mc F}(f)})} - \sum_{z_{b_{\mc F}(f)-1}} \Pr\lrq{z_{b_{\mc F}(f)-1}}\frac{1}{n_{\mc F}(z_{b_{\mc F}(f)-1})}.
        \end{split}
    \end{equation}
    Now, we can group the two sums together as,
    \begin{equation}
        \begin{split}
            \mathrm{Adv}(\mc A)  \geq p^{b_{\mc F}(f)}_{\mc F}(f) + \sum_{z_{b_{\mc F}(f)-1}} \lr{\sum_{\substack{z_{b_{\mc F}(f)} \\ n_{\mc F}(z_{b_{\mc F}(f)}) \geq 2 \\ z_{b_{\mc F}(f)}|_{b_{\mc F}(f) - 1} = z_{b_{\mc F}(f)}-1}} \Pr\lrq{z_{b_{\mc F}(f)}}\frac{1}{n_{\mc F}(z_{b_{\mc F}(f)})} -  \Pr\lrq{z_{b_{\mc F}(f)-1}}\frac{1}{n_{\mc F}(z_{b_{\mc F}(f)-1})}}.
        \end{split}
    \end{equation}
    It is easy to see that $n_{\mc F}(z_{b_{\mc F}(f)}) \leq n_{\mc F}(z_{b_{\mc F}(f)-1})$ for $z_{b_{\mc F}(f)}|_{b_{\mc F}(f) - 1} = z_{b_{\mc F}(f)-1}$. Thus,
    \begin{equation}
        \begin{split}
            \mathrm{Adv}(\mc A)  \geq p^{b_{\mc F}(f)}_{\mc F}(f) + \sum_{z_{b_{\mc F}(f)-1}} \lr{\sum_{\substack{z_{b_{\mc F}(f)} \\ n_{\mc F}(z_{b_{\mc F}(f)}) \geq 2 \\ z_{b_{\mc F}(f)}|_{b_{\mc F}(f) - 1} = z_{b_{\mc F}(f)-1}}} \Pr\lrq{z_{b_{\mc F}(f)}} -  \Pr\lrq{z_{b_{\mc F}(f)-1}}}\frac{1}{n_{\mc F}(z_{b_{\mc F}(f)-1})}.
        \end{split}
    \end{equation}
 Now, note that
    \begin{equation}
        \sum_{\substack{z_{b_{\mc F}(f)} \\ n_{\mc F}(z_{b_{\mc F}(f)}) \geq 2 \\ z_{b_{\mc F}(f)}|_{b_{\mc F}(f) - 1} = z_{b_{\mc F}(f)-1}}} \Pr\lrq{z_{b_{\mc F}(f)}} = \Pr\lrq{z_{b_{\mc F}(f) - 1}} \times \Pr\lrq{z_{b_{\mc F}(f)} \text{ with } z_{b_{\mc F}(f)}|_{b_{\mc F}(f) -1 } = z_{b_{\mc F}(f)-1}, n_{\mc F}(z_{b_{\mc F}(f)}) \geq 2}.
    \end{equation}
   Sightly abusing notation, we now denote by $q(z_{b_{\mc F}(f)-1})$ the probability that $z_{b_{\mc F}(f)}$ with 
   $z_{b_{\mc F}(f)}|_{b_{\mc F}(f) - 1} = z_{b_{\mc F}(f)-1}$, uniquely specifies $f$. It is easy to see that,
    \begin{equation}
        \sum_{z_{b_{\mc F}(f)-1}} q(z_{b_{\mc F}(f)-1}) = p_{\mc F}^{b_{\mc F}(f)}(f).
    \end{equation}
    Thus, $q(z_{b_{\mc F}(f)-1}) \leq p_f$ and hence, the probability that $z_{b_{\mc F}(f)}$ with first $b_{\mc F}(f)-1$ samples fixed as $z_{b_{\mc F}(f)-1}$ does not uniquely specify $f$ is given as $1 - q(z_{b_{\mc F}(f)-1}) \geq 1 - p_{\mc F}^{b_{\mc F}(f)}(f)$. Thus, 
    \begin{equation}
        \sum_{\substack{z_{b_{\mc F}(f)} \\ n_{\mc F}(z_{b_{\mc F}(f)}) \geq 2 \\ z_{b_{\mc F}(f)}|_{b_{\mc F}(f) - 1} = z_{b_{\mc F}(f)-1}}} \Pr\lrq{z_{b_{\mc F}(f)}} \geq \Pr\lrq{z_{b_{\mc F}(f)-1}} \times (1-p_{\mc F}^{b_{\mc F}(f)}(f)).
    \end{equation}
    Hence, we get
    \begin{equation}
        \begin{split}
            \mathrm{Adv}(\mc A) &\geq p_{\mc F}^{b_{\mc F}(f)}(f)  - \sum_{z_{b_{\mc F}(f)-1}} \Pr\lrq{z_{b_{\mc F}(f)-1}}\left(1 - 1 + p_{\mc F}^{b_{\mc F}(f)}(f) \right)\frac{1}{n_{\mc F}(z_{b_{\mc F}(f)-1})} \\
            &= p_{\mc F}^{b_{\mc F}(f)}(f) \left(1 - \sum_{z_{b_{\mc F}(f)-1}} \Pr\lrq{z_{b_{\mc F}(f)-1}}\frac{1}{n_{\mc F}(z_{b_{\mc F}(f)-1})}\right).
        \end{split}
    \end{equation}
    Now, note that $n_{\mc F}(z_{b_{\mc F}(f)-1}) \geq 2$ by definition of $b_{\mc F}(f)$, thus
    \begin{equation}
        \mathrm{Adv}(\mc A) \geq p_{\mc F}^{b_{\mc F}(f)}(f) /2.
    \end{equation}
    Thus, the sample amplification error for any unknown distribution $(\mc{V}_n, f)$ from $b_{\mc F}(f)-1$ samples is lower bounded by $p_f^{b_{\mc F}(f)}/2$. Using \Cref{eq:sa-error-distinguisher-formulation}, we get that for any $f'\in\argmax_{f\in\mc F}b_{\mc F}(f)$, 
    \begin{equation}
        \epsilon^{\star}_{SA}(\mc D, b_{\mc F}-1, b_{\mc F}) 
        \geq p_{\mc F}^{b_{\mc F}}(f') / 2
        \geq \min_{f \in \mc F} p_{\mc F}^{b_{{\mc F}}}({f})/2 = p_{\mc F}/2,
    \end{equation}
    where $b_{\mc F} = \max_{f \in \mc{F}} b_{\mc F}(f)$. To argue the same for any $t < b_{\mc F}-1$, we assume that there is a sample amplification algorithm that, given $t^{\prime} < b_{\mc F}-1$ samples, produce $t^{\prime}+1$ samples with error $\epsilon < p_{\mc F}/2$. Then this would contradict the case for $t = b_{\mc F}-1$, since one could use this sample amplification algorithm on $t^{\prime}$ samples and append the rest of the samples unchanged to obtain $b_{\mc F}$ samples with sample amplification error $\epsilon < p_{\mc F}/2$. Thus, 
    \begin{equation}
        \epsilon^{\star}_{SA}(\mc{D},t, t+1) \geq \frac{p_{\mc F}}{2}, \hspace{3ex} t \leq b_{\mathcal{F}}-1.
    \end{equation}
\end{proof}

\begin{proposition}\label{prop:sa-con}
    The probability of acceptance by the distinguisher defined in Algorithm~\ref{alg:distinguisher-general-lower-bound} for $Z_{b_{\mc F}(f)}^{SA} = T_{SA}(Z_{b_{\mc F}(f) - 1})$ is upper bounded as
    \begin{equation}
        \Pr\lrq{\mc{A}(Z_{b_{\mc F}(f)}^{SA}) = 1} \leq \Pr\lrq{\mc{A}(Z_{b_{\mc F}(f)-1})= 1}.
    \end{equation}
\end{proposition}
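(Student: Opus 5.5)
The plan is to reduce the inequality to a pointwise statement about a single realisation $z_{b-1}$ of the $b-1:=b_{\mc F}(f)-1$ true samples, and then to average over the truth. The starting point is that the law of $T_{SA}(z_{b-1})$ depends only on $z_{b-1}$ and not on $f$. Let $C(z_{b-1})\subseteq\mc F$ be the set of hypotheses consistent with $z_{b-1}$, so that $|C(z_{b-1})| = n_{\mc F}(z_{b-1})$.

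\textbf{Step 1 (pointwise bound).} Fix an output $z'$ of the amplifier. The distinguisher of Algorithm~\ref{alg:distinguisher-general-lower-bound} accepts $z'$ under truth $g$ with probability $\mathbb{1}[g \text{ consistent with } z']/n_{\mc F}(z')$, which is $0$ if nothing is consistent. Summing over $g$ gives
\begin{equation*}
\sum_{g\in C(z_{b-1})}\Pr[\mc A^{g}(z')=1]\le \sum_{g\in\mc F}\frac{\mathbb{1}[g\text{ consistent with }z']}{n_{\mc F}(z')}\le 1 .
\end{equation*}
Taking the expectation over the randomness of $T_{SA}$ then yields
\begin{equation*}
\frac{1}{|C(z_{b-1})|}\sum_{g\in C(z_{b-1})}\Pr[\mc A^{g}(T_{SA}(z_{b-1}))=1]\le \frac{1}{n_{\mc F}(z_{b-1})} .
\end{equation*}
The right-hand side is exactly the acceptance probability of $\mc A$ run directly on $z_{b-1}$.

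\textbf{Step 2 (averaging over the truth).} I integrate Step 1 against the law of $Z_{b-1}$. Relabelling the pairs $(z_{b-1},g)$ with $g\in C(z_{b-1})$ by drawing $g$ uniformly from $\mc F$ and then $z_{b-1}\sim D_g^{\otimes (b-1)}$, I aim to conclude that the acceptance probability on amplified data, averaged over $f$, is at most the same average on unamplified data. I will need care with the change of measure, because the weights $1/|C|$ do not match a uniform prior exactly. I will use the fact that $z_{b-1}$ has the same marginal $\mc V_n^{\otimes(b-1)}$ on inputs for every $g$, so that the conditional law of $g$ given the inputs and labels is uniform on $C(z_{b-1})$.

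\textbf{Step 3 (from the average to the fixed $f$; the main obstacle).} For an arbitrary $T_{SA}$, a fixed-$f$ inequality cannot follow from Step 2 alone. An amplifier that commits to some $g\in C(z_{b-1})$ and appends data teaching $g$ beats the bound for $f=g$ and loses for the other members of $C(z_{b-1})$. I would therefore first symmetrise the amplifier, which is harmless for the minimax error used in Theorem~\ref{thm:semi-general-lower-bound}. Let $\Gamma$ be a group of bijections of $\Z_2^n\times\Z_2$ that preserve the known input marginal, act transitively on $\{D_f\}_{f\in\mc F}$, and commute with the distinguisher. For the linear classes treated later, the label shifts $(x,y)\mapsto(x,y\oplus h(x))$ with $h\in\mc F$ do this. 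Replacing $T_{SA}$ by $\Ex_{\gamma\in\Gamma}\gamma^{-1}\circ T_{SA}\circ\gamma$ does not increase $\sup_f \tv$, by convexity of the total variation distance and invariance under bijections. For the symmetrised amplifier, both sides of the claimed inequality are independent of $f$, so each equals its $f$-average, and Step 2 gives the statement for every fixed $f$.

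I expect this last step to be where the real work lies. Verifying that the symmetry group exists and commutes with $\mc A$ is straightforward for linear $\mc F$ with uniform $\mc V_n$. For a general class one would need either such a transitive symmetry group or a restriction of the proposition to amplifiers that are equivariant in this sense.
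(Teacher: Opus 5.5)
Your proposal is sound, and the caveat in Step~3 is not overcaution. For a fixed $f$ and an arbitrary $T_{SA}$ the statement is false, and the paper's proof has a gap exactly where you locate one. Here is a counterexample. Take parities with $f=f_0\equiv 0$, so $b_{\mc F}(f_0)=n$. Let $T_{SA}$ append one fresh sample $(x,\hat a\cdot x)$ with $x$ uniform, where $\hat a$ is the lexicographically smallest parity consistent with the input. On data from $f_0$ one always has $\hat a=0$, so $T_{SA}(Z_{n-1})$ is distributed exactly as $Z_n$: the amplifier has zero error for this $f$. Its acceptance probability is therefore $\Pr[\mc A(Z_n)=1]>\Pr[\mc A(Z_{n-1})=1]$. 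Pointwise, an input with $n_{\mc F}(z_{n-1})=2$ is accepted after amplification with probability $3/4>1/2$, contradicting \eqref{eqn:sa-consistent-relation}.

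The paper's argument hides this in the claim that the amplifier ``can do so at best by choosing'' the surviving hypotheses uniformly at random. The ratio $\binom{N-1}{k-1}/\binom{N}{k}=k/N$ is the fraction of consistent hypotheses that survive, i.e.\ precisely your Step~1 average over $g\in C(z_{b-1})$. It controls the acceptance probability for a fixed $f$ only if the amplifier's choice is uniform relative to the unknown truth.

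So your Steps~1--2 are the rigorous content of the paper's counting argument, and they prove the inequality on average over $f$. The change of measure needs no extra care: with a uniform prior, a known input marginal and a common sample size, the posterior is exactly uniform on $C(z_{b-1})$.

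Step~3 is the ingredient the paper lacks. It yields the fixed-$f$ inequality for equivariant amplifiers. That is all the minimax bound of Theorem~\ref{thm:semi-general-lower-bound} needs, since symmetrisation does not increase $\sup_f\tv$. Label shifts cover parities and the linear classes of Theorem~\ref{thm:coding-theory-general}, and $\mathrm{GL}_m(\Z_2)$ does the same for $\mc D_{m,n}$. Under a transitive group, $b_{\mc F}(f)$ and $p^{b_{\mc F}}_{\mc F}(f)$ are constant in $f$, which also removes the $f$-dependence of the sample size in Step~2.

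Without a transitive symmetry, the fixed-$f$ statement is unavailable even for the purposes of Theorem~\ref{thm:semi-general-lower-bound}. Plain averaging over a uniform prior then gives only the weaker minimax bound $\frac{1}{2|\mc F|}\sum_{f:\,b_{\mc F}(f)=b_{\mc F}}p^{b_{\mc F}}_{\mc F}(f)$.

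The analogous step in the proof of Theorem~\ref{thm:structured-cloning} has the same defect, since a cloner can likewise commit to one fixed hidden subgroup. For stabilizer states, permuting the Bell labels by $\mathrm{GL}_{2n}(\Z_2)$ supplies the needed symmetry.
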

\begin{proof}
    Take the realisation $z_{b_{\mc F}(f) - 1}$ of $Z_{b_{\mc F}(f) - 1}$, and write $z_{b_{\mc F}(f)}^{SA} =  T_{SA}(z_{b_{\mc F}(f) - 1})$. The consistent learner in the distinguisher works by choosing a random consistent hypotheses for the given data and the distinguisher compares it to the original function. We argue that a consistent learner the amplified sample $z_{b_{\mc F}(f)}^{SA}$ cannot perform better than on $z_{b_{\mc F}(f) - 1}$. To show this, we compare the number of consistent hypotheses for both cases. 
    
    The sample $z_{b_{\mc F}(f) - 1}$ admits $n_{\mc F}(z_{b_{\mc F}(f) - 1})$ consistent hypotheses, none of which is preferred. Thus, if the sample amplifier tries to decrease the number of consistent hypotheses to $n_{\mc F}(z^{SA}_{b_{\mc F}(f)})$ by adding one sample, it can do so at best by choosing $n_{\mc F}(z^{SA}_{b_{\mc F}(f)})$ hypotheses out of $n_{\mc F}(z_{b_{\mc F}(f) - 1})$ uniformly randomly. Then, the probability that the true hypothesis is among these $n_{\mc F}(z^{SA}_{b_{\mc F}(f)})$ hypotheses is 
    \begin{equation}
        \Pr \lrq{n_{\mc F}(Z^{SA}_{b_{\mc F}(f)}) \text{random  consistent hypotheses contain $f$}} \leq \frac{{n_{\mc F}(z_{b_{\mc F}(f) - 1})-1 \choose n_{\mc F}(z^{SA}_{b_{\mc F}(f)}) - 1}}{{n_{\mc F}(z_{b_{\mc F}(f) - 1})\choose n_{\mc F}(z^{SA}_{b_{\mc F}(f)})}} = \frac{n_{\mc F}(z^{SA}_{b_{\mc F}(f)})}{n_{\mc F}(z_{b_{\mc F}(f) - 1})} \, .
    \end{equation}
    Thus, the probability of success in learning from the amplified sample $z^{SA}_{b_{\mc F}(f)}$ by outputting a uniformly random consistent hypothesis is
    \begin{equation}\label{eqn:sa-consistent-relation}
        \Pr\lrq{\mc{A}(Z_{b_{\mc F}(f)}^{SA}) = 1 ~|~ Z_{b_{\mc F}(f)}^{SA} = z_{b_{\mc F}(f)}^{SA} = T_{SA}(z_{b_{\mc F}(f) - 1})} \leq \frac{n_{\mc F}(z^{SA}_{b_{\mc F}(f)})}{n_{\mc F}(z_{b_{\mc F}(f) - 1})}\frac{1}{n_{\mc F}(z^{SA}_{b_{\mc F}(f)})} = \frac{1}{n_{\mc F}(z_{b_{\mc F}(f) - 1})}.
    \end{equation}
    Now, to bound the overall acceptance probability of the distinguisher, we can take the sum over conditioned value above to get
    \begin{equation}
        \begin{split}
            \Pr\lrq{\mc{A}(Z_{b_{\mc F}(f)}^{SA}) = 1} 
            &= \sum_{z^{SA}_{b_{\mc F}(f)}} \Pr\lrq{z^{SA}_{b_{\mc F}(f)}} \times \Pr\lrq{\mc{A}(Z_{b_{\mc F}(f)}^{SA}) = 1 ~|~ Z_{b_{\mc F}(f)}^{SA} = z_{b_{\mc F}(f)}^{SA}}\\
            &\stackrel{(a)}{=} \sum_{z_{b_{\mc F}(f)-1}, T_{SA}} \Pr\lrq{T_{SA}(z_{b_{\mc F}(f) - 1})} \times \Pr\lrq{\mc{A}(Z_{b_{\mc F}(f)}^{SA}) = 1 ~|~ Z_{b_{\mc F}(f)}^{SA} = z_{b_{\mc F}(f)}^{SA} =  T_{SA}(z_{b_{\mc F}(f) - 1})}\\
            &\stackrel{(b)}{\leq}\sum_{z_{b_{\mc F}(f)-1}, T_{SA}}  \Pr\lrq{T_{SA}(z_{b_{\mc F}(f) - 1})}\frac{1}{n_{\mc F}(z_{b_{\mc F}(f) - 1})} \\
            &\stackrel{(c)}{=}\sum_{z_{b_{\mc F}(f)-1}}  \Pr\lrq{z_{b_{\mc F}(f)-1}}\frac{1}{n_{\mc F}(z_{b_{\mc F}(f) - 1})} \\
            &= \Pr\lrq{\mc{A}(Z_{b_{\mc F}(f)-1})= 1}
        \end{split}
    \end{equation}
    where $(a)$ follows by decomposing the randomness in $z^{SA}_{b_{\mc F}(f) }$ into the randomness in $z_{b_{\mc F}(f)-1 }$ and the intrinsic randomness of the sample amplifier; $(b)$ follows from~\eqref{eqn:sa-consistent-relation}; and $(c)$ follows by summing over randomness of the amplifier.
\end{proof}

We now generalize the above bound for sample amplification from any general $t$ samples.

\begin{theorem}[General Lower Bound]\label{thm:general-lower-bound}
    Let $d\in\mathbb{N}$.
    The minimax sample amplification error for the distribution class, $\mc{D} = \{(\mc{V}_n, f)\}_{f \in \mc{F}}$ over the domain $\Z_2^n \times \Z_2$ in amplifying $t$ samples to $t+1$ samples for any $t \leq d-1$ satisfies
    \begin{equation}
         \epsilon_{SA}^{\star}(\mathcal{D}, t, t+1) \geq \max_{f \in \mc{F}}\lrq{\frac{p^{d}_{\mc F}(f)}{2} - p^{d-1}_{\mc F}(f)}, \hspace{3ex}\forall t \leq d-1.
    \end{equation}
\end{theorem}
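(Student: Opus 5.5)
We follow the proof of Theorem~\ref{thm:semi-general-lower-bound}, with the teaching-dimension structure replaced by the quantity $p^{d-1}_{\mc F}(f)$. First, by the monotonicity of $\epsilon^{\star}_{SA}(\mc D,t,t+1)$ in $t$ (amplify the first $t$ samples and append the rest unchanged), it suffices to treat $t=d-1$. Fix $f\in\mc F$ and use the same distinguisher $\mc A$ from Algorithm~\ref{alg:distinguisher-general-lower-bound}: it outputs a uniformly random consistent hypothesis and accepts iff it equals $f$. For any amplifier, the advantage against this $\mc A$ lower bounds the amplification error at $f$. Maximizing over $f$ then gives the claimed $\max_f$, since a single amplifier must work for every $f$ simultaneously.

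Proposition~\ref{prop:sa-con} uses only that a uniformly random consistent learner cannot gain from an extra fabricated sample, so it applies verbatim with $b_{\mc F}(f)$ replaced by $d$:
\begin{equation}
\Pr[\mc A(T_{SA}(Z_{d-1}))=1]\le \Pr[\mc A(Z_{d-1})=1]=\Ex\lrq{1/n_{\mc F}(Z_{d-1})}.
\end{equation}
On the other side,
\begin{equation}
\Pr[\mc A(Z_d)=1]\ge \Pr[n_{\mc F}(Z_d)=1]=p^d_{\mc F}(f).
\end{equation}
Hence
\begin{equation}
\mathrm{Adv}(\mc A)\ge p^d_{\mc F}(f)-\Ex\lrq{1/n_{\mc F}(Z_{d-1})}.
\end{equation}

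It remains to bound the expectation. Split according to whether $n_{\mc F}(Z_{d-1})=1$ (probability $p^{d-1}_{\mc F}(f)$) or $n_{\mc F}(Z_{d-1})\ge 2$ (contributing at most $1/2$ per outcome). This gives
\begin{equation}
\Ex\lrq{1/n_{\mc F}(Z_{d-1})}\le p^{d-1}_{\mc F}(f)+\tfrac12\bigl(1-p^{d-1}_{\mc F}(f)\bigr).
\end{equation}
This crude split yields only $p^d-\tfrac12-\tfrac12p^{d-1}$, which is too weak. To get the stated bound, we instead mimic the regrouping in the proof of Theorem~\ref{thm:semi-general-lower-bound}, conditioning on the first $d-1$ samples. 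On prefixes with $n_{\mc F}(z_{d-1})=1$, the contributions of the true $d$ samples and of the $(d-1)$-sample learner coincide (both equal $1$), so the difference is at least $-\Pr[z_{d-1}]$ there. Summed over such prefixes, this costs at most $p^{d-1}_{\mc F}(f)$. On prefixes with $n_{\mc F}(z_{d-1})\ge2$, we repeat the earlier computation. With $q(z_{d-1})$ the conditional probability that appending a sample makes $f$ unique, the contribution is at least $\Pr[z_{d-1}]\bigl(q(z_{d-1})-q(z_{d-1})/n_{\mc F}(z_{d-1})\bigr)\ge \Pr[z_{d-1}]\,q(z_{d-1})/2$. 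Summing, these terms total at least $\tfrac12\bigl(p^d_{\mc F}(f)-p^{d-1}_{\mc F}(f)\bigr)$, because prefixes already unique account for at most $p^{d-1}_{\mc F}(f)$ of the mass $p^d_{\mc F}(f)$.

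Combining the two cases gives an advantage of at least $p^d_{\mc F}(f)/2-p^{d-1}_{\mc F}(f)$ (possibly with slack). The main obstacle is this bookkeeping step: making sure the loss from prefixes that already determine $f$ is charged only once, at cost $p^{d-1}_{\mc F}(f)$, rather than double-counted against the $q$-mass. If the constants come out slightly off, I would first check whether on those prefixes the true-sample term (which is $1$) exactly cancels the amplified-sample term.
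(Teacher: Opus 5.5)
Your proposal follows the paper's route: reduce to $t=d-1$, use the random-consistent-hypothesis distinguisher, apply Proposition~\ref{prop:sa-con} at a fixed $f$, then regroup by prefixes. It breaks exactly where the paper's argument does. The inequality $\Pr[\mc A(T(Z_{d-1}))=1]\le\Pr[\mc A(Z_{d-1})=1]$ is false for a fixed $f$. The amplifier does not know the true $f$, but it is an arbitrary stochastic map and may favour a particular member of $\mc F$. The map $z_{d-1}\mapsto(z_{d-1},(x',f(x')))$ with fresh $x'\sim\mc V_n$ outputs exactly $D_f^{\otimes d}$ on inputs drawn from $D_f$. Its error at this $f$ is therefore $0$, while $p^d_{\mc F}(f)/2-p^{d-1}_{\mc F}(f)$ can be positive. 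The heuristic that the amplifier ``can at best keep a uniformly random subset of the consistent hypotheses'' is valid only when $f$ is itself random. Under the uniform prior on $\mc F$, the posterior of $f$ given $z_{d-1}$ is uniform on the set $\mathrm{Cons}(z_{d-1})$ of consistent hypotheses and independent of the amplifier's output $w$. The acceptance probability is then $|\mathrm{Cons}(z_{d-1})\cap\mathrm{Cons}(w)|/(n_{\mc F}(z_{d-1})\,n_{\mc F}(w))\le 1/n_{\mc F}(z_{d-1})$, or $0$ if nothing is consistent with $w$. So your step ``maximizing over $f$'' has no per-$f$ bound to maximize. In fact the statement with $\max_f$ is false. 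Take $\mc V_n$ uniform, $m=2^n$, and $\mc F=\{f_0\}\cup\{g_y\}_{y\in\Z_2^n}$ with $f_0\equiv 0$ and $g_y(x)=1$ iff $x\neq y$. A sample $(x,0)$ is consistent exactly with $f_0$ and $g_x$, so $p^1_{\mc F}(f_0)=0$ and $p^2_{\mc F}(f_0)=1-1/m$. The theorem with $d=2$ would then force $\epsilon^{\star}_{SA}(\mc D,1,2)\ge(1-1/m)/2$. Yet the amplifier $(x,b)\mapsto((x,b),(x',b))$ with fresh uniform $x'$ is exact at $f_0$. At each $g_y$, coupling $x'$ with the true second input, its output differs from two true samples only when exactly one of the two inputs equals $y$. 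Its error is therefore $2(m-1)/m^2$, which is below $(1-1/m)/2$ for $n\ge 3$.

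What your argument does prove, after two repairs, is an averaged bound. First, the bookkeeping. As written, you charge $-p^{d-1}_{\mc F}(f)$ on already-unique prefixes and also remove $p^{d-1}_{\mc F}(f)$ from the $q$-mass, which gives only $p^d_{\mc F}(f)/2-\tfrac32 p^{d-1}_{\mc F}(f)$. Use instead the exact cancellation you point to at the end: on prefixes with $n_{\mc F}(z_{d-1})=1$ both terms equal $\Pr[z_{d-1}]$. Your conditional-$q$ computation then yields $\Pr[\mc A(Z_d)=1]-\Pr[\mc A(Z_{d-1})=1]\ge\tfrac12\bigl(p^d_{\mc F}(f)-p^{d-1}_{\mc F}(f)\bigr)$, which dominates the per-$f$ expression in the statement. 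Second, replace Proposition~\ref{prop:sa-con} by its prior-averaged version above and use $\sup_f\ge\Ex_f$. This gives $\epsilon^{\star}_{SA}(\mc D,t,t+1)\ge\tfrac12\Ex_{f\sim\mathrm{Unif}(\mc F)}\bigl[p^d_{\mc F}(f)-p^{d-1}_{\mc F}(f)\bigr]$ for all $t\le d-1$. For classes where $p^t_{\mc F}(f)$ does not depend on $f$, such as parities, this recovers the stated bound, so the parity corollary survives. No argument can give the $\max_f$ form.
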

\begin{proof}
    The proof idea is the same as for Theorem~\ref{thm:semi-general-lower-bound} and uses the same distinguisher. We again start with the case of sample amplification from $d-1$ samples, and then we generalize to any $t < d-1$. In the case $t=d-1$ case, the advantage of the distinguisher will be
     \begin{align*}
        \mathrm{Adv}(\mc A) &\geq \sum_{z_{d}} \frac{\Pr\lrq{z_{d}}}{n_{\mc F}(z_{d})} - \sum_{z_{d-1}} \frac{\Pr\lrq{z_{d-1}}}{n_{\mc F}(z_{d-1})}\\
        &\stackrel{(a)}{=} \sum_{z_{d-1}}\left(\frac{- \Pr\lrq{z_{d-1}}}{n_{\mc F}(z_{d-1})} + \sum_{\substack{z_{d} \\ z_{d}|_{d-1}= z_{d-1}}} \frac{\Pr\lrq{z_{d}}}{n_{\mc F}(z_{d})}\right) \tag{\stepcounter{equation}\theequation}\\
        &\stackrel{(b)}{=} \sum_{\substack{z_{d-1} \\ n_{\mc F}(z_{d-1}) = 1}}\left({- \Pr\lrq{z_{d-1}}} + \sum_{\substack{z_{d} \\ z_{d}|_{d-1}= z_{d-1}}} {\Pr\lrq{z_{d}}}\right)  \\ &+\sum_{\substack{z_{d-1} \\ n_{\mc F}(z_{d-1}) \geq 2}}\left(\frac{- \Pr\lrq{z_{d-1}}}{n_{\mc F}(z_{d-1})} + \sum_{\substack{z_{d} \\ z_{d}|_{d-1}= z_{d-1} \\ n_{\mc F}(z_{d}) \geq 2}} \frac{\Pr\lrq{z_{d}}}{n_{\mc F}(z_{d})}\right) + \sum_{\substack{z_{d} \\ n_{\mc F}(z_{d}|_{d-1}) \geq 2 \\  n_{\mc F}(z_{d}) = 1}} \Pr\lrq{z_{d}}.
    \end{align*}
where $(a)$ follows by again grouping using first $b_{\mc F}(f)-1$ samples as previously; $(b)$ follows by writing the sum over $z_{d-1}$ in two parts, depending upon the value of $n_{\mc F}(z_{d-1})$ and then subdividing the case $n_{\mc F}(z_{d-1})\geq 2$ according to whether adding an extra sample uniquely specifies $f$. 

It is easy to see that the first term in the above expression is $0$: if $z_{d-1}$ specifies $f$ uniquely, then adding a sample does not change the number of consistent hypotheses, thus we can just take the summation of probability over the last sample to obtain
\begin{equation}
    \sum_{\substack{z_{d} \\ z_{d}|_{d-1}= z_{d-1} \\ n_{\mc F}(z_{d}) = 1}} {\Pr\lrq{z_{d}}} = \Pr\lrq{z_{d-1}}.
\end{equation}
Moreover, similarly to the previous case, if $z_{d}|_{d-1}= z_{d-1}$ then, $n_{\mc F}(z_{d}) \leq n_{\mc F}(z_{d-1})$. Hence, we can again bound the distinguishing advantage as,
\begin{equation}
    \begin{split}
        \mathrm{Adv}(\mc A) &\geq \sum_{\substack{z_{d-1} \\ n_{\mc F}(z_{d-1})\geq 2}} \left(-\Pr\lrq{z_{d-1}} + \sum_{\substack{z_{d} \\ z_{d}|_{d-1}= z_{d-1} \\ n_{\mc F}(z_{d}) \geq 2}} \Pr\lrq{z_{d}} \right) \frac{1}{n_{\mc F}(z_{d-1})} + \sum_{\substack{z_{d} \\ n_{\mc F}(z_{d}|_{d-1}) \geq 2 \\  n_{\mc F}(z_{d}) = 1}} \Pr\lrq{z_{d}} \\
        &\stackrel{(a)}{\geq} - \sum_{\substack{z_{d-1} \\ n_{\mc F}(z_{d-1})\geq 2}} \Pr\lrq{z_{d-1}} \left(1 - 1 + p_{\mc F}^d(f)\right)\frac{1}{n_{\mc F}(z_{d-1})} + \sum_{\substack{z_{d} \\ n_{\mc F}(z_{d}|_{d-1}) \geq 2 \\  n_{\mc F}(z_{d}) = 1}} \Pr\lrq{z_{d}} \\ 
        &= \sum_{\substack{z_{d} \\ n_{\mc F}(z_{d}|_{d-1}) \geq 2 \\  n_{\mc F}(z_{d}) = 1}} \Pr\lrq{z_{d}} - p_{\mc F}^d(f) \sum_{\substack{z_{d-1} \\ n_{\mc F}(z_{d-1})\geq 2}} \Pr\lrq{z_{d-1}}\frac{1}{n_{\mc F}(z_{d-1})}.
    \end{split}
\end{equation}
where $(a)$ follows in a similar fashion to the proof of Theorem~\ref{thm:semi-general-lower-bound} by noting that
\begin{equation}
    \sum_{\substack{z_{d} \\ z_{d}|_{d-1}= z_{d-1} \\ n_{\mc F}(z_{d}) \geq 2}} \Pr\lrq{z_{d}} \geq \Pr\lrq{z_{d-1}} (1 - p_{\mc F}^d(f)).
\end{equation}
Now, note that 
\begin{equation}
    \sum_{\substack{z_{d} \\ n_{\mc F}(z_{d}|_{d-1}) \geq 2 \\  n_{\mc F}(z_{d}) = 1}} \Pr\lrq{z_{d}} = p_{\mc F}^d(f) - p_{\mc F}^{d-1}(f) \, ,
\end{equation}
because the sum on the left hand side is exactly the probability over $d$ samples such that first $d-1$ samples do not specify the function uniquely but adding an extra sample helps in uniquely specifying the function. 
Plugging this back in, we get that the advantage of the distinguisher satisfies
\begin{equation}
    \begin{split}
        \mathrm{Adv}(\mc A) &\geq p_{\mc F}^{d}(f)\left(1 - \sum_{\substack{z_{d-1} \\ n_{\mc F}(z_{d-1})\geq 2}} \Pr\lrq{z_{d-1}}\frac{1}{n_{\mc F}(z_{d-1})}\right) - p_{\mc F}^{d-1}(f) \\
        &\geq \frac{p_{\mc F}^{d}(f)}{2} - p_{\mc F}^{d-1}(f).
    \end{split}
\end{equation}
Hence, by \Cref{eq:sa-error-distinguisher-formulation}, the minimax sample amplification error satisfies
\begin{equation}
    \epsilon^{\star}_{SA}(\mc{D}, d-1, d) \geq \max_{f \in \mathcal{F}}\left[\frac{p_{\mc F}^d(f)}{2} - p_{\mc F}^{d-1}(f)\right].
\end{equation}
To go from $d-1$ to any general $t\leq d-1$, we can argue just like in the proof of Theorem~\ref{thm:semi-general-lower-bound}. Thus,  
\begin{equation}
    \epsilon^{\star}_{SA}(\mc{D}, t, t+1) \geq \max_{f \in \mathcal{F}}\left[\frac{p_{\mc F}^d(f)}{2} - p_{\mc F}^{d-1}(f)\right], \hspace{3ex} \forall t\leq d-1.
\end{equation}
\end{proof}
It is easy to see that the above bound generalizes the one provided in Theorem~\ref{thm:semi-general-lower-bound} by noticing that for $d = b_{\mc{F}}$, $p_{f^{\prime}}^{d-1} = 0$ for $f^{\prime} \in \argmax_{f \in \mc F} b_{\mc F}(f)$. Hence, 
\begin{equation}
    \max_{f \in \mc{F}}\lrq{\frac{p^{d}_{\mc F}(f)}{2} - p^{d-1}_{\mc F}(f)} \geq \frac{p_{\mc F}}{2}.
\end{equation}

We now apply the above lower bounds for structured sample amplification to the function class of parities and more generally to linear spaces of functions. We will omit the distribution class from the notation $\epsilon_{SA}^{\star}(\mc D, t, t+1)$ wherever it is clear from the context.

\subsection{Parity Lower Bound}
We study the structured sample amplification for the $n$-bit parity distributions $\mc{D}_{\mathrm{par}} = \{(\mc{U}_n, f_a\}_{a \in \Z_2^n}$, where
\begin{equation}
   f_a:\mathbb{Z}_2^n\to\mathbb{Z}_2,\, f_a(x) = a\cdot x\, ,
\end{equation}
and where $\mc{U}_n$ is the uniform distribution on $\Z_2^n$. We first study the sample amplification error for more general distributions which are uniformly supported over a $n$-dimensional subspace $\Z_2^{m}$. We look at a useful lemma first.

\begin{lemma}\label{lemma:lin_ind_parities}
    Given $k$ uniformly random draws from the vector space $\mathbb{Z}_2^n$, the probability that exactly $1\leq \alpha\leq k$ of them are linearly independent is given as
    \begin{equation}
        \Pr(k, \alpha) = \frac{\prod_{i = 0}^{\alpha-1}(2^n - 2^i)(2^k - 2^i)}{2^{nk}\prod_{i = 0}^{\alpha - 1} (2^{\alpha} - 2^i)}\, .
    \end{equation}
\end{lemma}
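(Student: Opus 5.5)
We read ``exactly $\alpha$ of them are linearly independent'' as the event that the $k$ draws $v_1,\dots,v_k\in\mathbb{Z}_2^n$ span a subspace of dimension exactly $\alpha$. Equivalently, the $n\times k$ matrix $M=[v_1\,\cdots\,v_k]$ over $\mathbb{F}_2$ has rank $\alpha$. Since the draws are uniform and independent, $M$ is uniform over all $2^{nk}$ matrices. It therefore suffices to count the $n\times k$ binary matrices of rank exactly $\alpha$ and show that this number equals
\begin{equation*}
\frac{\prod_{i=0}^{\alpha-1}(2^n-2^i)(2^k-2^i)}{\prod_{i=0}^{\alpha-1}(2^\alpha-2^i)}\, .
\end{equation*}

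We will count by fibering over the column space $W=\mathrm{span}\{v_1,\dots,v_k\}$, an $\alpha$-dimensional subspace of $\mathbb{Z}_2^n$.

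First, the number of $\alpha$-dimensional subspaces of $\mathbb{Z}_2^n$ is the Gaussian binomial coefficient $\prod_{i<\alpha}(2^n-2^i)/\prod_{i<\alpha}(2^\alpha-2^i)$. We derive it by double counting ordered bases. There are $\prod_{i<\alpha}(2^n-2^i)$ ordered linearly independent $\alpha$-tuples in $\mathbb{Z}_2^n$, since the $(i+1)$-st vector must avoid the $2^i$ elements of the span of the previous ones. Each $\alpha$-dimensional subspace has exactly $\prod_{i<\alpha}(2^\alpha-2^i)$ ordered bases.

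Second, we fix $W$ and count the $k$-tuples $(v_1,\dots,v_k)\in W^k$ that span $W$. Choosing a basis identifies $W\cong\mathbb{Z}_2^\alpha$. Such tuples then correspond to $\alpha\times k$ matrices whose columns span $\mathbb{Z}_2^\alpha$, i.e.\ matrices of full row rank $\alpha$. Their number is the number of ordered linearly independent $\alpha$-tuples of rows in $\mathbb{Z}_2^k$, which is $\prod_{i<\alpha}(2^k-2^i)$.

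Every tuple of rank $\alpha$ has a unique column space, so these two counts multiply. Dividing the product by $2^{nk}$ gives the claimed formula.

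The only step needing care is the second one. It uses the fact that column rank equals row rank, so that ``columns span $\mathbb{Z}_2^\alpha$'' is the same as ``the $\alpha$ rows are linearly independent in $\mathbb{Z}_2^k$''. It also needs the observation that the count does not depend on which $W$ is fixed. Both are standard linear algebra over $\mathbb{F}_2$.

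We will finish with two sanity checks: the formula vanishes for $\alpha>\min(n,k)$ because a factor $2^n-2^n$ or $2^k-2^k$ appears, and for $\alpha=k\le n$ it reduces to $\prod_{i<k}(1-2^{i-n})$, the familiar probability that $k$ random vectors are independent.
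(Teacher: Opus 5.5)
Your proof is correct and complete. The paper itself gives no proof of this lemma. It is stated as a folklore fact; the Techniques section refers to ``a folklore argument based on the probability of randomly drawn bit strings being linearly independent''. Only the special case $k=\alpha=n$ is ever used, to get $\Pr(n,n)=\prod_{i=1}^n(1-2^{-i})$ in Theorem~\ref{thm:sa-subspace-distribution} and in the stabilizer bound of Corollary~\ref{corollary:phaseless_stab_cloning}.

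For that special case, the sequential count in your first step already suffices: the $(i+1)$-st draw must avoid the $2^i$ vectors spanned by the earlier ones. This is the same style of conditional argument the paper uses in Lemma~\ref{lemma:prop_3_codes}.

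Fibering over the column space gives you more: it establishes the full rank-$\alpha$ distribution exactly as the lemma states it. Counting rank-$\alpha$ matrices as (number of $\alpha$-dimensional subspaces of $\mathbb{Z}_2^n$)$\times$(number of full-row-rank $\alpha\times k$ matrices) is the standard derivation. It also explains the shape of the formula: the denominator $\prod_{i<\alpha}(2^\alpha-2^i)=|\mathrm{GL}_\alpha(\mathbb{Z}_2)|$ is precisely the redundancy in writing a rank-$\alpha$ matrix as $AB$, with $A$ of full column rank and $B$ of full row rank.

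Your reading of ``exactly $\alpha$ of them are linearly independent'' as ``the draws span an $\alpha$-dimensional subspace'' is the one the formula requires. Your two sanity checks confirm it against the paper's only application.
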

Thus, the probability of obtaining $n$ linearly independent $n$-bit strings from $n$ uniformly random draws is
\begin{equation}
    \Pr(n, n) = \prod_{i=1}^n (1 - 2^{-i}) \, .
\end{equation}
The probability $\Pr(n, n)$ is lower bounded by the constant $0.28$ for all $n$~\cite{oeis, finch2003mathematical}. Using the techniques developed and lemma above, we provide the following constant lower bound on minimax sample amplification error for the distribution class $\mc D_{m, n}$.

\begin{theorem}\label{thm:sa-subspace-distribution}
    Let $\mc D_{m, n}$ with $m \geq n$ denote the uniform distribution supported on an $n$-dimensional subspace of $\Z^m_2$. Then, the minimax sample amplification error for this class of distributions can be lower bounded as
    \begin{equation}
        \epsilon^{\star}_{SA}(\mc{D}_{n, m} ,t-1, t) \geq 0.14, \hspace{3ex} \forall 1\leq t \leq n\, .
    \end{equation}
\end{theorem}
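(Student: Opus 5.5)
The plan is to repeat the distinguisher argument behind Theorem~\ref{thm:semi-general-lower-bound}, adapted to the class $\mc D_{m,n}$ of uniform distributions on $n$-dimensional subspaces $L\leq\Z_2^m$. Here the unknown ``hypothesis'' is the subspace $L$ itself. Throughout I fix one $L$, and I note that the subspace-distribution class is not literally of the form $(\mc V_n,f)$, so the argument has to be rerun rather than quoted.

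\textbf{Step 1: reduction to $t=n$.} By the monotonicity argument in the proof of Theorem~\ref{thm:semi-general-lower-bound}, it suffices to treat $t=n$, i.e.\ amplifying from $n-1$ to $n$ samples. Any scheme from fewer samples extends to $n-1\to n$ by appending the remaining true samples unchanged.

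\textbf{Step 2: the distinguisher.} Given $n$ samples and knowing $L$, the distinguisher outputs a uniformly random $n$-dimensional subspace $\hat L$ containing the span of the samples, and accepts iff $\hat L=L$.

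On true samples, the $n$ draws are uniform in $L\cong\Z_2^n$. By Lemma~\ref{lemma:lin_ind_parities} they are linearly independent with probability
\[
\Pr(n,n)=\prod_{i=1}^n(1-2^{-i})\geq 0.28 .
\]
In that event the span equals $L$ and the distinguisher accepts with certainty. Hence the acceptance probability on true samples is at least $0.28$.

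\textbf{Step 3: acceptance on amplified samples.} Here I would prove the analogue of Proposition~\ref{prop:sa-con}: acceptance on $T_{SA}(Z_{n-1})$ is at most acceptance on $Z_{n-1}$. Fix $z_{n-1}$ with span $W$, where $\dim W\leq n-1$. Let $N(W)$ be the number of $n$-dimensional subspaces of $\Z_2^m$ containing $W$; since $m\geq n$ and $\dim W<n$, we have $N(W)\geq 2$.

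The key point is that, conditioned on $z_{n-1}$, the true $L$ is uniformly distributed among these $N(W)$ candidates from the amplifier's viewpoint. To make this rigorous I would take a worst-case-to-average-case step: average over a uniformly random $L$ (equivalently, over a random $\GL(m)$ symmetry), noting that the supremum over $L$ dominates this average. Whatever span $W'\supseteq W$ the amplifier produces, the distinguisher then succeeds with probability at most
\[
\frac{\#\{L\supseteq W'\}}{N(W)}\cdot\frac{1}{\#\{L\supseteq W'\}}=\frac{1}{N(W)}\leq \frac12 .
\]
If the amplified span is not contained in $L$, no consistent $n$-dimensional subspace equals $L$, and the distinguisher fails, so this bound still holds.

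\textbf{Step 4: conclusion.} Combining Steps 2 and 3, the advantage is at least $0.28-\tfrac12\cdot 0.28$ after the same grouping as in Theorem~\ref{thm:semi-general-lower-bound}. In fact the cruder bound $0.28-\tfrac12=-0.22$ is useless, so the refined estimate is needed.

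More precisely, true acceptance is at least $\Pr(n,n)+(1-\Pr(n,n))\cdot\mathbb{E}[1/N]$. Amplified acceptance is at most $\mathbb{E}_{z_{n-1}}[1/N(W)]$, and I need this to be at most $\tfrac12$ times the relevant mass. Using $N(W)\geq 2$ in place of $n_{\mc F}\geq 2$, the chain of inequalities in the proof of Theorem~\ref{thm:semi-general-lower-bound} yields
\[
\mathrm{Adv}\geq \Pr(n,n)\Bigl(1-\mathbb{E}\bigl[1/N(W)\bigr]\Bigr)\geq \frac{\Pr(n,n)}{2}\geq 0.14 .
\]

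\textbf{Main obstacle.} I expect Step 3 to be the hardest. It requires justifying the symmetry claim that the amplifier cannot favour the true $L$ among the subspaces containing $W$. Since the minimax error is a supremum over $L$, lower-bounding it by the average over a uniformly random $L$ makes the posterior uniform. With that in hand, the combinatorial counting of subspaces containing $W'$ goes through exactly as in Proposition~\ref{prop:sa-con}.
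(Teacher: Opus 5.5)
Your argument is the paper's argument: the consistent-learner distinguisher, the bound $1/N(W)$ on acceptance of amplified samples, and the final $\Pr(n,n)/2$. You carry it out more carefully in two useful ways.
\begin{itemize}
\item You rerun the argument for subspaces instead of quoting Theorem~\ref{thm:semi-general-lower-bound}, which is stated for classes of the form $(\mathcal V_n,f)$; the subspace class is not literally of that form.
\item You justify the analogue of Proposition~\ref{prop:sa-con} by averaging over a uniformly random $L$. Under this average, the posterior given $z_{n-1}$ is uniform over the $N(W)$ subspaces containing $W$.
\end{itemize}
That averaging is genuinely needed, not cosmetic. For a single fixed $L$ the inequality can fail: an amplifier that appends a fresh uniform element of a fixed $L_0$ whenever $W\subseteq L_0$ is exact for $L_0$. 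So the per-hypothesis step in the paper only holds in the averaged form you use.

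There is one false step: ``since $m\ge n$ and $\dim W<n$, we have $N(W)\ge 2$.'' The number of $n$-dimensional subspaces of $\mathbb{Z}_2^m$ containing a $k$-dimensional $W$ is the Gaussian binomial $\binom{m-k}{n-k}_2$, which equals $1$ when $m=n$. In that case the class is the single uniform distribution on $\mathbb{Z}_2^n$, which can be amplified with zero error by drawing fresh samples. So the statement itself is false for $m=n$, and the paper's ``teaching dimension is clearly $n$'' has the same defect. You must assume $m\ge n+1$; then $N(W)\ge 2^{m-k}-1\ge 3$ and everything goes through.

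You should also tighten Step~4. The expressions $(1-\Pr(n,n))\,\mathbb{E}[1/N]$ and $\Pr(n,n)\bigl(1-\mathbb{E}[1/N(W)]\bigr)$ do not follow from the chain you cite. There, the conditional probability that the $n$-th sample completes a basis is $1/2$ when $\dim W=n-1$, not at most $\Pr(n,n)$. Instead, take $Z_{n-1}$ to be the prefix of $Z_n$ and set $S=\mathrm{span}(Z_n)\supseteq W$, so that $N(S)\le N(W)$. Then the $L$-averaged advantage satisfies
\[
\mathrm{Adv}\;\ge\;\Pr(n,n)+\mathbb{E}\Bigl[\tfrac{\mathbf{1}\{\dim S<n\}}{N(S)}\Bigr]-\mathbb{E}\Bigl[\tfrac{1}{N(W)}\Bigr]\;\ge\;\Pr(n,n)-\mathbb{E}\Bigl[\tfrac{\mathbf{1}\{\dim S=n\}}{N(W)}\Bigr]\;\ge\;\frac{\Pr(n,n)}{2}\,,
\]
using $N(W)\ge 2$ only in the last step.
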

\begin{proof}
    The teaching dimension for this class of distributions is clearly $n$ and the teaching sequence is any sequence that contains $n$ linearly independent elements. Thus, $b_{\mc F} = n$, Using Theorem~\ref{thm:semi-general-lower-bound}, we conclude that
    \begin{equation}
        \epsilon^{\star}_{SA}(\mc{D}_{m, n} , t-1, t) \geq p/2, \hspace{3ex} \forall 1\leq t \leq n\, ,
    \end{equation}
    where $p$ is the fraction of samples of size $n$ that are teaching set for any fixed distribution, or equivalently, the probability of learning the distribution given $n$ samples with uniformly random inputs. This is same as probability of obtaining $n$ linearly independent samples in $n$ uniformly random draws, which according to Lemma~\ref{lemma:lin_ind_parities} is given as
    \begin{equation}
        p = \Pr(n, n) = \prod_{i = 1}^n (1 - 2^{-i})\, .
    \end{equation}
    Thus, 
    \begin{equation}
        \epsilon^{\star}_{SA}(\mc{D}_{{m, n}} , t-1, t) \geq p/2 = \frac{\prod_{i = 1}^n (1 - 2^{-i})}{2}, \hspace{3ex} \forall 1\leq t \leq n\, ,
    \end{equation}
    which is bounded from below by the constant $0.14$ for any $n$ according to the discussion after Lemma~\ref{lemma:lin_ind_parities}.
\end{proof}

Now, we can view the parity distribution as a uniform distribution supported on a $n$ dimensional subspace of $\Z_2^{n+1}$. Thus, we can obtain a minimax sample amplification error for class of $n$-bit parity distribution by noticing that $\mc{D}_{\mathrm{par}} = \mc{D}_{n+1, n}$. 

\begin{corollary}[Formal Statement of Theorem~\ref{thm:classical-results-informal}, Point 2: Error Lower Bound for Sample Amplification of Parities]\label{corollary:parities-sa-lb}
    Let $\mc{D}_{\mathrm{par}} $ be the class of $n$-bit parity distribution as defined above. Then the minimax sample amplification error for this class of distributions can be lower bounded as
    \begin{equation}
        \epsilon^{\star}_{SA}(\mc{D}_{\mathrm{par}} ,t-1, t)  \geq 0.14, \hspace{3ex} \forall 1\leq t \leq n\, .
    \end{equation}
\end{corollary}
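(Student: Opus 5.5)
The plan is to reduce to Theorem~\ref{thm:sa-subspace-distribution}, or equivalently to apply Theorem~\ref{thm:semi-general-lower-bound} directly with $\mc F = \mc F_{\mathrm{par}}$ and $\mc V_n = \mc U_n$.

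\textbf{Step 1: identify parity distributions with subspace distributions.} For each $a\in\Z_2^n$, the distribution $(\mc U_n, f_a)$ is uniform on the set $L_a = \{(x, a\cdot x) : x\in\Z_2^n\}\subseteq \Z_2^{n+1}$. Since $x\mapsto(x,a\cdot x)$ is an injective linear map, $L_a$ is an $n$-dimensional subspace. The class $\mc D_{\mathrm{par}}$ is therefore a subclass of $\mc D_{n+1,n}$: it consists of those $n$-dimensional subspaces that project bijectively onto the first $n$ coordinates.

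\textbf{Step 2: handle the subclass issue.} The minimax error over a subclass could in principle be smaller, so I will not simply quote Theorem~\ref{thm:sa-subspace-distribution} as stated. Instead I will rerun its argument, i.e.\ apply Theorem~\ref{thm:semi-general-lower-bound}, for $\mc F_{\mathrm{par}}$ itself. This requires two checks.
\begin{itemize}
\item[(i)] \emph{Teaching dimension.} We need $b_{\mc F}(f_a)=n$ for every $a$.
\begin{itemize}
\item[--] Upper bound: samples whose inputs $x^{(1)},\dots,x^{(n)}$ are linearly independent determine $a$ uniquely, because $a$ is the unique solution of a full-rank linear system.
\item[--] Lower bound: any $n-1$ samples leave a solution space of dimension at least one, hence at least two consistent parities.
\end{itemize}
So $b_{\mc F}=n$.
\item[(ii)] \emph{The quantity $p_{\mc F}$.} For each $a$, $p^n_{\mc F}(f_a)$ is exactly the probability that $n$ uniformly random inputs span $\Z_2^n$. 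If the inputs are dependent, there is a nonzero $u$ orthogonal to all of them, and $f_{a\oplus u}$ is also consistent, so the samples do not teach $f_a$. By Lemma~\ref{lemma:lin_ind_parities} this probability equals $\Pr(n,n)=\prod_{i=1}^n(1-2^{-i})$, independent of $a$.
\end{itemize}

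\textbf{Step 3: conclude.} Theorem~\ref{thm:semi-general-lower-bound} gives $\epsilon^\star_{SA}(\mc D_{\mathrm{par}},t-1,t)\ge \tfrac12\prod_{i=1}^n(1-2^{-i})$ for all $t-1\le n-1$, i.e.\ for $1\le t\le n$. The product is bounded below by $0.28$ for all $n$; it decreases monotonically to about $0.2888$. Hence the bound is at least $0.14$.

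The only real subtlety is Step~2: making sure the lower bound holds for the parity subclass rather than only for the full class $\mc D_{n+1,n}$. Rerunning Theorem~\ref{thm:semi-general-lower-bound} with $\mc F_{\mathrm{par}}$ avoids this, because the distinguisher's consistent learner is then taken over parities.
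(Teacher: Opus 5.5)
Your proposal is correct, and at its core it is the paper's argument: both reduce to \Cref{thm:semi-general-lower-bound} with $b_{\mc F}=n$ and $p_{\mc F}=\prod_{i=1}^n(1-2^{-i})$. The difference is in the routing.

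The paper gets the corollary in one line by asserting $\mc D_{\mathrm{par}}=\mc D_{n+1,n}$ and quoting \Cref{thm:sa-subspace-distribution}. As you observe, this is only an inclusion: the graphs $L_a$ are exactly the $2^n$ hyperplanes of $\Z_2^{n+1}$ not containing $(0,\dots,0,1)$, out of $2^{n+1}-1$ hyperplanes in total. A minimax lower bound for a larger class does not transfer to a subclass, because an amplifier for the subclass may exploit its extra structure.

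Applying \Cref{thm:semi-general-lower-bound} directly to $\mc F_{\mathrm{par}}$, with the consistent learner ranging over parities, is the sound way to obtain the statement. It also costs nothing: projection onto the first $n$ coordinates preserves linear independence on $L_a$, so your values of $b_{\mc F}$ and $p_{\mc F}$ are literally those computed in the proof of \Cref{thm:sa-subspace-distribution}. The paper's detour buys only brevity.

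One caveat applies to both routes, since both use \Cref{thm:semi-general-lower-bound} as a black box. Its proof argues for a single fixed $f$, and two steps fail as written.
\begin{itemize}
\item[(i)] \Cref{prop:sa-con} is false for a fixed $f$. An amplifier that ignores its input and outputs $(e_i,a_0\cdot e_i)_{i=1}^n$ makes the distinguisher accept $f_{a_0}$ with probability $1$.
\item[(ii)] The step $q(z_{b_{\mc F}(f)-1})\le p$ does not follow. The correct identity is $\sum_z\Pr[z]\,q(z)=p$, not $\sum_z q(z)=p$. The inequality is in fact false for parities: $q=\tfrac12$ on every full-rank prefix, while $p<\tfrac12$ for $n\ge 2$.
\end{itemize}

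For parities both steps are easily repaired, so your conclusion stands.
\begin{itemize}
\item[(i)] Average over a uniformly random $a$. The procedure ``amplify, then run the consistent learner'' outputs the true $f_a$ with probability at most the Bayes value, which here equals $\Ex[1/n_{\mc F}(Z_{n-1})]$ under every $f_a$. Hence some $a$ has acceptance probability at most this value on amplified data. For that $a$ the advantage is at least $\Ex[1/n_{\mc F}(Z_n)]-\Ex[1/n_{\mc F}(Z_{n-1})]$.
\item[(ii)] Let $r$ be the rank of the first $n-1$ inputs. A direct computation gives $\Ex[1/n_{\mc F}(Z_n)]-\Ex[1/n_{\mc F}(Z_{n-1})]=\Ex[2^{r-n}(1-2^{r-n})]\ge\tfrac14\Pr[r=n-1]=p/2$. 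The last equality uses $p=\tfrac12\Pr[r=n-1]$.
\end{itemize}
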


The above theorem rules out structured sample amplification of $n$-bit parity distributions to arbitrary small error from $t \leq n-1$ samples. This can be formalised as follows:

\begin{corollary}
    The class $\mc{D}_{\mathrm{par}}$ of $n$-bit parity distributions does not admit a $(t-1, t, \epsilon)$ sample amplification procedure for any $t \leq n$ and $\epsilon < 0.14$.
\end{corollary}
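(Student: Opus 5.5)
The plan is to obtain this corollary directly from Corollary~\ref{corollary:parities-sa-lb} by unpacking the definition of the minimax error. We argue by contradiction. Suppose that for some $t\leq n$ and some $\epsilon<0.14$ the class $\mc D_{\mathrm{par}}$ admits a $(t-1,t,\epsilon)$ sample amplification procedure. By Definition~\ref{def:structured-sample-amplification}, this gives a stochastic map $T:\mc X^{t-1}\to\mc X^{t}$ with
\begin{equation*}
    \sup_{a\in\Z_2^n}\tv\lr{D_{f_a}^{\otimes t-1}\circ T^{-1}, D_{f_a}^{\otimes t}}\leq \epsilon .
\end{equation*}

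Since $\epsilon^{\star}_{SA}(\mc D_{\mathrm{par}},t-1,t)$ is the minimum of this worst-case total variation over all stochastic maps, the existence of $T$ forces $\epsilon^{\star}_{SA}(\mc D_{\mathrm{par}},t-1,t)\leq\epsilon<0.14$. On the other hand, Corollary~\ref{corollary:parities-sa-lb} gives $\epsilon^{\star}_{SA}(\mc D_{\mathrm{par}},t-1,t)\geq 0.14$ for every $1\leq t\leq n$. These two bounds are incompatible, which proves the claim.

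One boundary case needs checking: $t=1$, i.e., amplifying from zero samples. This case is already covered by the range $1\leq t\leq n$ in Corollary~\ref{corollary:parities-sa-lb}. So the only care needed is to state the range as $1\leq t\leq n$, since for $t\leq 0$ the statement is vacuous.

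There is no real obstacle here. The substantive work, namely the $p_{\mc F}/2$ bound of Theorem~\ref{thm:semi-general-lower-bound} together with $\prod_{i\ge1}(1-2^{-i})>0.28$, has already been done. Corollary~\ref{corollary:parities-sa-lb} is itself an instance of Theorem~\ref{thm:sa-subspace-distribution}, obtained via the identification $\mc D_{\mathrm{par}}=\mc D_{n+1,n}$. Everything else is just the definition of the minimax error.
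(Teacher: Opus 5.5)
Your argument is correct and matches the paper. The paper obtains this corollary simply by unpacking the definition of $\epsilon^{\star}_{SA}$ in Corollary~\ref{corollary:parities-sa-lb}: a $(t-1,t,\epsilon)$ scheme with $\epsilon<0.14$ would force $\epsilon^{\star}_{SA}(\mc D_{\mathrm{par}},t-1,t)<0.14$, contradicting that corollary.

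One side remark, which does not affect your proof: the identification $\mc D_{\mathrm{par}}=\mc D_{n+1,n}$ that you repeat from the paper is really a proper inclusion, and minimax lower bounds do not transfer from a superclass to a subclass. The cleaner justification of Corollary~\ref{corollary:parities-sa-lb} is to apply Theorem~\ref{thm:semi-general-lower-bound} directly to parities, with $b_{\mc F}=n$ and $p_{\mc F}=\prod_{i=1}^n(1-2^{-i})$.
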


Notice that, since the TV distance between any two distinct parity distributions is a constant independent of $n$, the number of samples necessary and sufficient for learning an $n$-bit parity distribution to arbitrarily small error equals that of exact parity learning from uniformly random inputs, which is $\Theta(n)$. Thus, the above corollary shows that the sample complexity for sample amplification of parity distributions with a sufficiently small constant accuracy aymptotically matches that of learning the same class of distributions. 
In other words: Sample amplification is no easier than learning for parity distributions. (And, clearly, exact parity learning suffices for sample amplification, with an approximation error depending only on the failure probability of the exact learner.)
%


\subsection{Coding Theory Lower Bounds}
In this section, we highlight connections between structured sample amplification and coding theory. We first recall that, in principle, for any Boolean function class, we can define an associated linear codes (Definition~\ref{defn:linear_function_code}). Then, we show that the minimax sample amplification error for structured distributions $\{(\mc U_n, f)\}$ with $f\in\mc F$ can be related to the ability of the corresponding dual code to correct random erasures. 

We start with the following relation between the rank of a submatrix obtained by choosing random columns of parity check matrix of a code and the ability of the primal code to correct random erasures:

\begin{lemma}[\cite{abbe2014reedmullercodesrandomerasures}, Lemma 2.8]\label{lemma:prop_2_codes}
    For a parity check matrix $H$ with $N$ columns, take $S \subseteq [N]$, and denote by ${[N] \choose s}$ the collection of all $s$-element subsets of $[N]$. Let $H[S]$ be the restriction of $H$ to columns indexed by elements in $S$, and let $x[S]$ be the restriction of string $x$ to indices in $S$. Then, the set of bad $s$-erasure patterns is given as
    \begin{equation}
        \left\{S \in  {[N] \choose s} : \exists x, y \in \ker(H), x \neq y, x[S^c] = y[S^c]\right\} = \left\{S \in  {[N] \choose s}: \mathrm{rank}(H[S]) < s \right\} \, .
    \end{equation}
\end{lemma}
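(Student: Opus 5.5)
We prove the claimed equality of sets by showing, for each fixed $S\in\binom{[N]}{s}$, that the two membership conditions are equivalent. The whole argument uses only the linearity of $\ker(H)$.

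First, we reduce the left-hand condition to a statement about a single kernel vector. If $x,y\in\ker(H)$ with $x\neq y$ and $x[S^c]=y[S^c]$, then $w=x-y$ (over $\Z_2$, $w=x\oplus y$) is a nonzero element of $\ker(H)$ with $w[S^c]=0$, so $\mathrm{supp}(w)\subseteq S$. Conversely, any nonzero $w\in\ker(H)$ with $\mathrm{supp}(w)\subseteq S$ gives such a pair, namely $x=w$ and $y=0$. So $S$ is a bad erasure pattern if and only if there is a nonzero kernel vector supported inside $S$.

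Second, we translate this into a statement about columns. For any $w$ with $\mathrm{supp}(w)\subseteq S$, we have
\begin{equation}
Hw=\sum_{j\in S} w_j H_{\cdot j} = H[S]\,w[S].
\end{equation}
Hence a nonzero $w\in\ker(H)$ supported in $S$ corresponds exactly to a nonzero vector $w[S]\in\ker(H[S])$, and the correspondence runs both ways by extending $w[S]$ by zeros outside $S$. Such a vector exists if and only if the $s$ columns of $H[S]$ are linearly dependent, which holds if and only if $\mathrm{rank}(H[S])<s$. Chaining this with the first step gives the set equality.

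There is no real obstacle. The only care needed is to keep the restriction and zero-extension maps between $w$ and $w[S]$ straight, and to note that the argument works over any field, in particular over $\Z_2$, which is the case used in the paper.
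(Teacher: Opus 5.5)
Your proof is correct and follows the standard argument for this fact; the paper does not prove the lemma itself but cites it from Abbe, Shpilka and Wigderson, where it rests on the same two observations. These are that $S$ is a bad erasure pattern exactly when some nonzero codeword of $\ker(H)$ is supported inside $S$, and that such a codeword is exactly a nontrivial linear dependency among the $s$ columns of $H[S]$, i.e.\ $\mathrm{rank}(H[S])<s$.
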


Here, the set of bad $s$-erasures is the set of erasures at $s$ many locations that cannot be corrected. In particular, the above lemma says that the fraction of bad $s$-erasures equals the probability that the submatrix formed by randomly sampling $s$-columns of the parity check matrix is rank-deficient. Thus, if a code can correct $d$ random erasures with high probability, then any $d$ uniformly sampled columns from its parity check matrix are linearly independent with high probability. 
As in the parity case above, a high (or at least constant) probability of linear independence will be crucial in proving a sample amplification lower bound.

We first define the special class of codes that we will study in this subsection.

\begin{definition}[Linear Function Codes]\label{defn:linear_function_code}
    Let $\mathcal{F}$ be a linear space of Boolean functions with dimension $k_{\mathcal{F}}$ and with a chosen basis $B_{\mathcal{F}}=\{e_{\mc{F}, i}\}_{i=1}^{k_{\mc F}}$. 
    We define the \emph{linear function code} $C_{\mathcal{F}}$ with parameters $[2^n, k_{\mathcal{F}}]$ via the encoding map
    \begin{equation}
    	\text{Enc}:\mathbb{Z}_2^{k_{\mc F}}\to\mathbb{Z}_2^n,\, 
        \text{Enc} (a) = (f_a(x))_{x\in \mathbb{Z}_2^n}\, , 
    \end{equation}
    where we define $f_a = \sum_{i=1}^{k_{\mc F}} a_i e_{\mc{F}, i}$.
    That is, a $k_{\mc F}$-bit string is interpreted as a vector of basis coefficients and is encoded into the vector of evaluations of the corresponding function in $\mc F$.
\end{definition}

The generator matrix $G(C_{\mathcal{F}})$ of such a code is of size $k_{\mathcal{F}} \times 2^n$. Its columns are indexed by elements in $\mathbb{Z}_2^n$, and its rows are indexed by basis functions in $e_{\mc{F}, i}$. Then, $G(i, j) = e_{\mc{F}, i}(x_j)$:
    \[
    \begin{array}{c|cccc}
    & x_0 & \cdots & \cdots & x_{2^{n}-1} \\ \hline
    e_{\mathcal{F}, 1} &     &     &        &     \\
    e_{\mathcal{F}, 2} &     &     &        &     \\
    \vdots & & &    &        \\
    e_{\mc{F}, i}&     & {G(i,j)=e_{\mc{F}, i}(x_j)} & &     \\
    \vdots & & &    &        \\
    e_{\mathcal{F}, {k_{\mathcal{F}}}} &     &     &        &     
    \end{array}
    \]

We will denote the parameters of the code as $[2^n, k_{\mathcal{F}}, d]$ with their usual meanings. Moreover, we will denote the dual of the above code as $C_{\mathcal{F}}^{\perp}$, which is given by the kernel of the generator matrix $G(C_{\mathcal{F}})$. A familiar example of linear function codes is the $(n, d)$ Reed-Muller code, where $\mathcal{F}$ is the class of degree-$d$ polynomials over $\mathbb{Z}_2^n$, the basis $B_{\mathcal{F}}$ consists of all monomials of degree at most $d$, and the dimension is $k_{\mathcal{F}} = {n \choose \leq d}$. We now show that a certain resistance to random erasures in the code code $C_{\mathcal{F}}^{\perp}$ implies lower bounds for sample amplification of the distribution class $\mc{D} = \{(\mc{U}, f)\}_{f \in \mc{F}}$.

\begin{theorem}[Formal Statement of Theorem~\ref{thm:classical-results-informal}, Point 3: Sample Amplification Lower Bounds From Coding Theory]\label{thm:coding-theory-general}
    Let $C_{\mathcal{F}}$ be a $[2^n, k_{\mathcal{F}}, d]$ linear function code as defined above, and let $C_{\mathcal{F}}^{\perp}$ be its dual code. Then, the following hold:
    \begin{enumerate}
        \item If $C^{\perp}_\mathcal{F}$ can correct $k_{\mathcal{F}}$ random erasures with probability $p_1$ over the erasures, then the minimax sample amplification error for the distribution class $\{(\mc U_n, f)\}_{f \in \mc F}$ satisfies
        \begin{equation}
            \epsilon^{\star}_{SA}(\{(\mc U_n, f)\}_{f \in \mc F}, t-1, t) \geq p_1/2, \hspace{3ex} \forall 1\leq t \leq k_{\mathcal{F}}\, .
        \end{equation}
        \item If $C_{\mathcal{F}}^{\perp}$ can correct $k_{\mathcal{F}}-1$ random erasures with probability $p_2$ over the erasures, then the minimax sample amplification error for the distribution class $\{(\mc U_n, f)\}_{f \in \mc F}$ satisfies
        \begin{equation}
            \epsilon^{\star}_{SA}(\{(\mc U_n, f)\}_{f \in \mc F}, t-1, t) \geq p_2\cdot d/2^{n+1}, \hspace{3ex} \forall 1\leq t \leq k_{\mathcal{F}}.
        \end{equation}
    \end{enumerate}
\end{theorem}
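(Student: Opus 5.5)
The plan is to reduce both parts to Theorem~\ref{thm:semi-general-lower-bound} by identifying $b_{\mc F}$ and lower-bounding $p_{\mc F}=\min_f p^{b_{\mc F}}_{\mc F}(f)$ via linear algebra on the generator matrix $G(C_{\mc F})$. Since $C_{\mc F}^{\perp}=\ker G(C_{\mc F})$, the matrix $G=G(C_{\mc F})$ is a parity-check matrix for $C^{\perp}_{\mc F}$. Lemma~\ref{lemma:prop_2_codes} then says that an erasure pattern $S$ is correctable for $C^{\perp}_{\mc F}$ iff $\mathrm{rank}(G[S])=|S|$. I will interpret ``$k_{\mc F}$ random erasures'' as the positions $x^{(1)},\dots,x^{(k_{\mc F})}$ drawn uniformly at random, exactly as the sample inputs are. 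For a repeated position the columns coincide, so the rank condition fails; hence counting such patterns as uncorrectable matches the definition of $p^t_{\mc F}$. I would state this convention explicitly.

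\textbf{Step 1 (unique determination is a rank condition).} For inputs $x_t$, the functions in $\mc F$ consistent with $(x_t,f(x_t))$ form the coset $f+\{g\in\mc F: g(x^{(i)})=0\ \forall i\}$. Writing $g=f_a$, the condition $g(x^{(i)})=0$ reads $a^{T}G[\{x^{(i)}\}]=0$. So $n_{\mc F}(z_t)=2^{\,k_{\mc F}-\mathrm{rank}(G[x_t])}$, independently of $f$. Consequences:
\begin{itemize}
\item $f$ is uniquely determined iff $\mathrm{rank}(G[x_t])=k_{\mc F}$.
\item $p^t_{\mc F}(f)$ does not depend on $f$.
\item $b_{\mc F}=k_{\mc F}$: fewer than $k_{\mc F}$ columns cannot have rank $k_{\mc F}$, while $G$ has rank $k_{\mc F}$, so some $k_{\mc F}$ columns are independent.
\end{itemize}

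\textbf{Step 2 (part 1).} With $t=k_{\mc F}$ columns, unique determination means the $k_{\mc F}$ columns are independent. By Lemma~\ref{lemma:prop_2_codes} this is exactly correctability of the erasure pattern by $C^{\perp}_{\mc F}$, so $p_{\mc F}=p_1$. Theorem~\ref{thm:semi-general-lower-bound} then gives $\epsilon^{\star}_{SA}(t-1,t)\ge p_1/2$ for all $t\le k_{\mc F}$.

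\textbf{Step 3 (part 2).} We need $p^{k_{\mc F}}_{\mc F}(f)\ge p_2 d/2^n$. Condition on the first $k_{\mc F}-1$ inputs having independent columns; by Lemma~\ref{lemma:prop_2_codes} this is correctability of those $k_{\mc F}-1$ erasures, with probability $p_2$. On this event the kernel $\{a: a^{T}G[x_{k-1}]=0\}$ is one-dimensional, spanned by some $a^{*}\neq 0$. So the consistent set is $\{f,\ f+g\}$ with $g=f_{a^{*}}\neq 0$. A fresh uniform input $x$ resolves this ambiguity iff $g(x)=1$. The evaluation vector of $g$ is a nonzero codeword of $C_{\mc F}$, so its weight is at least $d$, and this happens with probability at least $d/2^n$. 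Hence $p^{k_{\mc F}}_{\mc F}(f)\ge p_2\,d/2^n$ for every $f$. Theorem~\ref{thm:semi-general-lower-bound} with $b_{\mc F}=k_{\mc F}$ gives $\epsilon^{\star}_{SA}\ge p_2 d/2^{n+1}$, and the extension to all $t\le k_{\mc F}$ is included in that theorem.

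The main obstacle I anticipate is the modelling mismatch between i.i.d.\ sample inputs (with replacement) and erasure patterns, which Lemma~\ref{lemma:prop_2_codes} phrases as $s$-subsets. If one insists on uniform subsets, $p^{k_{\mc F}}$ picks up an extra factor $\Pr[\text{inputs distinct}]\ge 1-\binom{k_{\mc F}}{2}2^{-n}$. I would either adopt the with-replacement convention above, or carry this factor and remark that it is negligible when $k_{\mc F}\ll 2^{n/2}$.
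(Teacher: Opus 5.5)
Your proposal is correct and follows the paper's proof essentially step for step. As in the paper, \Cref{lemma:prop_2_codes} turns erasure-correctability into full column rank of $G(C_{\mathcal{F}})$, which gives $b_{\mathcal{F}}=k_{\mathcal{F}}$ and $p_{\mathcal{F}}=p_1$ for part 1. Part 2 is the paper's conditioning argument from \Cref{lemma:prop_3_codes} and \Cref{lemma:prop_1_codes}: a fresh uniform input removes the remaining one-dimensional ambiguity with probability at least $d/2^n$. Both parts then conclude via \Cref{thm:semi-general-lower-bound}.

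Two points you make precise are left implicit in the paper: the explicit formula $n_{\mathcal{F}}(z_t)=2^{k_{\mathcal{F}}-\mathrm{rank}(G[x_t])}$, and the with-replacement convention for erasure patterns.
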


 To prove Theorem~\ref{thm:coding-theory-general}, we first prove two simple technical lemmas.

\begin{lemma}\label{lemma:prop_1_codes}
    Let $C_{\mathcal{F}}$ be a $[2^n, k_{\mathcal{F}}, d]$ linear function code. Then, we have
    \begin{equation}
        \Pr_{x \sim \mc{U}_n}[f(x) = 0 ] \leq 1 - \frac{d}{2^n}, \hspace{3ex}\forall f \in \mathcal{F}\, ,
    \end{equation}
    where $\mc{U}_n$ is the uniform distribution.
\end{lemma}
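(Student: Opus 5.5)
The plan is to translate the probability statement into a statement about Hamming weights of codewords of $C_{\mathcal F}$, and then use the fact that the minimum distance of a linear code equals the minimum weight of its nonzero codewords. Note first that, since $\mathcal F$ is a linear space, it contains the zero function, for which $\Pr_x[f(x)=0]=1$. So the inequality can only hold for nonzero $f\in\mathcal F$, and I would prove it in that form, $f\neq 0$. This is exactly the case needed later, where it is applied to differences $f-f'$ of two distinct hypotheses.

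\textbf{Step 1: $f$ corresponds to a nonzero codeword.} Fix a nonzero $f\in\mathcal F$. Because $B_{\mathcal F}$ is a basis, there is a unique $a\in\mathbb Z_2^{k_{\mathcal F}}$ with $f=f_a=\sum_i a_i e_{\mathcal F,i}$. Linear independence of the basis gives $a\neq 0$. By Definition~\ref{defn:linear_function_code}, $\mathrm{Enc}(a)=(f(x))_{x\in\mathbb Z_2^n}$. This vector is nonzero, since it is the evaluation table of a nonzero function, and it lies in $C_{\mathcal F}$.

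\textbf{Step 2: bound the weight.} For a linear code, the distance between codewords $c\neq c'$ equals $\mathrm{wt}(c-c')$, and $c-c'$ is again a codeword. Hence the minimum distance $d$ equals the minimum Hamming weight of a nonzero codeword, and so $\mathrm{wt}(\mathrm{Enc}(a))\geq d$. The weight of $\mathrm{Enc}(a)$ is exactly $|\{x: f(x)=1\}|$. Therefore
\[
\Pr_{x\sim\mathcal U_n}[f(x)=1]=\frac{|\{x: f(x)=1\}|}{2^n}\geq \frac{d}{2^n},
\]
and taking complements gives $\Pr_{x\sim\mathcal U_n}[f(x)=0]\leq 1-d/2^n$.

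There is no real obstacle beyond the bookkeeping of the zero-function caveat. The only points to state carefully are the injectivity of $\mathrm{Enc}$, which comes from the basis property, and the standard equality between minimum distance and minimum nonzero weight for linear codes.
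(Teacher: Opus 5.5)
Your proof is correct and is essentially the paper's argument: the paper also converts distance into weight via linearity, writing $f_a - f_b = f_{a-b}$ for $a\neq b$. Your restriction to nonzero $f$ is a genuine correction, since the paper's last step extends the bound to every $c\in\mathbb{Z}_2^{k_{\mathcal{F}}}$, including $c=0$ where it fails. The restriction costs nothing downstream, because Lemma~\ref{lemma:prop_3_codes} only applies the bound to $f_a$ with $a\neq 0$ orthogonal to the sampled columns, not to differences of hypotheses as you suggest.
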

\begin{proof}
    We will denote the functions by $f_a(x) = a \cdot y(x)$, where $a \in \mathbb{Z}_{2}^{k_{\mathcal{F}}}$ denotes the coefficient vector and $y(x) = e_{\mathcal{F}, 1}(x)\ldots e_{\mathcal{F}, {k_{\mc F}}}(x)$ denotes the evaluation vector of all basis elements at input $x$. As the code has distance $d$ by assumption, we have for any $a\neq b \in  \mathbb{Z}_{2}^{k_{\mathcal{F}}}$,
    \begin{equation}
        \Pr_{x} [f_a(x) - f_b(x) \neq 0] = \Pr_{x} [ f_a(x) \neq f_b(x)] \geq \frac{d}{2^n},
    \end{equation}
    where implicitly, the probability is over a uniform distribution. 
    Now, notice that $f_a - f_b = f_{a-b} \in {\mathcal{F}}$ by linearity. Thus, the above implies that $\forall a, b \in \mathbb{Z}_{2}^{k_{\mathcal{F}}}$,
    \begin{equation}
        \Pr_x[f_{a-b}(x) \neq 0] \geq \frac{d}{2^n}.
    \end{equation}
    Now, since this is true for any arbitrary choice of $a$ and $b$, we get that,
    \begin{equation}
        \Pr_x [f_c (x) \neq  0] \geq \frac{d}{2^n}, \hspace{3ex} \forall c \in \mathbb{Z}_{2}^{k_{\mathcal{F}}},
    \end{equation}
    or, 
    \begin{equation}
        \Pr_x[f_c(x) =  0] = 1 - \Pr_x[f_c(x) \neq 0] \leq 1 - \frac{d}{2^n}\, ,
    \end{equation}
    as claimed.
\end{proof}
\begin{lemma}\label{lemma:prop_3_codes}
    Let $C_{\mathcal{F}}$ be a $[2^n, k_{\mathcal{F}}, d]$ linear function code with generator matrix denoted as $G$ such that,
    \begin{equation}
        \Pr \left[ S \in {[2^n] \choose k_{\mathcal{F}}-1}: \mathrm{rank(G[S]}) = k_{\mathcal{F}}-1\right] = p \, .
    \end{equation}
    Then 
    \begin{equation}
         \Pr \left[ S \in {[2^n] \choose k_f}: \mathrm{rank(G[S]}) = k_{\mathcal{F}}\right] \geq p \cdot \frac{d}{2^n}\, .
    \end{equation}
\end{lemma}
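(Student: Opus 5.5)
We plan to build a full-rank $k_{\mathcal{F}}$-subset from a full-rank $(k_{\mathcal{F}}-1)$-subset by adding one random column. Columns of $G$ are indexed by $x\in\mathbb{Z}_2^n$, and the column at $x$ is the evaluation vector $y(x)=(e_{\mathcal{F},1}(x),\ldots,e_{\mathcal{F},k_{\mathcal{F}}}(x))\in\mathbb{Z}_2^{k_{\mathcal{F}}}$. Fix a set $S'$ of size $k_{\mathcal{F}}-1$ with $\mathrm{rank}(G[S'])=k_{\mathcal{F}}-1$. Its columns span a hyperplane $W\subset\mathbb{Z}_2^{k_{\mathcal{F}}}$. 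Hence there is a nonzero $c\in\mathbb{Z}_2^{k_{\mathcal{F}}}$ with $W=\{v: c\cdot v=0\}$.

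Adding a column $x\notin S'$ gives a rank-$k_{\mathcal{F}}$ set $S'\cup\{x\}$ if and only if $y(x)\notin W$, that is, $c\cdot y(x)=f_c(x)\neq 0$. Since $c\neq 0$ and the basis is linearly independent, $f_c$ is a nonzero function in $\mathcal{F}$. By \Cref{lemma:prop_1_codes} (really its proof, which gives $\Pr_x[f_c(x)\neq0]\ge d/2^n$), at least $d$ of the $2^n$ inputs $x$ satisfy $f_c(x)=1$. None of these lie in $S'$, because columns in $S'$ lie in $W$. So among the $2^n-(k_{\mathcal{F}}-1)$ possible extensions, at least $d$ yield full rank.

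Next we handle the counting by a double-counting (coupling) argument. Pick a uniformly random $k_{\mathcal{F}}$-subset $S$, and inside it a uniformly random element $x$ to remove. Then $S'=S\setminus\{x\}$ is a uniformly random $(k_{\mathcal{F}}-1)$-subset, and conditional on $S'$, $x$ is uniform over the complement of $S'$. Therefore
\begin{equation*}
\Pr[\mathrm{rank}(G[S])=k_{\mathcal{F}}]\;\ge\;\Pr[\mathrm{rank}(G[S'])=k_{\mathcal{F}}-1]\cdot\min_{S'\text{ full rank}}\Pr_{x\notin S'}[f_{c(S')}(x)=1]\;\ge\; p\cdot\frac{d}{2^n-k_{\mathcal{F}}+1}\;\ge\; p\cdot\frac{d}{2^n}.
\end{equation*}

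The only delicate step is checking that this coupling really produces uniform marginals, so that the event of interest is measured under the correct distribution on $k_{\mathcal{F}}$-subsets. This holds by symmetry: every pair $(S',x)$ with $x\notin S'$ arises from exactly one pair ($S=S'\cup\{x\}$, removed element $x$), and all such pairs are equally likely. It is also worth noting explicitly that distance $d$ of $C_{\mathcal{F}}$ means every nonzero codeword, here $\mathrm{Enc}(c)=(f_c(x))_x$, has weight at least $d$. This is exactly what the lower bound on $|\{x: f_c(x)=1\}|$ uses.
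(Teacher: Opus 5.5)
Your proof is correct and follows the same route as the paper. It conditions on the first $k_{\mathcal{F}}-1$ columns being independent, takes the normal vector $c$ of the hyperplane they span, and uses that the nonzero codeword $(f_c(x))_x$ has weight at least $d$ to bound below the probability that one more column leaves the hyperplane. Your version is also more careful than the paper's on two points: your coupling handles uniform subsets properly, whereas the paper implicitly samples the extra column with replacement from all $2^n$ inputs; and you apply the weight bound only to nonzero $f_c$, whereas the paper states it for every $f\in\mathcal{F}$, including $f=0$.
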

\begin{proof}
     We define an indicator random variable $X_m$ which takes the value $1$ if the first $m$ sampled columns are linearly independent. Then, the probability of interest is $P(X_{k_{\mathcal{F}}} = 1)$ which can be written as,
    \begin{equation}
        \Pr\lrq{X_{k_{\mathcal{F}}} = 1} =  \Pr\lrq{X_{k_{\mathcal{F}} - 1} = 1} \times \Pr\lrq{\text{new linearly independent sample} ~|~ X_{k_{\mathcal{F}} -1} = 1}.
    \end{equation}
    Now, notice that given $k_{\mathcal{F}}-1$ linearly independent samples $\{x_1, \ldots x_{k_{\mc F} - 1}\}$, we can define the orthogonal direction to these samples as $a$ with $a \cdot x_i = 0, \forall i$.
    Now, the next sample is linearly dependent iff it is orthogonal to $a$. We can consider function $f_a(x) = a \cdot x$, since $a \in \mathbb{Z}_2^{k_{\mathcal{F}}}$, 
    and using \Cref{lemma:prop_1_codes}, we have that
    \begin{equation}
        \Pr_{x}[f_a(x) = 0] \leq 1 - \frac{d}{2^n}.
    \end{equation}
    Hence, the probability that the new sample will be linearly dependent is at most $1 - d/2^n$. 
    Thus 
    \begin{equation}
         \Pr \lrq{ S \in {[2^n] \choose k_f}: \mathrm{rank(G[S]}) \geq k_{\mathcal{F}}} \geq p \cdot \frac{d}{2^n}
         = \Pr\lrq{X_{k_{\mathcal{F}}} = 1}\geq p \cdot \frac{d}{2^n}.
    \end{equation}
\end{proof}

We are not well-equipped to prove Theorem~\ref{thm:coding-theory-general}.
\begin{proof}[Proof of Theorem~\ref{thm:coding-theory-general}]
    We first prove bullet point 1. Note that if $C_{\mathcal{F}}^{\perp}$ can correct $k_{\mathcal{F}}$ random erasures with probability $p_1$, then from Lemma~\ref{lemma:prop_2_codes} we have
    \begin{equation}
        \Pr_{S}\left[S \in {[2^n] \choose k_{\mathcal{F}}}: \mathrm{rank}(G[S]) = k_{\mathcal{F}}\right] = p_1,
    \end{equation}
    where $G$ is the parity check matrix of $C_{\mathcal{F}}^{\perp}$, i.e., the generator matrix of $C_{\mathcal{F}}$. So, if we sample columns of $G$ uniformly at random, then with probability $p_1$, they are linearly independent. 
    Interpreting functions in $\mc F$ as computing parities of vectors of basis function evaluations, we see that to uniquely identify any function in $\mathcal{F}$, it is necessary and sufficient to see its values on inputs that index $k_{\mc F}$ many linearly independent columns of $G$.
    Hence, we can reinterpret the above as $b_{\mc F} = k_{\mc F}$ and $p_{\mc F} = p_1$.
    We can now appeal to Theorem~\ref{thm:semi-general-lower-bound} and obtain
    \begin{equation}
        \epsilon^{\star}_{SA}(\{(\mc U_n, f)\}_{f\in \mc F},t-1, t) \geq p_1/2,\hspace{3ex} \forall 1\leq t \leq k_{\mathcal{F}}.
    \end{equation}

    For the second part, again using Lemma~\ref{lemma:prop_2_codes}, we first note that if $C_{\mathcal{F}}^{\perp}$ can correct $k_{\mathcal{F}}-1$ random erasures with probability $p_2$, then
    \begin{equation}
         \Pr_{S}\left[S \in {[2^n] \choose k_{\mathcal{F}}-1}: \mathrm{rank}(G[S]) = k_{\mathcal{F}}- 1\right] = p_2.
    \end{equation}
    Then, using Lemma~\ref{lemma:prop_3_codes}, we get that the probability
    \begin{equation}
        \Pr_{S}\left[S \in {[2^n] \choose k_{\mathcal{F}}}: \mathrm{rank}(G[S]) = k_{\mathcal{F}}\right] \geq p_2 \cdot \frac{d}{2^n},
    \end{equation}
    So, using Theorem~\ref{thm:semi-general-lower-bound}, we have
    \begin{equation}
        \epsilon^{\star}_{SA}(\{(\mc U_n, f)\}_{f \in \mc F},t-1, t) \geq p_2 \cdot d/2^{n+1}, \hspace{3ex} \forall 1\leq t \leq k_{\mathcal{F}}.
    \end{equation}
\end{proof}

The above theorem has two consequences, one for coding theory and one for the problem of sample amplification. On one hand, if there exists a linear function code $C_{\mathcal{F}}$ with parameters $[2^n, k_{\mathcal{F}}, d]$ with constant fractional distance $d/2^n = \Theta(1)$ which can correct $k_{\mathcal{F}} - 1$ erasures with very high probability, the corresponding distribution class $\{(\mc U_n, f)\}_{f \in \mc F}$ cannot be sample amplified with error less than a constant threshold, starting with less than $k_{\mathcal{F}}$ samples. On the other hand, if there exists a function class for which $\{(\mc U_n, f)\}_{f \in \mc F}$ distribution is easy to sample amplify even for $t \leq k_{\mathcal{F}}$, say the error being bounded from above by $\delta$ which decays with $n$, then the corresponding code can correct $k_{\mathcal{F}}$ erasures only with probability at most $2\delta$.

\begin{corollary}
    Let $\{(\mc U_n, f)\}_{f \in \mc F}$ be a distribution class such that the minimax sample amplification error,
    \begin{equation}
        \epsilon_{SA}^{\star}(\{(\mc U_n, f)\}_{f \in \mc F}, k_{\mathcal{F}}-1, k_{\mathcal{F}}) \leq \delta(n)
    \end{equation}
    where $\delta$ is some function of $n$. Then, the linear function code $C_{\mathcal{F}}$ can correct $k_{\mathcal{F}}$ erasure only with probability atmost $2\delta(n)$ or can correct $k_{\mathcal{F}} - 1$ erasures with probability at most $2^{n+1} \delta(n)/d$.
\end{corollary}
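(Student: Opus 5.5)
This corollary is the contrapositive of Theorem~\ref{thm:coding-theory-general}, so the plan is to read both parts of that theorem backwards. Fix the distribution class $\{(\mc U_n, f)\}_{f \in \mc F}$ and assume $\epsilon_{SA}^{\star}(k_{\mathcal{F}}-1, k_{\mathcal{F}}) \leq \delta(n)$. Let $p_1$ be the probability, over uniformly random erasure patterns of size $k_{\mathcal{F}}$, that $C_{\mathcal{F}}^{\perp}$ corrects them. Let $p_2$ be the same probability for patterns of size $k_{\mathcal{F}}-1$. By Lemma~\ref{lemma:prop_2_codes}, these are exactly the probabilities that $k_{\mathcal{F}}$ (resp.\ $k_{\mathcal{F}}-1$) uniformly random columns of $G(C_{\mathcal{F}})$ are linearly independent.

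First step: apply part 1 of Theorem~\ref{thm:coding-theory-general} with $t = k_{\mathcal{F}}$. This gives $p_1/2 \leq \epsilon_{SA}^{\star}(k_{\mathcal{F}}-1,k_{\mathcal{F}}) \leq \delta(n)$, hence $p_1 \leq 2\delta(n)$. Second step: apply part 2 of the same theorem with $t = k_{\mathcal{F}}$. This gives $p_2 \cdot d/2^{n+1} \leq \delta(n)$, hence $p_2 \leq 2^{n+1}\delta(n)/d$. Both bounds hold simultaneously, so in particular the stated disjunction holds.

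I expect no real obstacle, since the whole argument is two inequalities rearranged. The only point needing care is the erasure model. The columns of $G$ are indexed by inputs $x$, and Theorem~\ref{thm:semi-general-lower-bound} is stated for i.i.d.\ uniform samples, which may repeat; the erasure probability, by contrast, is over $k_{\mathcal{F}}$-element subsets. I would keep the same convention that Theorem~\ref{thm:coding-theory-general} already used, namely identifying $p_{\mc F}$ with $p_1$, and inherit it without reopening the identification.
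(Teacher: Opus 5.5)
Your proposal is correct and matches the paper, whose proof simply cites Theorem~\ref{thm:coding-theory-general}: instantiating both parts at $t = k_{\mathcal{F}}$ and rearranging gives $p_1 \leq 2\delta(n)$ and $p_2 \leq 2^{n+1}\delta(n)/d$, so in fact both bounds hold, not just one. Like the theorem, these are erasure-correction probabilities for the dual code $C_{\mathcal{F}}^{\perp}$, and that is how the ``$C_{\mathcal{F}}$'' in the corollary's wording must be read (your proof already does so).
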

\begin{proof}
    The proof is immediate from Theorem~\ref{thm:coding-theory-general}.
\end{proof}

\subsection{Sample Amplification for the Class of all Boolean Functions}\label{sec:sa-all-functions}

In the previous sections, we saw examples function classes which structured sample amplification (with small error) is hard as learning. This provides evidence that when enough structure is present, the best strategy is to learn. 
In this section we provide a simple example of class of distributions based on Boolean functions for which sample amplification is strictly easier than learning in the sample complexity sense. 

We consider the distribution class $\mc{D} = \{(\mc U_n, f)\}_{f \in \mc F_{\mathrm{all}}}$ where $\mc F_{\mathrm{all}}$ is the class of all Boolean functions over $\Z_2^n$. Note that our lower bound technique fails to provide a meaningful lower bound in this case because the teaching dimension for $\mc F_{\mathrm{all}}$ is $2^n$  and because 
$p_{\mc F}$ 
is exponentially small. In fact, we can use the result for sample amplification of any discrete distribution~\cite{axelrod2019sampleamplificationincreasingdataset} to sample amplify $\mc F_{\mathrm{all}}$.

\begin{corollary}
        Let $\mathcal{F}$ be the class of all Boolean functions $f:\mathbb{Z}_2^n \rightarrow \mathbb{Z}_2$. Then, the class of distributions $\{(\mc U_n, f)\}_{f \in \mathcal{F}}$ admits a $(t, t+1, \epsilon)$ sample amplification scheme with $t = \Theta\left(\frac{\sqrt{2^n}}{\epsilon}\right)$.
\end{corollary}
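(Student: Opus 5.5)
The plan is to reduce directly to the discrete-distribution amplification result of \cite{axelrod2019sampleamplificationincreasingdataset}, which gives a $(t, t+\Theta(t\epsilon/\sqrt{k}), \epsilon)$ amplifier for any distribution on a support of size $k$ whenever $t=\Omega(\sqrt{k}/\epsilon)$. Each distribution $(\mc U_n, f)$ with $f\in\mc F_{\mathrm{all}}$ is a distribution on $\mc X=\Z_2^n\times\Z_2$, which has $2^{n+1}$ elements. Crucially, its support is exactly $\{(x,f(x)) : x\in\Z_2^n\}$, so it has size only $k=2^n$. The whole class $\{(\mc U_n,f)\}_{f\in\mc F_{\mathrm{all}}}$ is therefore contained in the class of all discrete distributions on a domain of size $2^{n+1}$.

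The first step is to invoke the Axelrod et al.\ amplifier with $k=2^{n+1}$. The constant factor $\sqrt 2$ is absorbed into the $\Theta$. With $t=C\sqrt{2^n}/\epsilon$ for a suitable constant $C$, this amplifier produces $t+m$ samples with $m=\Theta(t\epsilon/\sqrt{k})\ge 1$, at TV error at most $\epsilon$, uniformly over all distributions on the domain.

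The second step passes from $t+m$ samples to $t+1$. I would keep the first $t+1$ output samples and discard the rest. Marginalizing both the amplified distribution and $D^{\otimes t+m}$ onto their first $t+1$ coordinates is a common stochastic map, and by data processing it cannot increase TV distance. The error is therefore still at most $\epsilon$. Since the bound holds uniformly over all distributions of the given support size, it holds in particular under $\sup_{f\in\mc F_{\mathrm{all}}}$, which gives a $(t,t+1,\epsilon)$ scheme in the sense of Definition~\ref{def:structured-sample-amplification}.

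The only delicate point is checking the hypotheses and quantifiers of the cited result. I must confirm that its guarantee is worst-case over all distributions on a support of size $k$ (not, for example, requiring a known support). Otherwise, I would fall back on the domain size $2^{n+1}$, which only changes constants.

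For contrast, I would also remark that learning $\mc F_{\mathrm{all}}$ to small constant error requires $\Omega(2^n)$ samples, since one must see a constant fraction of the inputs. This is why the $\Theta(\sqrt{2^n})$ amplification bound gives a quadratic separation, in contrast to the parity case.
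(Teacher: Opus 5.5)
Your proposal is correct and is essentially the paper's proof: the paper likewise observes that each $(\mc U_n, f)$ is a discrete distribution supported on $2^n$ points and invokes \cite[Theorem 1]{axelrod2019sampleamplificationincreasingdataset}. Your extra steps (using domain size $2^{n+1}$, and truncating the $t+m$ amplified samples to $t+1$ via data processing) only spell out details the paper leaves implicit and change nothing beyond constants.
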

\begin{proof}
    As the class of distributions $\mc{D} = \{(\mc U_n, f)\}_{f \in \mc F_{\mathrm{all}}}$ consists of distributions with support size $2^{n}$, this immediately follows from \cite[Theorem 1]{axelrod2019sampleamplificationincreasingdataset}.
\end{proof}

In particular, the complexity of sample amplification for $\mc F_{\mathrm{all}}$ is quadratically better than the complexity of learning the same class, which by coupon collector is $\widetilde{\Theta}(2^n)$.

\section{Structured Cloning}\label{section:sc}
We can define structured quantum cloning analogously as structured sample amplification.

\begin{definition}[Structured Quantum Cloning]\label{defn:structured-quantum-cloning}
    Let $\mc S$ be a class of $n$-qubit states. Then, $\mc S$ is said to admit $(t, t+m, \epsilon)$-quantum cloning scheme if there exists a CPTP map $\Lambda_{S, t, m, \epsilon}:\mc{B}((\mathbb{C}^{2^n})^{\otimes t}) \rightarrow \mc{B}((\mathbb{C}^{2^n})^{\otimes t+m})$ such that,
    \begin{equation}
        \sup_{\rho \in \mc{S}} \td  \lr{\Lambda_{\mc{S}, t, m, \epsilon}(\rho^{\otimes t}), \rho^{\otimes t+m}} \leq \epsilon,
    \end{equation}
    and the optimal cloning error for the class of states $\mc S$ is defined by minimisung the cloning error over all CPTP maps,
    \begin{equation}
        \epsilon^{\star}_{\Cl}(\mc S, t, t+m) := \min_{\Lambda}  \sup_{\rho \in \mc S} \td  \lr{\Lambda(\rho^{\otimes t}), \rho^{\otimes t+m}}.
    \end{equation}
\end{definition}

We consider classes of states that have symmetries. Let $G$ be an Abelian group, and let $\Psi_{H}^{\epsilon}$ be the set of states that $\epsilon$-hides the subgroup $H \leq G$. Then we consider the cloning of $\mc{S} _{\alpha} = \bigcup_{H, |H|= \alpha} \Psi_{H}^{\epsilon}$, i.e., states that are promised to hide an unknown hidden subgroup of fixed order. Here, we do not make any assumptions about the purity of the states. Thus, $\mc{S}_{\alpha}$ also contains mixed states, which is relevant for our lower bound proof below. As mentioned earlier, our cloning lower bound technique is based on learning error and requires sample complexity lower bounds for learning that are tight up to additive constants. We show that for $\mc{S}_{\alpha}$, the sample complexity of learning can be determined up to additive constants by considering the task of learning the hidden subgroup. To argue this, we identify the optimal measurement for the Abelian state hidden subgroup problem in the next section.

\subsection{Optimality of Character POVMs for a Class of Abelian StateHSPs}

We show that in the worst case sense, character POVMs are optimal up to classical post-processing for solving a class of Abelian StateHSP. We show this optimality for Abelian StateHSPs with non-isomorphic irreducible representations. To show this, we use the following result~\cite{wright2016learn}.

\begin{proposition}[\cite{wright2016learn}, Proposition 1.2.7]\label{thm:block_diagonal_optimal_measurement}
    Suppose $\rho$ is block diagonal with blocks corresponding to the (known) orthogonal projectors $\{\Pi_i\}_i$. Then,
    \begin{itemize}
        \item One can without loss of generality pre-process any quantum algorithm that acts on $\rho$ by measuring $\{\Pi_i\}_i$ on $\rho$.
        \item If $\rho$ is a multiple of the identity in each block, then $\{\Pi_i\}_i$ is an optimal measurement for extracting information from $\rho$.
    \end{itemize}
\end{proposition}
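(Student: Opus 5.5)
This is Proposition 1.2.7 from Wright's thesis. I would prove both bullets by showing that the block measurement $\{\Pi_i\}_i$ can be inserted in front of any channel or measurement without changing its action on $\rho$.

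\emph{First bullet.} Since $\rho$ is block diagonal with respect to the orthogonal projectors $\{\Pi_i\}_i$, which sum to the identity on the relevant space, we have
\begin{equation*}
\rho=\sum_i \Pi_i\rho\Pi_i .
\end{equation*}
The right-hand side is exactly the post-measurement ensemble after measuring $\{\Pi_i\}_i$ and forgetting the outcome. So for any quantum algorithm, modelled as a channel $\mathcal{E}$ or a POVM $\{M_k\}_k$,
\begin{equation*}
\mathcal{E}(\rho)=\sum_i \mathcal{E}(\Pi_i\rho\Pi_i),
\end{equation*}
and the outcome statistics are unchanged. Hence the algorithm "first measure $\{\Pi_i\}$, then run $\mathcal{E}$ on the post-measurement state" produces the identical output. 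Keeping the outcome $i$ as an extra classical register only adds information, so nothing is lost by this pre-processing. The one thing to track is that the algorithm may act on $\rho$ tensored with an ancilla $\sigma$. This is handled by applying $\Pi_i\otimes I$ to $\rho\otimes\sigma$, which is still block diagonal.

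\emph{Second bullet.} Now suppose $\Pi_i\rho\Pi_i=q_i\,\Pi_i/\tr\Pi_i$, where $q_i=\tr(\rho\Pi_i)$. After the pre-measurement of the first bullet, the post-measurement state conditioned on outcome $i$ is $\Pi_i/\tr\Pi_i$. This state is independent of $\rho$, i.e., it is the same for every state in the family under consideration, because the blocks are fixed and known. The remaining algorithm therefore acts on a state that carries no information about $\rho$. Any subsequent processing is a classical randomized post-processing of $i$: the algorithm can itself prepare $\Pi_i/\tr\Pi_i$ from $i$ and apply the rest of the channel. Consequently, every algorithm's output distribution on $\rho$ equals some classical post-processing of the outcome distribution $q$. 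This is the sense in which $\{\Pi_i\}_i$ is optimal for extracting information.

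The main subtlety is formalising "optimal". I would state it as the simulation claim just given: for any channel $\mathcal{E}$ there is a stochastic map $K$ with $\mathcal{E}(\rho)=\sum_i q_i\,\sigma_i$, where
\begin{equation*}
\sigma_i=\mathcal{E}\bigl(\Pi_i/\tr\Pi_i\bigr),
\end{equation*}
uniformly for all such $\rho$. From this, optimality holds for any figure of merit, worst-case or average, since every competing strategy is dominated by post-processing $q$. The ancilla case goes through exactly as in the first bullet.
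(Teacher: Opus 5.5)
Your proof is correct and follows the same route as the paper's sketch. Block diagonality gives $\rho=\sum_i\Pi_i\rho\Pi_i$, so the block measurement can be inserted for free; when each block is maximally mixed, the post-measurement state $\Pi_i/\tr\Pi_i$ does not depend on $\rho$, so all further processing is classical post-processing of the outcome $i$. Your write-up makes explicit what the paper leaves as a two-line sketch, including the ancilla case and the simulation formulation of ``optimal''.
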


The proof of this result is simple. It is easy to see that, assuming a block diagonal structure, we can always perform the projective measurement onto the blocks without losing any information. 
Moreover, if the state is proportional to the maximally mixed state within each block then we cannot get any more information after identifying the block. 

Let us now sketch how we will use \Cref{thm:block_diagonal_optimal_measurement} before presenting the detailed proof. 
For an Abelian StateHSP $(G, \mu, \Psi_H^{\epsilon})$ and for any $\rho\in \Psi_H^{\epsilon}$, we consider the state
\begin{equation}
    \sigma = \frac{1}{|G|}\sum_{g \in G} \mu(g) \rho \mu(g)^{\dagger}\, .
\end{equation} 
We will argue that this state is block-diagonal with respect to the character POVM $\{\Pi_\lambda\}_\lambda$. Then, using the non-isomorphic property of irreducible representations, we will argue that each block is one-dimensional and hence the above state is also maximally mixed in each block. We can combine this with Proposition~\ref{thm:block_diagonal_optimal_measurement} to obtain the optimality of character POVMs on $\sigma$. 
However, this optimality result is limited, it only applies to algorithms that process single copies of $\sigma$ at a time, but it does not account for general multi-copy measurements. To show this optimality among this more general class of algorithms, we consider $\sigma^{\otimes t}$, which is block diagonal with respect to tensor powers of single-copy character POVMs. As each block is still one-dimensional, this will lead to the optimality of single-copy character POVMs even when multi-copy measurements are allowed. Intuitively, the optimality can be understood by noting that allowing for multiple i.i.d.~copies of $\sigma$ is equivalent to allowing non-i.i.d. states from $\Psi_H^\epsilon$, and in this scenario it is easy to see that single-copy character POVMs are optimal.

The rest of this subsection makes the above reasoning precise to establish the following result:

\begin{theorem}[Optimality of Character POVMs for Abelian StateHSP]\label{thm:optimality-statehsp}
    Let $(G, \mu, \Psi_H^{\epsilon})$ be an Abelian StateHSP problem such that the unitary representation $\mu$ has non-isomorphic irreducible representations. Then, the character POVM $\{\Pi_{\lambda}\}_{\lambda}$ is the optimal procedure to solve the Abelian StateHSP in the worst-case sense.
\end{theorem}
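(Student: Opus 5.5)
The plan is a worst-case to twirled-instance reduction followed by an application of Proposition~\ref{thm:block_diagonal_optimal_measurement}.

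\textbf{Reduction to twirled instances.} Fix any $\rho\in\Psi_H^{\epsilon}$ and define the twirled state
\begin{equation*}
\sigma_\rho=\frac{1}{|G|}\sum_{g\in G}\mu(g)\rho\mu(g)^\dagger .
\end{equation*}
Because $G$ is Abelian, each $\mu(g)$ commutes with $\mu(h)$. Hence $\tr(\mu(h)\sigma_\rho)=\tr(\mu(h)\rho)$ for every $h$, so $\sigma_\rho\in\Psi_H^{\epsilon}$ as well.

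Now take any (possibly multi-copy, adaptive) procedure that solves the StateHSP in the worst case from $t$ copies. On the instance $\sigma_\rho$ it must succeed with at least its worst-case probability. Moreover, $\sigma_\rho^{\otimes t}$ is obtained from $\rho^{\otimes t}$ by independently applying a uniformly random $\mu(g_i)$ to each copy, and each $\mu(g_i)$ preserves $H$. It therefore suffices to show that, on inputs of the form $\sigma^{\otimes t}$ with $\sigma$ twirl-invariant, measuring the character POVM on each copy and then classically post-processing is optimal. In addition, the character POVM has the same outcome distribution $q_\rho=q_{\sigma_\rho}$ on $\rho$ and on $\sigma_\rho$. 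This is because $\Pi_\lambda$ commutes with every $\mu(g)$, as follows from Fact~\ref{fact:character-povm}. So the character-POVM strategy attains the same performance on every instance as on its twirl.

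\textbf{Block structure and optimality.} First show that $\sigma$ is block diagonal with respect to $\{\Pi_\lambda\}_\lambda$. Using Fact~\ref{fact:character-povm} and the fact that $\sigma$ commutes with all $\mu(g)$, we get $\Pi_\lambda\sigma\Pi_{\lambda'} = \sigma\Pi_\lambda\Pi_{\lambda'} = 0$ for $\lambda\neq\lambda'$.

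The non-isomorphic assumption means each isotypic component is one-dimensional, that is, $\Pi_\lambda=|\lambda\rangle\langle\lambda|$. Hence $\sigma=\sum_\lambda q_\rho(\lambda)\Pi_\lambda$, which is trivially a multiple of the identity within each block.

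For $t$ copies, $\sigma^{\otimes t}=\sum_{\lambda_1,\dots,\lambda_t}\prod_i q_\rho(\lambda_i)\,\Pi_{\lambda_1}\otimes\cdots\otimes\Pi_{\lambda_t}$. This is block diagonal with respect to the product projectors $\Pi_{\lambda_1}\otimes\cdots\otimes\Pi_{\lambda_t}$, whose blocks are again one-dimensional. Applying both bullets of Proposition~\ref{thm:block_diagonal_optimal_measurement} to $\sigma^{\otimes t}$ shows that measuring these product projectors, which is the same as performing the single-copy character POVM on each copy, loses no information. Any general $t$-copy measurement is therefore simulated by this measurement followed by classical post-processing. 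In fact the whole state collapses to a classical distribution over i.i.d.\ samples from $q_\rho$.

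\textbf{Combining and main obstacle.} Any algorithm's worst-case success probability over $\Psi_H^\epsilon$ is at most its success on twirled instances. On twirled instances it is at most that of the best classical post-processing of character-POVM samples. That post-processing achieves the same success on the original $\rho$. Hence the character POVM followed by optimal post-processing is worst-case optimal.

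The step needing most care is the worst-case reduction. One must make sure that passing to $\sigma_\rho$ does not change the set of admissible instances, that is, that the hiding conditions are preserved, which is why Abelianness is used. One must also be precise about "optimal". I would phrase the theorem as a statement about worst-case success probability of identifying $H$: for every algorithm $\mathcal{A}$ there is a classical post-processing $f$ such that $\min_{\rho}\Pr[f(\lambda_1,\dots,\lambda_t)=H]\ge\min_\rho\Pr[\mathcal{A}(\rho^{\otimes t})=H]$.
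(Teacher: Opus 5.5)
Your proposal is correct and follows essentially the same route as the paper: twirl $\rho$ to $\sigma=\frac{1}{|G|}\sum_{g}\mu(g)\rho\mu(g)^\dagger$, use Abelianness so that $\sigma$ stays in $\Psi_H^{\epsilon}$ and commutes with every $\mu(g)$, note that $\sigma^{\otimes t}$ is diagonal in the product character basis with one-dimensional blocks, and invoke Proposition~\ref{thm:block_diagonal_optimal_measurement}. You also state explicitly that $q_\rho=q_{\sigma_\rho}$, because each $\Pi_\lambda$ commutes with $\mu(g)$; this spells out a step the paper leaves implicit when it says one may ``assume we are given $\sigma^{\otimes t}$,'' namely that the character-POVM strategy does no worse on the original instance than on its twirl.
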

\begin{proof}
     Let $t$ be the number of copies provided.
     We assume that we are allowed any arbitrary $t$-copy measurements on instances of $\Psi_H^{\epsilon}$, i.e., we are allowed to make measurements on $\rho^{\otimes t}$ with $\rho \in \Psi_H^{\epsilon}$. Now, for any $\rho \in \Psi_H^{\epsilon}$, define
    \begin{equation}\label{eqn:worst-case-state}
        \sigma = \frac{1}{|G|}\sum_{g \in G} \mu(g) \rho \mu(g)^{\dagger}.
    \end{equation}

    \begin{claim}\label{claim:t-twirl-state-closure}
        Let $\rho \in \Psi_H^{\epsilon}$, for an underlying Abelian Group $G$ with the unitary representation $\mu$ and let
        \begin{equation}
            \sigma = \frac{1}{|G|}\sum_{g \in G} \mu(g) \rho \mu(g)^{\dagger},
        \end{equation}
        then, 
        \begin{enumerate}
            \item $\sigma \in \Psi_H^{\epsilon}$,
            \item $[\sigma, \mu(g)] = 0, \forall g \in G$.
        \end{enumerate}
    \end{claim}
    \begin{proof}
        See \Cref{a:proofs}.
    \end{proof}
    
    As a consequence of bullet point 1 in Claim~\ref{claim:t-twirl-state-closure}, $\sigma$ is also a valid instance of the Abelian StateHSP. So, for the purposes of a worst-case analysis, we can assume that we are given $\sigma^{\otimes t}$. 
    As a consequence of bullet point 2 of Claim~\ref{claim:t-twirl-state-closure}, $\sigma$ has the same eigenvectors as $\mu(g)$, i.e., $\sigma$ is diagonal in the basis $\{|\lambda_i,  v_{\lambda_i}\rangle = |\lambda_i \rangle\}_i$, where we can drop the $v_{\lambda_i}$ because the irreducible representations are by assumption non-isomorphic.
    Now, we define the $t$-copy POVM $\{\bigotimes_{i}^t \Pi_{\lambda_i}\}$, with character POVM elements $\Pi_{\lambda_i} = |\lambda_i \rangle\langle \lambda_i |$, which because of the assumption on non-isomorphic irreducible representations are rank-one projections. 
    Note that:
    \begin{enumerate}
        \item $\sigma^{\otimes t}$ is diagonal with respect to the basis projected on by the POVM elements $\{\bigotimes_{i= 1}^t \Pi_{\lambda_i}\}$.
        \item As $\sigma^{\otimes t}$ is block-diagonal with blocks of size $1$, it is proportional to the identity in each block. %
    \end{enumerate}
    Hence, by \Cref{thm:block_diagonal_optimal_measurement}, the POVM $\{\bigotimes_{i = 1}^t \Pi_{\lambda_i}\}$ is optimal to solve the Abelian StateHSP in the worst case sense.
\end{proof}

In the case of the phaseless stabilizer StateHSP, we have non-isomorphic irreducible representations. This yields the optimality of Bell sampling for phaseless stabilizer StateHSP.

\subsection{Structured Random Purification}
The random purification channel allows to transform $t$ i.i.d.~copies of a mixed states into $t$ i.i.d.~copies of a random purification of the mixed state~\cite{tang2025conjugatequerieshelp, girardi2025randompurificationchannelsimple} .
Below, we provide a structured version of the random purification map for a class of Abelian StateHSP problems with non-isomorphic irreducible representations. Namely, we give a CPTP map that transforms i.i.d.~copies of any mixed state that is diagonal with respect to the character POVMs to i.i.d.~copies of a random purification. 

\begin{theorem}[Structured Random Purification]\label{thm:algo-purification}
    Let $\sigma\in \Psi_H^{\epsilon}$ be a mixed state instance of an Abelian StateHSP $(G, \mu, \Psi_H^{\epsilon})$ with non-isomorphic irreducible representations.
    Assume that $\sigma = \sum_{\lambda} a_{\lambda}|\lambda\rangle \langle \lambda|$ is diagonal with respect to the character POVM $\{\Pi_\lambda = |\lambda\rangle \langle \lambda|\}_{\lambda}$.
    Then, for any $t$, there is a CPTP map $C^t$ such that
    \begin{equation}
        C^t(\sigma^{\otimes t}) = \Ex_g |\sigma_g\rangle \langle \sigma_g|^{\otimes t}\, ,
    \end{equation}
    where $|\sigma_g\rangle = (I \otimes \mu(g))|\sigma\rangle$ with $|\sigma\rangle = \sum_{\lambda} \sqrt{a_{\lambda}} |\lambda\rangle \otimes |\lambda\rangle$ is a purification of $\sigma$ for any $g\in G$.
    We call $|\sigma_g\rangle $ the \emph{$g$-purification} of $\sigma$.
\end{theorem}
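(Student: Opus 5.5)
The plan is to compute the target state explicitly, then build a channel that reproduces it, using \Cref{thm:general_random_purification} as a template and checking that construction directly.

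\textbf{Step 1: expand the target.} Since $\mu(g)\ket{\lambda}=\chi_\lambda(g)\ket{\lambda}$, we have $\ket{\sigma_g}=\sum_\lambda \sqrt{a_\lambda}\,\chi_\lambda(g)\ket{\lambda}\ket{\lambda}$. Write $\vec\lambda=(\lambda_1,\dots,\lambda_t)$ and $a^{\vec\lambda}=\prod_i a_{\lambda_i}$. Then
\begin{equation*}
\Ex_g \ketbra{\sigma_g}^{\otimes t}=\sum_{\vec\lambda,\vec\lambda'}\sqrt{a^{\vec\lambda}a^{\vec\lambda'}}\;\Ex_g\Big[\prod_i\chi_{\lambda_i}(g)\overline{\chi_{\lambda'_i}(g)}\Big]\;\ket{\vec\lambda}\ket{\vec\lambda}\bra{\vec\lambda'}\bra{\vec\lambda'} .
\end{equation*}
Products of characters of an Abelian group are again characters. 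So by Lemma~\ref{lemma:schur-on} the expectation equals $\delta_{\kappa(\vec\lambda)=\kappa(\vec\lambda')}$, where $\kappa(\vec\lambda)=\prod_i\chi_{\lambda_i}\in\widehat G$ is the total charge. The target is therefore block diagonal in $\kappa$. Each block is the rank-one operator $\ketbra{v_\kappa}$ with $\ket{v_\kappa}=\sum_{\kappa(\vec\lambda)=\kappa}\sqrt{a^{\vec\lambda}}\ket{\vec\lambda}\ket{\vec\lambda}$.

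\textbf{Step 2: structure of the input.} The input $\sigma^{\otimes t}=\sum_{\vec\lambda}a^{\vec\lambda}\ketbra{\vec\lambda}$ is diagonal. By Claim~\ref{claim:t-twirl-state-closure} it commutes with every $\mu(g)^{\otimes t}$. Moreover $\mu(g)^{\otimes t}$ acts on $\ket{\vec\lambda}$ by the scalar $\kappa(\vec\lambda)(g)$, so its isotypic projectors are $P_\kappa=\sum_{\kappa(\vec\lambda)=\kappa}\ketbra{\vec\lambda}$. The intended channel follows the recipe in \eqref{eqn:general_random_purification_channel} for the algebra $\mathcal{A}=\{\mu(g)^{\otimes t}\}'$:
\begin{enumerate}
\item measure $\{P_\kappa\}$;
\item apply the isometry $\ket{\vec\lambda}\mapsto\ket{\vec\lambda}\ket{\vec\lambda}$, which copies each basis label into the purifying register;
\item symmetrize by averaging the purifying register over $\mu(g)^{\otimes t}$.
\end{enumerate}
Part 1 of \Cref{thm:general_random_purification}, with the closed group $\{\mu(g)^{\otimes t}\}_g$ and the basis $\{\ket{\vec\lambda}\}$ on the purifying side, would then give exactly $\Ex_g(I\otimes\mu(g)^{\otimes t})\ketbra{\psi^{\mathrm{std}}}(I\otimes\mu(g)^{\dagger\otimes t})$. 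This expression equals the Step 1 expression, since $\ket{\psi^{\mathrm{std}}}=\ket{\sigma}^{\otimes t}$ after reordering registers. Complete positivity and trace preservation are inherited from that theorem.

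\textbf{Main obstacle.} Part 1 of \Cref{thm:general_random_purification} requires $\sigma^{\otimes t}\in\mathbb{C}G$ for the \emph{diagonal} group $G=\{\mu(g)^{\otimes t}\}$. Step 2 only shows that $\sigma^{\otimes t}$ commutes with $G$. By Fact~\ref{fact:character-povm}, $\sigma=\sum_\lambda a_\lambda\Pi_\lambda$ lies in $\mathbb{C}\{\mu(g)\}$, so $\sigma^{\otimes t}$ lies only in the algebra of the product group $G^t$.

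I would first try to close this gap by showing that, within each block $P_\kappa$, the diagonal input can be coherently mapped to $\ketbra{v_\kappa}$. Here the weights $a^{\vec\lambda}$ are constant on the relevant support in the instances we care about; for $\sigma_L$ they are uniform on $(L^{\perp})^t$, so $\ket{v_\kappa}$ is a flat superposition over a set determined by the input.

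If this fails, I would fall back on the product group $G^t$. That yields $\big(\Ex_g\ketbra{\sigma_g}\big)^{\otimes t}$, i.i.d.\ copies of a random purification with independent $g$'s. I would then check whether this weaker statement still suffices for the cloning reduction. Identifying which of the two versions is actually true, and needed, is the step I expect to be hardest.
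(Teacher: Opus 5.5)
\textbf{Your Step 1 is correct, and it locates exactly where the paper's own proof breaks.} The paper does the same expansion. It then claims that $\bar\lambda\cong\bar\eta$, i.e.\ $\prod_i\chi_{\lambda_i}=\prod_i\chi_{\eta_i}$, holds only when $\bar\eta$ is a permutation of $\bar\lambda$. This does not follow from the irreps of $\mu$ being non-isomorphic. For example, for $G=\Z_2^{2n}$ the tuples $(0,0)$ and $(y,y)$ have the same product character for every $y$. As a result, the paper's channel (measure the permutation type $\Pi_{\{\bar\lambda\}}$, then prepare $\rho_{\{\bar\lambda\}}$) actually outputs $\Ex_\theta |\sigma_\theta\rangle\langle\sigma_\theta|^{\otimes t}$. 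Here $|\sigma_\theta\rangle=\sum_\lambda\sqrt{a_\lambda}\,e^{i\theta_\lambda}|\lambda\rangle|\lambda\rangle$ with independent uniform phases $\theta_\lambda$. That is a legitimate random purification, but over the torus of phases in the $|\lambda\rangle$ basis, not over $\{\mu(g)\}_g$. It lacks precisely the coherences between different types with equal total charge $\kappa$ that your $|v_\kappa\rangle$ contains. So following the paper will not close your gap.

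\textbf{The gap is genuine and cannot be closed.} The reductions in \Cref{corollary:purified-stateHSP-cloning-lower-bound} and \Cref{lemma:4n_qubit_stab} need a channel that does not depend on $\sigma$, and no such channel exists.

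Your concrete three-step recipe outputs the dephased state $\sum_{\vec\lambda}a^{\vec\lambda}|\vec\lambda\vec\lambda\rangle\langle\vec\lambda\vec\lambda|$. Measuring $P_\kappa$, copying labels and twirling a diagonal input never creates off-diagonal terms.

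Your proposed in-block repair is also impossible. Take $\sigma_L$ and $t=2$ (any $t\ge 2$ works the same way).
\begin{itemize}
\item The target has support $\mathrm{span}\{|w^L_\kappa\rangle\}_{\kappa\in L^\perp}$, where $|w^L_\kappa\rangle=\sum_{y\in L^\perp}|\vec y\rangle|\vec y\rangle$ with $\vec y=(y,\kappa+y)$, in your Bell-label notation.
\item A vector in this span has coefficient $c_{y_1+y_2}$ on $|\vec y\rangle|\vec y\rangle$ for $\vec y\in(L^\perp)^2$, and $0$ elsewhere.
\item If such a vector also lay in the corresponding span for some $L'\neq L$, any $c_\kappa\neq 0$ would force $L^\perp\subseteq (L')^\perp$, hence $L=L'$. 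So the two supports intersect only in $0$.
\item But $\sigma_L^{\otimes 2}\succeq 2^{-2n}|\Phi_0\Phi_0\rangle\langle\Phi_0\Phi_0|$ for every $L$. By positivity, the state $C(|\Phi_0\Phi_0\rangle\langle\Phi_0\Phi_0|)$ would have to be supported in that trivial intersection, which is a contradiction.
\end{itemize}

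Hence the theorem can only hold in its literal reading, where $C^t$ may depend on $\sigma$. In that reading it is trivially true via the replacement channel $X\mapsto\tr(X)\,\Ex_g|\sigma_g\rangle\langle\sigma_g|^{\otimes t}$, which is useless for the cloning reduction.

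Your fallback $(\Ex_g|\sigma_g\rangle\langle\sigma_g|)^{\otimes t}$ does not help either. A pure-state cloner is only guaranteed to work on inputs $|\psi\rangle^{\otimes(t-1)}$, not on $\bigotimes_i|\sigma_{g_i}\rangle$ with independent $g_i$.
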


\Cref{thm:algo-purification} will allow us to map copies of the worst case instance for Abelian StateHSP constructed in the proof of \Cref{thm:optimality-statehsp} to copies of a random purification. This will enable us to infer pure Abelian StateHSP lower bounds from the mixed Abelian StateHSP lower bound of \Cref{thm:algo-purification}.

\begin{proof}
    We begin by rewriting our target state:
    \begin{align}
        &\Ex_g |\sigma_g\rangle \langle \sigma_g|^{\otimes t}\\
        &= \Ex_g \sum_{\substack{\lambda_1 \ldots \lambda_t \\ \eta_1, \ldots \eta_t}}\sqrt{a_{\lambda_1}\ldots a_{\lambda_t}}\sqrt{a_{\eta_1}\ldots a_{\eta_t}} |\lambda_1 \ldots \lambda_t \rangle \langle \eta_1 \ldots \eta_t| \otimes \mu(g)^{\otimes t}|\lambda_1 \ldots \lambda_t \rangle \langle \eta_1 \ldots \eta_t| (\mu(g)^{\dagger})^{\otimes t}\\
        &= \sum_{\substack{\lambda_1 \ldots \lambda_t \\ \eta_1, \ldots \eta_t}}\sqrt{a_{\lambda_1}\ldots a_{\lambda_t}}\sqrt{a_{\eta_1}\ldots a_{\eta_t}} |\lambda_1 \ldots \lambda_t \rangle \langle \eta_1 \ldots \eta_t| \otimes  \Ex_g\mu(g)^{\otimes t}|\lambda_1 \ldots \lambda_t \rangle \langle \eta_1 \ldots \eta_t| (\mu(g)^{\dagger})^{\otimes t}\\
        &= \sum_{\bar{\lambda}, \bar{\eta}} \sqrt{a_{\bar{\lambda}}a_{\bar{\eta}}} |\bar{\lambda}\rangle \langle \bar{\eta}| \otimes \Ex_g \mu(g)^{\otimes t} |\bar{\lambda}\rangle \langle \bar{\eta}| (\mu(g)^{\dagger})^{\otimes t}\, .
    \end{align}
    Now, observe that
    \begin{equation}
        \Ex_g \mu(g)^{\otimes t} |\bar{\lambda}\rangle \langle \bar{\eta}| (\mu(g)^{\dagger})^{\otimes t} = \delta_{\bar{\lambda} \cong \bar{\eta}} |\bar{\lambda}\rangle \langle \bar{\eta}|\, .
    \end{equation}
    This can be seen by noticing that $\mu(g)^{\otimes t} |\bar{\lambda}\rangle = \chi_{\bar{\lambda}}(g)|\bar{\lambda}\rangle$ and thus we can use Schur's orthonormality condition (Lemma~\ref{lemma:schur-on}) to obtain
    \begin{equation}
        \Ex_g \mu(g)^{\otimes t} |\bar{\lambda}\rangle \langle \bar{\eta}| \mu(g)^{\otimes t, \dagger} = \Ex_g \chi_{\bar{\lambda}}(g)\overline{\chi_{\bar{\eta}}(g)}|\bar{\lambda}\rangle \langle \bar{\eta}| = \delta_{\bar{\lambda} \cong \bar{\eta}} |\bar{\lambda}\rangle \langle \bar{\eta}|\, .
    \end{equation}
    Thus, we obtain
    \begin{equation}
        \Ex_g |\sigma_g\rangle \langle \sigma_g|^{\otimes t}  = \sum_{\substack{\bar{\lambda}, \bar{\eta} \\ \bar{\lambda} \cong \bar{\eta}}} \sqrt{a_{\bar{\lambda}}a_{\bar{\eta}}} |\bar{\lambda}\rangle \langle \bar{\eta}| \otimes |\bar{\lambda}\rangle \langle \bar{\eta}|\, .
    \end{equation}
    The character of the representation $\bar{\lambda}= \lambda_1 \ldots \lambda_t$ is $\chi_{\bar{\lambda}}=\chi_{\lambda_1}\ldots \chi_{\lambda_t}$. Hence, because we assumed the irreducible representations of $\mu$ to be non-isomorphic, and using that two representations are isomorphic if and only if they have the same characters, we see that $\bar{\lambda} \cong \bar{\eta}$ holds if and only if $\bar{\lambda}$ and $\bar{\eta}$ are equal up to a permutation, i.e., $\bar{\eta} = \pi \bar{\lambda}$ for some $\pi \in S_t$. In this case, clearly $a_{\bar{\lambda}} = a_{\bar{\eta}}$. Hence, we can write 
    \begin{equation}\label{eq_mixture_of_purification}
         \Ex_g |\sigma_g\rangle \langle \sigma_g|^{\otimes t} = \sum_{\{\bar{\lambda}\}} f(\{\bar{\lambda}\})\frac{a_{\{\bar{\lambda}\}}}{f(\{\bar{\lambda}\})} \sum_{\substack{\bar{\eta}, \bar{\eta}' \\ {\bar{\eta} \cong \bar{\lambda}} \\ \bar{\eta}' \cong \bar{\lambda}}} |\bar{\eta} \bar{\eta} \rangle \langle \bar{\eta}' \bar{\eta}'| =  \sum_{\{\bar{\lambda}\}} f(\{\bar{\lambda}\}) a_{\{\bar{\lambda}\}} \rho_{\{\bar{\lambda}\}}\, ,
    \end{equation}
    where we have used $\{\bar{\lambda}\}$ to denote a representative for each equivalence class of isomorphic tensor product irreps, where $f(\{\bar{\lambda}\})$ denotes the number of distinct permutations of the tuple $\bar{\lambda}$, and where we have defined the state
    \begin{equation}
        \rho_{\{\bar{\lambda}\}} = \frac{1}{f(\{\bar{\lambda}\})}\sum_{\substack{\bar{\eta}, \bar{\eta}' \\ {\bar{\eta} \cong \bar{\lambda}} \\ \bar{\eta}' \cong \bar{\lambda}}} |\bar{\eta} \bar{\eta} \rangle \langle \bar{\eta}' \bar{\eta}'|\, .
    \end{equation}
    
    We define the POVM $\{\Pi_{\{\bar{\lambda}\}}\}_{\{\bar{\lambda}\}}$, where
    \begin{equation}\label{eq:structured-purification-block-projection}
        \Pi_{\{\bar{\lambda}\}} =  \sum_{\substack{\bar{\eta} \\ {\bar{\eta} \cong \bar{\lambda}}}} |\bar{\eta} \rangle \langle \bar{\eta}|.
    \end{equation}
    With this, we can describe our structured random purification procedure
    \begin{algorithm}
            \caption{ Structured Random Purification Channel $C^t$}\label{alg:structured_random_purification}
            \algorithmicrequire{ $\sigma^{\otimes t}$, where $\sigma$ is a mixed state instance of the Abelian StateHSP $(G, \mu, \Psi_H^{\epsilon})$ \\
            \textbf{Assumption:} The Abelian StateHSP $(G, \mu, \Psi_H^{\epsilon})$ has non-isomorphic irreducible representations, and $\sigma$ is block diagonal with respect to these irreducible representations.} \\
            \algorithmicensure{ $\Ex_g |\sigma_g\rangle \langle \sigma_g |^{\otimes t}$ } \\
            1. Measure $\sigma^{\otimes t}$ in the basis $\{\Pi_{\{\bar{\lambda}\}}\}_{\{\bar{\lambda}\}}$.\\
            2. Then, based on the observed outcome $\{\bar{\lambda}\}$, prepare the state $\rho_{\{\bar{\lambda}\}}$.
    \end{algorithm}
    
    We now verify that indeed $C^t(\sigma^{\otimes t}) = \Ex_g |\sigma_g\rangle \langle \sigma_g|^{\otimes t}$.
    First, notice that the measurement outcome in first step is $\{{\bar{\lambda}\}}$ with probability $a_{\{\bar{\lambda}\}} f(\{\bar{\lambda}\})$ because
    \begin{equation}\label{eq:structured-random-purification-calcs-prob}
        \begin{split}
           \tr \left(\Pi_{\{\bar{\lambda}\}} \sigma^{\otimes t} \right) &= \tr \left(\Pi_{\{\bar{\lambda}\}} \sum_{\bar{\eta}} a_{\bar{\eta}} |\bar{\eta}\rangle \langle \bar{\eta}| \right) \\
           &= \tr\left(\sum_{\bar{\eta}_i \cong \bar{\lambda}} \sum_{\bar{\eta}'} a_{\bar{\eta}'} \delta_{\bar{\eta} \bar{\eta}'} | \bar{\eta}' \rangle \langle \bar{\eta} |\right) \\
           &= \sum_{\bar{\eta} \cong \bar{\lambda}} a_{\bar{\eta}} \\
            &= f({\{\bar{\lambda}\}}) a_{\{\bar{\lambda}\}}\, .
        \end{split}
    \end{equation}
    Thus, when applying our structured random purification procedure to $\sigma^{\otimes t}$, in Step 1 we observe the outcome $\{\bar{\lambda}\}$ with probability $f(\{\bar{\lambda}\}) a_{\{\bar{\lambda}\}}$, and then in Step 2 prepare the state $\rho_{\{\bar{\lambda}\}}$.
    Thus, we have shown
    \begin{equation}\label{eq:structured-random-purification-channel-final}
        C^t(\sigma^{\otimes t}) = \sum_{\{\bar{\lambda}\}} f(\{\bar{\lambda}\}) a_{\{\bar{\lambda}\}} \rho_{\{\bar{\lambda}\}} \stackrel{(a)}{=} \Ex_g |\sigma_g\rangle \langle \sigma_g|^{\otimes t},
    \end{equation}
    where $(a)$ used \Cref{eq_mixture_of_purification}.
\end{proof}

\begin{remark}
Our structured random purification channel can also be obtained from \cite{walter2025randompurificationchannelarbitrary}'s random purification channel for general symmetries. Namely, for a particular choice of block diagonal states and corresponding symmetries, the random purification channel of \cite{walter2025randompurificationchannelarbitrary} gives exactly our structured random purification channel. For any Abelian StateHSP $(G, \mu, \Psi_H^{\epsilon})$ and any $t$ (copies), we take the algebras,
\begin{equation}
    \mc{A} \cong \bigoplus_{\{\bar{\lambda}\} \in \Lambda_t} b_{\{\bar{\lambda}\}} I_{\{\bar{\lambda}\}} \hspace{3ex} \text{ and } \hspace{3ex} \mc{A}^{\prime} \cong \bigoplus_{\{\bar{\lambda}\}\in \Lambda_t} \sigma_{\{\bar{\lambda}\}}, 
\end{equation}
where $\bar{\lambda} = \lambda_1 \ldots \lambda_t$ with each $\lambda_i$ being the irreducible representation of the unitary representation $\mu$, $\{\bar{\lambda}\}$ being the equivalence class of $\bar{\lambda}$ under any $t$-permutations, and $b_{\{\bar{\lambda}\}}$ being any complex numbers. Essentially, we are looking at states that are block diagonal with each block characterised by $\{\bar{\lambda}\}$ and basis states $\{|\bar{\pi \bar{\lambda}}\rangle\}_{\pi \in S_t}$.
Then, $t$ copies of the diagonal state $\sigma = \sum_{\lambda} a_{\lambda} |\lambda \rangle \langle \lambda|$ satisfy $\sigma^{\otimes t}  \in \mc{A}^{\prime}$. Thus, using \cref{thm:general_random_purification}, we get that there is a channel $\mc{P}_{\mc A}$ such that
\begin{equation}
    \mc{P}_{\mc A}(\sigma^{\otimes t}) = \bigoplus_{\{\bar{\lambda}\} \in \Lambda_t} \frac{|\tilde{\Phi}_0\rangle_{L_{\{\bar{\lambda}\}} L^{\prime}_{\{\bar{\lambda}\}}} \langle \tilde{\Phi}_0|_{L_{\{\bar{\lambda}\}} L_{\{\bar{\lambda}\}}^{\prime}}}{\text{dim } L_{\{\bar{\lambda}\}}} \otimes \tr_{L_{\{\bar{\lambda}\}}}\lrq{P_{\{\bar{\lambda}\}} \sigma^{\otimes t} P_{\{\bar{\lambda}\}}} \otimes \frac{I_{R^{\prime}_{\{\bar{\lambda}\}}}}{\text{dim } R^{\prime}_{\{\bar{\lambda}\}}}.
\end{equation}
We now recognise that $P_{\{\bar{\lambda}\}}$ is the projection on block $\{\bar{\lambda}\}$ and thus corresponds to $\Pi_{\{\bar{\lambda}\}}$ in \cref{eq:structured-purification-block-projection}, $\dim L_{\{\bar{\lambda}\}} = f(\{\bar{\lambda}\})$ is the number of distinct permutations of $\bar{\lambda}$, and 
\begin{equation}\label{eq:recognising-general-channel-as-ours}
    \frac{|\tilde{\Phi}_0\rangle_{L_{\{\bar{\lambda}\}} L^{\prime}_{\{\bar{\lambda}\}}} \langle \tilde{\Phi}_0|_{L_{\{\bar{\lambda}\}} L_{\{\bar{\lambda}\}}^{\prime}}}{\text{dim } L_{\{\bar{\lambda}\}}} = \frac{1}{f(\{\bar{\lambda}\})}\sum_{\substack{\bar{\eta}, \bar{\eta}' \\ {\bar{\eta} \cong \bar{\lambda}} \\ \bar{\eta}' \cong \bar{\lambda}}} |\bar{\eta} \bar{\eta} \rangle \langle \bar{\eta}' \bar{\eta}'| = \rho_{\{\bar{\lambda}\}}.
\end{equation}
Moreover, $R_{\{\bar{\lambda}\}}$ and $R_{\{\bar{\lambda}\}}^{\prime}$ are just one-dimensional spaces. Thus, using \cref{eq:structured-random-purification-calcs-prob} and \cref{eq:recognising-general-channel-as-ours}, we get that
\begin{equation}
    \mc{P}_{\mc A}(\sigma^{\otimes t}) = \bigoplus_{\{\bar{\lambda}\} \in \Lambda_t} f({\{\bar{\lambda}\}}) a_{\{\bar{\lambda}\}}\rho_{\{\bar{\lambda}\}}.
\end{equation}
Comparing it with \cref{eq:structured-random-purification-channel-final}, we get that
\begin{equation}
    \mc{P}_{\mc A}(\sigma^{\otimes t}) = \Ex_g |\sigma_g\rangle \langle \sigma_g|^{\otimes t} =  C^t(\sigma^{\otimes t}).
\end{equation}
Therefore, the \Cref{alg:structured_random_purification} gives exactly the channel output $\mc{P}_{\mc A}(\sigma^{\otimes t})$. This completes the remark.
\end{remark}

We now define the states
\begin{equation}\label{eq:sigma-general-statehsp}
    \sigma_{H} = \frac{1}{|H^{\perp}|}\sum_{\lambda \in H^{\perp}} |\lambda\rangle \langle \lambda|\, .
\end{equation}
where $H \leq G$ is a subgroup of $G$ with the dual $H^{\perp}$.
It is easy to see that these states hide the subgroup $H$ with $\epsilon = 1$.
\begin{lemma}\label{lemma:sigma-general-statehsp}
    For $\sigma_H$ states defined above,
    \begin{enumerate}
        \item if $g \in H$, $\tr\lr{\mu(g) \sigma_H} = 1$, and
        \item if $g \in G\setminus H$, $\tr\lr{\mu(g) \sigma_H} = 0$.
    \end{enumerate}
\end{lemma}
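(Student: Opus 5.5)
The plan is to compute $\tr(\mu(g)\sigma_H)$ directly, using that the $|\lambda\rangle$ are joint eigenvectors of the representation. Here $|\lambda\rangle$ spans the (one-dimensional, by the non-isomorphic assumption) $\lambda$-isotypic component, so $\mu(g)|\lambda\rangle = \chi_\lambda(g)|\lambda\rangle$. Linearity of the trace then gives
\begin{equation}
    \tr\lr{\mu(g)\sigma_H} = \frac{1}{|H^{\perp}|}\sum_{\lambda\in H^{\perp}} \chi_\lambda(g)\, .
\end{equation}
This reduces both claims to evaluating a character sum over the dual subgroup $H^\perp \leq \widehat{G}$.

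For the first item, take $g\in H$. By the definition of $H^\perp$, every $\lambda\in H^\perp$ satisfies $\chi_\lambda(g)=1$, so the average equals $1$ immediately.

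For the second item, take $g\in G\setminus H$. I would view $\lambda\mapsto\chi_\lambda(g)$ as a character of the finite Abelian group $H^\perp$. It is a homomorphism, since characters of $\widehat G$ multiply pointwise: $\chi_{\lambda\lambda'}(g)=\chi_\lambda(g)\chi_{\lambda'}(g)$. The standard orthogonality argument then applies: if some $\lambda_0\in H^\perp$ has $\chi_{\lambda_0}(g)\neq 1$, reindexing the sum by $\lambda\mapsto\lambda_0\lambda$ gives $S=\chi_{\lambda_0}(g)S$, hence $S=0$. Alternatively, one can invoke Lemma~\ref{lemma:schur-on} for the group $H^\perp$ against the trivial character.

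The main obstacle is justifying the existence of $\lambda_0$, i.e.\ that $(H^\perp)^\perp = H$ under Pontryagin duality for finite Abelian groups. Equivalently, for $g\notin H$ some character of $G$ trivial on $H$ is nontrivial on $g$. I would argue this by passing to the quotient $G/H$, where $gH$ is a nonidentity element. Characters of the finite Abelian group $G/H$ separate points, because $|\widehat{G/H}|=|G/H|$ and the orthogonality relations hold. Pulling back such a separating character gives $\lambda_0\in H^\perp$ with $\chi_{\lambda_0}(g)\neq1$.

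There is one further subtlety: this argument requires every character of $G$ (at least those in $H^\perp$) to occur in $\mu$, so that $|\lambda\rangle$ exists for each $\lambda\in H^\perp$. I would state this as implicit in the definition of $\sigma_H$, since the sum runs over $\lambda\in H^\perp$ with $|\lambda\rangle$ the corresponding eigenvector. In the stabilizer example it holds, since the Bell states realize all $2^{2n}$ characters.
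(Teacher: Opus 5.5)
Your proposal is correct and follows the paper's proof: both reduce $\tr(\mu(g)\sigma_H)$ to the average $\frac{1}{|H^\perp|}\sum_{\lambda\in H^\perp}\chi_\lambda(g)$ and then use that this character sum equals $1$ for $g \in H$ and $0$ for $g \notin H$. The paper simply quotes that identity as a standard property of characters, whereas you justify it via the reindexing argument and $(H^\perp)^\perp = H$. You also point out that every $\lambda \in H^\perp$ must actually occur in $\mu$ for $\sigma_H$ to be well defined, which the paper assumes without saying so.
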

\begin{proof}
    The proof is provided in \Cref{a:proofs}.
\end{proof}

Next, we show that all $g$-purifications corresponding to $\sigma_H$ as defined above for any $H$ are instances of a suitaby constructed Abelian StateHSP. To this end, we consider the Abelian group $G \times G$, and we consider the unitary representation
\begin{equation}
    \begin{split}
        &\mu^{\prime}: G \times G \rightarrow \mathrm{GL}(V \otimes V \otimes V \otimes V)\, , \\
        & \mu^{\prime}(g, h) = \mu(g) \otimes \mu(h)^{\dagger} \otimes \mu(g)^{\dagger} \otimes \mu(h).
    \end{split}
\end{equation}

\begin{lemma}\label{lemma:g-purification-statehsp}
    The two-copy $g$-purifications $|\sigma_{H, g}\rangle^{\otimes 2}$, where
    \begin{equation}
        |\sigma_{H, g}\rangle = (I \otimes \mu(g)) \lr{\frac{1}{\sqrt{|H^{\perp}|}}\sum_{\lambda \in H^{\perp}} |\lambda \rangle \otimes |\lambda\rangle}
    \end{equation}
    (as defined in \Cref{thm:algo-purification}) purifies the mixed state $\sigma_H$ for some $H$, are instances of the Abelian StateHSP $(G\times G, \mu^{\prime}, \Psi_{H^{\prime}}^{\epsilon^{\prime}})$ for a suitable $\epsilon'>0$, where the hidden subgroup $H'$ is given by
    \begin{equation}\label{eq:form-purified-subgroup}
        H^{\prime} = \{(g_1, g_2) ~|~ \chi_{\lambda_1}(g_1) \overline{\chi_{\lambda_2}(g_1) }\overline{\chi_{\lambda_1}(g_2)}{\chi_{\lambda_2}(g_2)} = 1,\forall \lambda_1, \lambda_2 \in H^{\perp}\} \, ,
    \end{equation}
    and with
    \begin{equation}
        \Psi_{H^{\prime}}^{\epsilon^{\prime}} = \bigcup_{g \in G} \{|\sigma_{H, g}\rangle\} \, .
    \end{equation}
\end{lemma}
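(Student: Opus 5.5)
The plan is a direct computation of the expectation values $\tr\bigl(\mu'(g_1,g_2)\,|\sigma_{H,g}\rangle\langle\sigma_{H,g}|^{\otimes 2}\bigr)$ in the character eigenbasis. Order the four tensor factors as (system, purifier) of copy one, then (system, purifier) of copy two. Then $\mu'(g_1,g_2)$ acts as $\mu(g_1)\otimes\mu(g_2)^\dagger$ on copy one and as $\mu(g_1)^\dagger\otimes\mu(g_2)$ on copy two.

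Since $G$ is Abelian, $\mu(g)$ commutes with every $\mu(g_i)$. So in the purifier register we may move $\mu(g)$ past $\mu(g_2)^{\pm 1}$, and the outer $\mu(g)$, $\mu(g)^\dagger$ cancel in the trace. Hence the expectation value is independent of $g$, and it suffices to treat $|\sigma_H\rangle=|H^\perp|^{-1/2}\sum_{\lambda\in H^\perp}|\lambda\rangle|\lambda\rangle$.

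Each copy is a sum of terms $|\lambda\rangle|\lambda\rangle$, and every $|\lambda\rangle$ is an eigenvector of every $\mu(g')$ with eigenvalue $\chi_\lambda(g')$. The first copy therefore contributes
\begin{equation}
\frac{1}{|H^\perp|}\sum_{\lambda_1\in H^\perp}\chi_{\lambda_1}(g_1)\overline{\chi_{\lambda_1}(g_2)},
\end{equation}
using orthonormality of the (non-isomorphic, one-dimensional) irreps. The second copy contributes the analogous average of $\overline{\chi_{\lambda_2}(g_1)}\chi_{\lambda_2}(g_2)$. The product is
\begin{equation}
\tr\bigl(\mu'(g_1,g_2)\,|\sigma_{H,g}\rangle\langle\sigma_{H,g}|^{\otimes2}\bigr)=\frac{1}{|H^\perp|^2}\sum_{\lambda_1,\lambda_2\in H^\perp}\chi_{\lambda_1}(g_1)\overline{\chi_{\lambda_2}(g_1)}\,\overline{\chi_{\lambda_1}(g_2)}\chi_{\lambda_2}(g_2).
\end{equation}
This is an average of unit-modulus numbers.

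Next, check the two StateHSP conditions.

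\emph{Condition on $H'$.} If $(g_1,g_2)\in H'$, every summand equals $1$ by \eqref{eq:form-purified-subgroup}, so the trace is $1$.

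\emph{Condition off $H'$.} If $(g_1,g_2)\notin H'$, at least one summand differs from $1$. Because it is an average of $|H^\perp|^2$ unit complex numbers, the modulus is strictly less than $1$. Since $G$ is finite, there are finitely many such pairs, so we may set
\begin{equation}
\epsilon'=1-\max_{(g_1,g_2)\notin H'}|\text{trace}|>0.
\end{equation}
By the $g$-independence shown above, this single $\epsilon'$ works uniformly for all $g$.

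We must also verify that $H'$ is a subgroup of $G\times G$. This follows because characters of an Abelian group are homomorphisms into $U(1)$, so the defining map $(g_1,g_2)\mapsto\chi_{\lambda_1}(g_1)\overline{\chi_{\lambda_2}(g_1)}\,\overline{\chi_{\lambda_1}(g_2)}\chi_{\lambda_2}(g_2)$ is a homomorphism. Then $H'$ is an intersection of kernels, and $\mu'$ is a unitary representation by construction.

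The main obstacle is the bookkeeping of daggers and the register ordering in $\mu'$, which determine exactly which characters get conjugated. A secondary point is noting that conjugate-linear pairing is not an issue, because eigenvalues of $\mu(g)^\dagger$ on $|\lambda\rangle$ are $\overline{\chi_\lambda(g)}$.
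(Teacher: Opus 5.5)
Your proposal is correct and is essentially the paper's proof. Like the paper, you strip off $(I\otimes\mu(g))^{\otimes 2}$ using commutativity and expand in the character eigenbasis to get the double sum over $\lambda_1,\lambda_2\in H^\perp$. You then read off the value $1$ on $H'$, obtain strict inequality off $H'$ (you via averages of unit-modulus numbers, the paper via strict Cauchy--Schwarz), and define $\epsilon'$ as a maximum over finitely many pairs; the paper additionally minimizes over $H$ to get one $\epsilon'$ for all subgroups, and your check that $H'$ is a subgroup is a small extra.

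The one step to tighten is that ``some summand differs from $1$'' does not by itself force the average's modulus below $1$, since all summands could share a phase $e^{i\theta}\neq 1$. You need to add that the diagonal summands with $\lambda_1=\lambda_2$ are identically $1$. The paper's linear-independence step uses the same observation implicitly.
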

\begin{proof}
    We note that $\mu^{\prime}(g_1, g_2)$ acting on some $g$-purification gives
    \begin{equation}
        \begin{split}
            \mu^{\prime}(g_1, g_2)|\sigma_{H, g}\rangle^{\otimes 2} &= (\mu(g_1)\otimes \mu(g_2)^{\dagger}|\sigma_{H, g}\rangle) \otimes (\mu(g_1)^{\dagger}\otimes \mu(g_2) |\sigma_{H, g}\rangle) \\&\stackrel{(a)}{=} ((I \otimes \mu(g))(\mu(g_1)\otimes \mu(g_2)^{\dagger}) |\sigma_H\rangle) \otimes ((I \otimes \mu(g))(\mu(g_1)^{\dagger}\otimes \mu(g_2)) |\sigma_H\rangle) \\
            &= (I \otimes \mu(g))^{\otimes 2}\lr{ \sum_{\lambda_1} \frac{1}{\sqrt{|H^{\perp}|}} \mu(g_1)|\lambda_1\rangle \otimes \mu(g_2)^{\dagger} |\lambda_1\rangle \otimes  \sum_{\lambda_2} \frac{1}{\sqrt{|H^{\perp}|}} \mu(g_1)^{\dagger}|\lambda_2\rangle \otimes \mu(g_2) |\lambda_2\rangle}  \\
            &\stackrel{(b)}{=} (I \otimes \mu(g))^{\otimes 2}
            \lr{\sum_{\lambda_1, \lambda_2} \frac{1}{{|H^{\perp}|}} \chi_{\lambda_1}(g_1)|\lambda_1\rangle \otimes \overline{\chi_{\lambda_1}(g_2)} |\lambda_1\rangle \otimes \overline{\chi_{\lambda_2}(g_1)}|\lambda_2\rangle \otimes {\chi_{\lambda_2}(g_2)} |\lambda_2\rangle},
        \end{split}
    \end{equation}
    where $(a)$ follows from Abelianity of $G$, and where $(b)$ follows from noting that the $|\lambda\rangle$ are eigenvectors of the representation $\mu(g)$ with the eigenvalues given by characters. From this, if $(g_1, g_2) \in H^{\prime}$, we see directly from the definition of $H'$ that
    \begin{equation}
        \begin{split}
            \mu^{\prime}(g_1, g_2)|\sigma_{H, g}\rangle^{\otimes 2} = |\sigma_{H, g}\rangle^{\otimes 2}, \hspace{3ex} \forall g,g_1,g_2 \in G.
        \end{split}
    \end{equation}
    In contrast, if $(g_1, g_2) \notin H^{\prime}$, we get
    \begin{equation}
        \begin{split}
            \lra{\langle \sigma_{H, g}|^{\otimes 2}\mu^{\prime}(g_1, g_2)|\sigma_{H, g}\rangle^{\otimes 2}} &= \lra{\langle \sigma_H|^{\otimes 2}\mu^{\prime}(g_1, g_2)|\sigma_H\rangle^{\otimes 2}} \\
            &=\lra{ \sum_{\lambda_1, \lambda_2} \frac{1}{{|H^{\perp}|}^2}\chi_{\lambda_1}(g_1) \overline{\chi_{\lambda_2}(g_1) }\overline{\chi_{\lambda_1}(g_2)}\chi_{\lambda_2}(g_2)} \\
            & \stackrel{(a)}{<}1.
        \end{split}
    \end{equation}
    Here, $(a)$ follows by Cauchy-Schwarz inequality $\langle v, w\rangle\leq \norm{v}\cdot\norm{w}$ being strict unless $v$ and $w$ are linearly dependent. 
    In our case, the vectors are $v=(\chi_{\lambda_1}(g_1) \overline{\chi_{\lambda_2}(g_1)})_{\lambda_1, \lambda_2\in H^\perp}$ and $w=(\chi_{\lambda_1}(g_2) \overline{\chi_{\lambda_2}(g_2)})_{\lambda_1, \lambda_2\in H^\perp}$. They have norm $1$ because $|\chi_\lambda(g)|=1$ for all $g$ and $\lambda$, and they are not linearly dependent if $(g_1, g_2) \notin H^{\prime}$ because, by definition of $H'$, there exists at least one pair $(\lambda_1, \lambda_2) \in H^{\perp}$ such that
    \begin{equation}
        \chi_{\lambda_1}(g_1) \overline{\chi_{\lambda_2}(g_1) }\overline{\chi_{\lambda_1}(g_2)}\chi_{\lambda_2}(g_2) \neq 1.
    \end{equation}

    %
    We can thus define
    \begin{equation}
        \epsilon_H^{\prime} = 1 - \max_{g_1, g_2 \in (G \times G) \setminus H^{\prime}} \lra{\langle \sigma_H|^{\otimes 2}\mu^{\prime}(g_1, g_2)|\sigma_H\rangle^{\otimes 2}} > 0\, ,
    \end{equation}
    and (since we consider only finitely many $H$) also
    \begin{equation}
        \epsilon^{\prime} = \min_{H}\epsilon_H^{\prime} > 0\, ,
    \end{equation}
    so that for any arbitrary $H$, the $g$-purifications are instances of the Abelian StateHSP $(G \times G, \mu^{\prime}, \Psi_{H^{\prime}}^{\epsilon^{\prime}})$.
\end{proof}

\subsection{Cloning Lower Bounds for Abelian StateHSPs}
We are now well-equipped to prove cloning lower bounds for particular classes of mixed and pure Abelian StateHSP. 
Our approach is similar to the one that led to the lower bound for structured sample amplification (Theorem~\ref{thm:semi-general-lower-bound}):
Solving an Abelian StateHSP problem is equivalent to collecting independent generators of the dual group $H^\perp$ of the hidden subgroup $H$. 
By \Cref{thm:optimality-statehsp}, for an Abelian StateHSP (assuming non-isomorphic irreps), the optimal way to collect these generators is via character POVM measurements. Hence, $n_{H^{\perp}}-1$ copies do not suffice to obtain $n_{H}^{\perp}$ independent generators of $H^{\perp}$. 
However, if we had a quantum cloner for that generates an extra copy from initially $n_{H^{\perp}}-1$ copies of an arbitrary instance of the Abelian StateHSP, then we could measure the character POVM on the extra copy to obtain a new independent generator with not-too-small probability. Thus no such cloner can exist.

To formalise the above argument, we consider the set of states 
\begin{equation}
    \mc{S} _{\alpha} = \bigcup_{H\leq G: |H|= \alpha} \Psi_{H}^{\epsilon}\, ,
\end{equation}
for some fixed $\alpha$.
That is, we consider the (mixed and pure) state instances of an Abelian StateHSP with Abelian group $G$, representation $\mu$, and some hidden subgroup $H\leq G$ of order $\alpha$.
For a given state $\rho$ with hidden subgroup $H \leq G$, we refer to $n_{H^{\perp}}$ %
as the number of independent generators of $H$, and we let $p_{\rho}$ be the success probability of learning $H$ given $n_{H^{\perp}}$ copies of the state $\rho$. Also, we define
\begin{equation}
    t_{G, \alpha} = \max_{H \leq G: |H| = \alpha} n_{H^{\perp}}.
\end{equation}
Moreover, we take the states $\sigma_H$ as defined in \eqref{eq:sigma-general-statehsp}. Then, we define
\begin{equation}
    \mc{S}_{\alpha}^{\sigma} =  \bigcup_{H \leq G, |H| = \alpha} \{\sigma_H\}.
\end{equation}
It is easy to see that $\mc{S}_{\alpha}^{\sigma} \subseteq \mc{S}_{\alpha}$.

\begin{theorem}[Formal Statement of Theorem~\ref{thm:quantum-results-informal}, Point 1: Error Lower Bounds for Cloning of Abelian StateHSP]\label{thm:structured-cloning}
       Let $G$ be an Abelian group, let $\mu$ be a representation of $G$ with non-isomorphic irreps.
       Then, for any $t \leq t_{G,\alpha}$, the optimal error in cloning from $t-1$ copies to $t$ copies for the class of states $\mc{S}_{\alpha}$ is lower bounded as
       \begin{equation}
           \epsilon^{\star}_{\mathrm{Cl}}(\mc{S}_{\alpha}, t-1, t) 
           \geq \epsilon^{\star}_{\mathrm{Cl}}(\mc{S}_{\alpha}^{\sigma}, t-1, t) 
           \geq 
           \min_{\sigma_H\in \mc{S}_{\alpha}^{\sigma}} p_{\sigma_H}/2\, , \hspace{3ex} \forall t \leq t_{G, \alpha}\, ,
       \end{equation}
       where the maximum is over all states $\sigma_H$ defined in~\eqref{eq:sigma-general-statehsp} for $H\leq G$ with $|H|=\alpha$. %
\end{theorem}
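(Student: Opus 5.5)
The first inequality is immediate, since $\mc{S}_{\alpha}^{\sigma}\subseteq\mc{S}_{\alpha}$ and the supremum over a larger class can only be larger. For the second inequality I will mirror the proof of Theorem~\ref{thm:semi-general-lower-bound}: reduce to a classical sample amplification problem on character POVM outcomes, and then apply the consistent-learner distinguisher of Algorithm~\ref{alg:distinguisher-general-lower-bound}.

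\textbf{Monotonicity.} As in the classical case, $\epsilon^{\star}_{\Cl}(\mc S,t-1,t)$ is non-increasing in $t$. A cloner from $t'-1$ copies can be applied to the first $t'-1$ of $t-1$ copies, with the remaining copies appended unchanged. It therefore suffices to treat the case $t=t_{G,\alpha}=n_{H^\perp}$, where $H$ is a maximizer of $n_{H^\perp}$.

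\textbf{Reduction to a classical problem.} Fix a cloning channel $\Lambda$ and define a distinguisher. On a $t$-copy state, it measures the product character POVM $\{\bigotimes_i\Pi_{\lambda_i}\}$, obtaining outcomes $\lambda_1,\dots,\lambda_t$. It then outputs a uniformly random subgroup $\hat H$ of order $\alpha$ with $\hat H^\perp\ni\lambda_1,\dots,\lambda_t$, and accepts iff $\hat H=H$. The measurement outcome on a single copy of $\sigma_H$ is uniform on $H^\perp$, so on true copies $\sigma_H^{\otimes t}$ the acceptance probability is at least $p_{\sigma_H}$, plus the $1/n(\cdot)$ terms for the non-unique cases, exactly as in Theorem~\ref{thm:semi-general-lower-bound}.

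For the cloned state $\Lambda(\sigma_H^{\otimes (t-1)})$, I must show that the acceptance probability is at most the acceptance probability of the same learner run on the $t-1$ true outcomes. This is the quantum analogue of Proposition~\ref{prop:sa-con} and the step I expect to be the main obstacle. My plan is to use the block-diagonal structure. Since $\sigma_H^{\otimes(t-1)}$ is diagonal in the product character basis with rank-one blocks, Proposition~\ref{thm:block_diagonal_optimal_measurement} shows that any channel applied to it can without loss of generality be preceded by the measurement of $\{\bigotimes_i\Pi_{\lambda_i}\}$. Its post-measurement state is then just the classical outcome. Hence the composition (clone, then measure, then learn) is a classical randomized map acting on $t-1$ i.i.d.\ uniform samples from $H^\perp$, i.e.\ a sample amplifier for the subgroup-uniform distribution class.

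It also requires care that this map may depend on the unknown $H$ only through the samples. The amplifier here is the fixed channel $\Lambda$, so it does not know $H$. The symmetry argument of Proposition~\ref{prop:sa-con} then applies verbatim: given $t-1$ samples, all $n(z_{t-1})$ consistent subgroups are equally likely a posteriori, which holds because the prior over the hidden subgroup is adversarial but the posterior is symmetric under the uniform-sample likelihood. This gives acceptance probability at most $\sum \Pr[z_{t-1}]/n(z_{t-1})$.

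\textbf{Conclusion.} The computation from the proof of Theorem~\ref{thm:semi-general-lower-bound} then goes through unchanged, with $b_{\mc F}$ replaced by $n_{H^\perp}$ and $p_{\mc F}$ replaced by $p_{\sigma_H}$. It uses that $n(z_{t-1})\ge 2$, since $t-1<n_{H^\perp}$ outcomes cannot generate $H^\perp$, and that the conditional completion probability is at most $p_{\sigma_H}$. This yields a distinguishing advantage of at least $p_{\sigma_H}/2$. The advantage of a measurement lower-bounds the trace distance, so $\td(\Lambda(\sigma_H^{\otimes t-1}),\sigma_H^{\otimes t})\ge p_{\sigma_H}/2\ge\min_{\sigma_{H}\in\mc S^{\sigma}_\alpha}p_{\sigma_{H}}/2$, which finishes the proof.
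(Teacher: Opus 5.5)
Your route is the paper's: inclusion for the first inequality, padding with untouched copies to reduce to $t=t_{G,\alpha}$, the distinguisher that measures the character POVM and runs a random consistent learner, and the computation from Theorem~\ref{thm:semi-general-lower-bound}. Your dephasing reduction is correct and sharper than the paper's appeal to optimality of character POVMs. Since $\sigma_H^{\otimes(t-1)}$ is invariant under dephasing in the product character basis, ``clone, then measure'' is a classical randomized map $T$, independent of $H$, applied to $t-1$ i.i.d.\ uniform samples from $H^\perp$.

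\textbf{The gap is the next step.} For a \emph{fixed} $H$, the inequality $\Pr[\mathcal{A}\text{ accepts }\Lambda(\sigma_H^{\otimes(t-1)})]\le\Ex_{z_{t-1}}[1/n(z_{t-1})]$ is false. So is your conclusion $\td(\Lambda(\sigma_H^{\otimes(t-1)}),\sigma_H^{\otimes t})\ge p_{\sigma_H}/2$. Take the channel that discards its input and prepares $\sigma_{H_0}^{\otimes t}$. At $H=H_0$ it has zero error, and its acceptance probability is the $t$-copy one, which by your own computation exceeds the $(t-1)$-copy one. A uniform posterior over consistent subgroups exists only if $H$ is random. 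For an adversarially fixed $H$ there is nothing to condition on, so ``adversarial prior, symmetric posterior'' does not give the bound.

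The missing idea is an averaging (minimax) step. Draw $H$ uniformly from the candidate subgroups over which the learner ranges. Since $|H^\perp|=|G|/\alpha$ for all of them, the likelihood of $z_{t-1}$ is constant on consistent candidates. So, given $z_{t-1}$, the hidden $H$ is uniform on the $n(z_{t-1})$ consistent ones and independent of $T(z_{t-1})$ and of the learner's coins. Hence $\Ex_H\Pr[\text{accept on clone}]\le\Ex_H\Ex_{z_{t-1}}[1/n(z_{t-1})]$. It follows that $\sup_H\td\ge\Ex_H\mathrm{Adv}_H\ge\Ex_H[A_H(t)-A_H(t-1)]$, where $A_H(k)$ is the acceptance probability on $k$ true copies. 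The per-$H$ computation, with the repairs below, then bounds this by $\Ex_H[p_{\sigma_H}/2]\ge\min_H p_{\sigma_H}/2$. The paper's own step, which cites Proposition~\ref{prop:sa-con}, has the same per-instance form, so this averaging is what actually justifies it there too.

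Two smaller repairs are needed in that computation.

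First, it needs $n(z_{t-1})\ge 2$ for every $H$ in the support of the prior, i.e.\ no $t-1$ outcomes may single out $H$ among the candidates. ``$t-1<n_{H^\perp}$ outcomes cannot generate $H^\perp$'' does not imply this for general Abelian groups. For $G\cong\mathbb{Z}_2\times\mathbb{Z}_4$ and $\alpha=2$, the outcome $(1,0)$ lies only in the Klein four-subgroup, which is the unique maximizer of $n_{H^\perp}$. The property does hold when the candidates are all $k$-dimensional subspaces of $\mathbb{Z}_2^m$ with $k<m$, which covers the stabilizer application.

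Second, the conditional completion probability $q(z_{t-1})$ is not bounded by $p_{\sigma_H}$; only its average equals $p_{\sigma_H}$. For subspaces, $q$ equals $1/2$ whenever the first $k-1$ samples are independent, while $p_{\sigma_H}\approx 0.29$. Use instead $\Ex[1/n(z_t)\mid z_{t-1}]-1/n(z_{t-1})\ge q(z_{t-1})\bigl(1-1/n(z_{t-1})\bigr)\ge q(z_{t-1})/2$, which follows from $n(z_t)\le n(z_{t-1})$, and then average over $z_{t-1}$.
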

\begin{proof}
    Let $\Lambda$ be any quantum cloning map for the states $\mc{S}_{\alpha}$. We first consider cloning from $t - 1 = t_{G, \alpha}-1$ copies, which we can later generalize to any $t- 1\leq t_{G, \alpha}-1$ copies. Notice that $\mc{S}_{\alpha}^{\sigma} \subseteq \mc{S}_{\alpha}$, thus
    \begin{equation}
        \epsilon^{\star}_{\mathrm{Cl}}(\mc{S}_{\alpha}, t-1, t) {\geq \epsilon^{\star}_{\mathrm{Cl}}(\mc{S}_{\alpha}^{\sigma}, t-1, t)}.
    \end{equation}
    Then, given a $\sigma_H\in \mc{S}_{\alpha}^{\sigma}$, we compare
    \begin{enumerate}
        \item true copies, $\sigma_H^{\otimes t_{G, \alpha}}$, and
        \item cloned copies, $\Lambda(\sigma_H^{\otimes (t_{G, \alpha}-1)})$.
    \end{enumerate}
    We define two probability distribution over $(H^{\perp})^{\times t_{G, \alpha}}$ as 
    \begin{equation}
        q_{\Lambda(\sigma_H^{\otimes (t_{G, \alpha} - 1)})}(\lambda_1, \ldots , \lambda_{t_{G, \alpha}}) = \Tr(\Lambda(\sigma_H^{\otimes (t_{G, \alpha}-1) }) \bigotimes_{i=1}^{t_{G, \alpha}}\Pi_{\lambda_i})\, ,
    \end{equation}
    and 
    \begin{equation}
            q_{\sigma_H^{\otimes t_{G, \alpha}}}(\lambda_1, \ldots , \lambda_{ t_{G, \alpha}}) = \Tr(\sigma_H^{\otimes t_{G, \alpha} } \bigotimes_{i=1}^{t_{G, \alpha}} \Pi_{\lambda_i})\, ,
    \end{equation}
    where $\{\Pi_{\lambda}\}_{\lambda}$ is the character POVM. We now again use a distinguisher based-approach, noting that the optimal cloning error can be rewritten as
    \begin{equation}\label{eqn:distinguisher-picture-cloning}
        \epsilon^{\star}_{\mathrm{Cl}}(\mc{S}_{\alpha}^{\sigma}, t_{G, \alpha}-1, t_{G, \alpha}) 
        = \min_{\Lambda} \sup_{\rho \in \mc{S}_{\alpha}^{\sigma}} \max_{\mc A} \mathrm{Adv}(\mc A, H, \Lambda, t_{G, \alpha}) \, .
    \end{equation}
    We consider the following distinguisher:
    
    \begin{algorithm}
            \caption{Distinguisher $\mc A$}
            \algorithmicrequire{ An $n\cdot t$ qubit state $\sigma_H^{\otimes t}$, classical description of $H$} \\
            \algorithmicensure{ Accept or Reject} \\
            1. Measure the character POVM $\{\Pi_{\lambda}\}_{\lambda}$ on every copy of $\sigma_H$.\\
            2. Run a consistent learning algorithm that outputs one of the consistent subgroups $\hat{H}$ uniformly at random using the $t$ measurement samples from the first step.\\
            2. Accept if $\hat{H} = H$, else Reject.
    \end{algorithm}
    
    For the distinguisher defined above, the distinguishing advantage is
    \begin{equation}
        \mathrm{Adv}(\mc A) = \lra{\Pr_{q_{\sigma_H^{\otimes t_{G, \alpha}}}}\lrq{\mc{A}\lr{\lambda_1, \ldots \lambda_{ t_{G, \alpha}}} = 1} - \Pr_{q_{\Lambda(\sigma_H^{\otimes (t_{G, \alpha} - 1)})}}\lrq{\mc{A}\lr{\lambda_1, \ldots \lambda_{ t_{G, \alpha}}} = 1}}\, ,
    \end{equation}
    where the $\lambda_i$ are sampled from the respective distributions.
    We argue that since the single copy character POVMs are optimal for learning $H$ correctly, the probability for specifying $H$ correctly from cloned $t_{G, \alpha}$ copies can be upper-bounded by the probability for specifying the hidden subgroup from $t_{G, \alpha} - 1$ copies analogous to Proposition~\ref{prop:sa-con}:
    \begin{equation}
        \Pr_{q_{\Lambda(\sigma_H^{\otimes t_{G, \alpha} - 1})}}\lrq{\mc{A}\lr{\lambda_1, \ldots \lambda_{ t_{G, \alpha}}} = 1} \leq \Pr_{q_{\sigma_H^{\otimes (t_{G, \alpha}-1)}}}\lrq{\mc{A}\lr{\lambda_1, \ldots \lambda_{ t_{G, \alpha}}} = 1} \, .
    \end{equation}
    Then, the advantage of the distinguisher in distinguishing the true and cloned copies is lower bounded as
    \begin{equation}
        \mathrm{Adv}(\mc A) \geq \Pr_{q_{\sigma_H^{\otimes t_{G, \alpha}}}}\lrq{\mc{A}\lr{\lambda_1, \ldots \lambda_{ t_{G, \alpha}}} = 1} - \Pr_{q_{\sigma_H^{\otimes (t_{G, \alpha}-1)}}}\lrq{\mc{A}\lr{\lambda_1, \ldots \lambda_{ t_{G, \alpha}}} = 1}.
    \end{equation}
    Notice that the above difference on the right hand side can alternatively be thought of as the difference in probability of specifying a function $f$ uniquely from $t_{G, \alpha}$ samples with that of from $t_{G, \alpha}-1$ samples. Thus, we can use similar calculation as used in the proof of Theorem~\ref{thm:semi-general-lower-bound} to lower bound this by $p_{\sigma_H}/2$. Using~\eqref{eqn:distinguisher-picture-cloning}, we get that for any $H^{\prime} \in \argmax_{H \leq G, |H| = \alpha} n_{H^{\perp}}$,

    \begin{equation}
        \epsilon^{\star}_{\mathrm{Cl}}(\mc{S}_{\alpha}^{\sigma}, t_{G, \alpha}-1, t_{G, \alpha}) \geq p_{\sigma_H}/2 \geq \min_{\sigma_H \in \mc{S}_{\alpha}^{\sigma}}p_{\sigma_H}/2 \, ,
    \end{equation}
    where $p_{\sigma_H}$ is the probability of learning the hidden subgroup $H$ from $t_{G, \alpha}$ copies of the state. This gives
    \begin{equation}
         \epsilon^{\star}_{\mathrm{Cl}}(\mc{S}_{\alpha}, t_{G, \alpha}-1, t_{G, \alpha}) \geq \epsilon^{\star}_{\mathrm{Cl}}(\mc{S}_{\alpha}^{\sigma}, t_{G, \alpha}-1, t_{G, \alpha}) \geq \min_{\sigma_H \in \mc{S}_{\alpha}^{\sigma}}p_{\sigma_H}/2.
    \end{equation}
    We can then extend this to a lower bound for cloning from $t-1 \leq t_{G, \alpha}-1$ to $t$ copies via Lemma~\ref{lemma:diff_copies} below.
\end{proof}

\begin{lemma}\label{lemma:diff_copies}
    If a family of states $\mc S$ does not admit a $(k-1, k, \delta)$-quantum cloning scheme for some $\epsilon$, then it also does not admit a $(t-1, t, \epsilon)$-quantum cloning scheme for any $t \leq k$.
\end{lemma}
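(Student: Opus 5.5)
We argue by contraposition and mirror the classical padding argument used at the end of the proof of Theorem~\ref{thm:semi-general-lower-bound}. (We read the statement with $\delta=\epsilon$, which is evidently the intent.) Suppose that for some $t\leq k$ there is a CPTP map $\Lambda:\mc{B}((\mathbb{C}^{2^n})^{\otimes (t-1)})\to\mc{B}((\mathbb{C}^{2^n})^{\otimes t})$ with $\sup_{\rho\in\mc S}\td(\Lambda(\rho^{\otimes (t-1)}),\rho^{\otimes t})\leq\epsilon$. From $\Lambda$ we will build a $(k-1,k,\epsilon)$-cloning scheme, which contradicts the hypothesis.

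The construction is $\Lambda' := \Lambda\otimes \mathrm{id}^{\otimes (k-t)}$. This map acts on $k-1 = (t-1)+(k-t)$ input copies. It applies $\Lambda$ to the first $t-1$ registers and leaves the remaining $k-t$ registers untouched. As a tensor product of CPTP maps, $\Lambda'$ is CPTP.

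On input $\rho^{\otimes(k-1)} = \rho^{\otimes(t-1)}\otimes\rho^{\otimes(k-t)}$, the output is $\Lambda(\rho^{\otimes(t-1)})\otimes\rho^{\otimes(k-t)}$. The target is $\rho^{\otimes k}=\rho^{\otimes t}\otimes\rho^{\otimes(k-t)}$. The key estimate is the identity
\begin{equation}
\td\lr{A\otimes\tau,\,B\otimes\tau}=\td(A,B),
\end{equation}
valid for any state $\tau$. One inequality follows from data processing under the partial trace. The other follows from data processing under the appending channel $X\mapsto X\otimes\tau$. Alternatively, it follows directly from multiplicativity of the trace norm over tensor products, since $\norm{\tau}_1=1$. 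Applying this with $A=\Lambda(\rho^{\otimes(t-1)})$, $B=\rho^{\otimes t}$ and $\tau=\rho^{\otimes(k-t)}$ gives
\begin{equation}
\td\lr{\Lambda'(\rho^{\otimes(k-1)}),\rho^{\otimes k}}=\td\lr{\Lambda(\rho^{\otimes(t-1)}),\rho^{\otimes t}}\leq\epsilon
\end{equation}
for every $\rho\in\mc S$. Taking the supremum over $\rho\in\mc S$ shows that $\Lambda'$ is a $(k-1,k,\epsilon)$-cloning scheme, which is the desired contradiction.

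There is essentially no hard step. The only point requiring care is the tensor-stability of trace distance, and that is immediate from data processing. The same argument yields the monotonicity $\epsilon^{\star}_{\Cl}(\mc S,t-1,t)\geq\epsilon^{\star}_{\Cl}(\mc S,k-1,k)$ for all $t\leq k$. This monotonicity is exactly what is needed to finish the proof of Theorem~\ref{thm:structured-cloning} for general $t\leq t_{G,\alpha}$.
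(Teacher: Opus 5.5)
Your proof is correct and follows the paper's argument: the paper also defines $\Lambda' = \Lambda\otimes I^{\otimes (k-t)}$ and uses $\td(\Lambda(\rho^{\otimes (t-1)})\otimes\rho^{\otimes (k-t)},\rho^{\otimes t}\otimes\rho^{\otimes (k-t)}) = \td(\Lambda(\rho^{\otimes (t-1)}),\rho^{\otimes t})$ to reach the contradiction. Reading $\delta$ as $\epsilon$ is right, and your justification of the tensor-stability of trace distance (a step the paper leaves unjustified) is sound.
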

\begin{proof}
    We prove this by contradiction. Assume that $\Lambda$ is a $(t-1, t, \epsilon)$ cloner for the family of states $\mc S$ with error $\epsilon$ for some $t \leq k$. This means that, for any $\rho \in \mc S$,
    \begin{equation}
        \td(\Lambda(\rho^{\otimes t-1}), \rho^{\otimes t}) \leq \epsilon.
    \end{equation}
    Define $\Lambda^{\prime}$ as $\Lambda^{\prime} = \Lambda \otimes I^{\otimes k-t}$.
    Then, 
    \begin{equation}
        \begin{split}
            \td\left(\Lambda^{\prime}(\rho^{\otimes k-1}), \rho^{\otimes k}\right) &= \td\left(\Lambda(\rho^{\otimes t-1}) \otimes \rho^{\otimes k-t}, \rho^{\otimes t} \otimes \rho^{\otimes k-t}\right)\\
            &\stackrel{(a)}{=} \td(\Lambda(\rho^{\otimes t-1}), \rho^{\otimes t}) \leq \epsilon \, .
        \end{split}
    \end{equation}
    Thus, $\Lambda^{\prime}$ is a $(k-1, k, \epsilon)$-cloner for the family of states $\mc S$, contradicting our assumption that no such cloner exists.
\end{proof}

The proof of \Cref{thm:structured-cloning} works for mixed state Abelian StateHSP, more precisely when we allow $\mc{S}_{\alpha}$ to contain the mixed states $\mc{S}_{\alpha}^{\sigma}$. 
We can lift the cloning lower bound to the pure Abelian StateHSP $(G\times G, \mu^{\prime}, \Psi_{H^{\prime}}^{\epsilon^{\prime}})$ as defined in the previous section. Here, we consider the set of states 
\begin{equation}
    \mc{S}^{\prime}_{\alpha} = \bigcup_{H^{\prime}} \Psi_{H^{\prime}}^{\epsilon^{\prime}}\, ,
\end{equation}
where the union is over all subgroups of $G \times G$ of the form $H^{\prime}$ as in \Cref{eq:form-purified-subgroup} for some subgroup $H\leq G$ of order $\alpha$.

\begin{corollary}\label{corollary:purified-stateHSP-cloning-lower-bound}
    Let $G$ be an Abelian group, let $\mu$ be a representation of $G$ with non-isomorphic irreps. 
    Then, for any $t\leq t_{G, \alpha}$, the optimal error in cloning from $t-1$ copies to $t$ copies for the class of states $\mc{S}^{\prime}_{\alpha}$ is lower bounded as
    \begin{equation}
        \epsilon^{\star}_{\mathrm{Cl}}(\mc{S}_{\alpha}^{\prime}, t-1, t) \geq \min_{\sigma_H \in \mc{S}_{\alpha}^{\sigma}} p_{\sigma_H}/2 \, , \hspace{3ex} \forall t \leq t_{G, \alpha}\, .
    \end{equation}
\end{corollary}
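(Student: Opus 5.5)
We prove this by reduction to \Cref{thm:structured-cloning}: any good cloner for the pure class $\mc{S}'_\alpha$ yields an equally good cloner for the mixed class $\mc{S}^\sigma_\alpha$. The ingredients are the structured random purification channel (\Cref{thm:algo-purification}), \Cref{lemma:g-purification-statehsp}, and the contractivity of trace distance under partial trace.

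Because $\mc{S}'_\alpha$ consists of the two-copy states $|\sigma_{H,g}\rangle^{\otimes 2}$, a $(t-1,t)$-cloner for $\mc{S}'_\alpha$ acts on $(t-1)$ copies of $|\sigma_{H,g}\rangle^{\otimes 2}$, i.e., on $|\sigma_{H,g}\rangle^{\otimes 2(t-1)}$. It outputs an approximation of $|\sigma_{H,g}\rangle^{\otimes 2t}$. Suppose $\Lambda'$ achieves error $\epsilon$ on every element of $\mc{S}'_\alpha$. We build a mixed-state cloner $\Lambda$ acting on $\sigma_H^{\otimes (t-1)}$ in three steps.

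\begin{enumerate}
\item Note that $\sigma_H$ is diagonal in the character basis, so it meets the hypothesis of \Cref{thm:algo-purification}. Apply the channel $C^{2(t-1)}$ to $\sigma_H^{\otimes 2(t-1)}$, obtaining $\Ex_g |\sigma_{H,g}\rangle\langle\sigma_{H,g}|^{\otimes 2(t-1)}$. This needs $2(t-1)$ input copies, whereas we only have $t-1$; the fix is discussed in the next paragraph.
\item Apply $\Lambda'$ to the result.
\item Trace out the purifying registers and return one system per copy, i.e., $t$ copies.
\end{enumerate}

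Convexity of the trace distance gives
\begin{equation*}
\td\lr{\Ex_g \Lambda'(|\sigma_{H,g}\rangle\langle\sigma_{H,g}|^{\otimes 2(t-1)}),\ \Ex_g |\sigma_{H,g}\rangle\langle \sigma_{H,g}|^{\otimes 2t}} \leq \epsilon .
\end{equation*}
Every $|\sigma_{H,g}\rangle$ purifies $\sigma_H$, so taking partial traces and using monotonicity of $\td$ under CPTP maps yields a cloner for $\sigma_H$ with error at most $\epsilon$. It would map $2(t-1)$ copies to $2t$, or $t-1$ to $t$ if we discard half the outputs.

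The main obstacle is matching copy counts. The statement bounds cloning of $\mc{S}'_\alpha$ from $t-1$ to $t$ (two-copy) units, with the same threshold $t_{G,\alpha}$ and the same constant $\min p_{\sigma_H}/2$ as the mixed case. A naive reduction maps $2(t-1)$ mixed copies to $2t$, a jump of two rather than one. My first attempt avoids this by redefining the mixed problem with block structure: treat $\tau_H := \sigma_H^{\otimes 2}$ as a single instance. It is still diagonal in the (tensor) character basis of the representation $\mu\otimes\mu$, and the two-copy purification $|\sigma_{H,g}\rangle^{\otimes 2}$ is exactly a $g$-purification of $\tau_H$ via $\mu(g)\otimes\mu(g)$. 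The proof of \Cref{thm:algo-purification} only used diagonality and the Schur orthogonality averaging over $g$, and both survive this change, so we get $C^{t-1}(\tau_H^{\otimes(t-1)}) = \Ex_g (|\sigma_{H,g}\rangle\langle\sigma_{H,g}|^{\otimes 2})^{\otimes(t-1)}$.

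With this, a $(t-1,t,\epsilon)$-cloner for $\mc{S}'_\alpha$ gives a $(t-1,t,\epsilon)$-cloner for the family $\{\sigma_H^{\otimes 2}\}$. We then rerun the distinguisher argument of \Cref{thm:structured-cloning} for $\tau_H$. Measuring the character POVM on $\tau_H$ yields two independent samples from $H^\perp$, so $t$ copies of $\tau_H$ give $2t$ samples. Here lies the remaining risk: the learning-probability gap now compares $2(t-1)$ with $2t$ samples rather than $t_{G,\alpha}-1$ with $t_{G,\alpha}$, so the constant changes.

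If this fails, the fallback is the reading in which both the copy index and $t_{G,\alpha}$ are counted in units of mixed copies. In that reading the reduction with one extra unit, $\Lambda$ from $t-1$ to $t$ via $C^{t-1}$ applied directly to $\sigma_H^{\otimes(t-1)}$, gives a cloner whose output has one extra $|\sigma_{H,g}\rangle$, not two. This matches the statement exactly and then invokes \Cref{thm:structured-cloning} together with \Cref{lemma:diff_copies} for $t<t_{G,\alpha}$.
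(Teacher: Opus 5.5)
Your fallback argument is the paper's proof, and it is the correct reading of the statement. In \Cref{lemma:g-purification-statehsp} the class is $\Psi_{H'}^{\epsilon'}=\bigcup_{g}\{|\sigma_{H,g}\rangle\}$, so $\mc{S}'_\alpha$ consists of single-copy $g$-purifications. The tensor square there only describes how $H'$ is hidden under $\mu'$, which acts on four registers; it is not the unit in which copies are counted. The paper's argument runs as follows.
\begin{itemize}
\item Define $\Lambda'=\tr_{PR} \Lambda \circ C^{t-1}$.
\item Use $C^{t-1}(\sigma_H^{\otimes (t-1)})=\Ex_g|\sigma_{H,g}\rangle\langle\sigma_{H,g}|^{\otimes(t-1)}$.
\item Apply linearity of $\Lambda$, convexity of the trace distance, and monotonicity under the partial trace (each $|\sigma_{H,g}\rangle$ purifies $\sigma_H$).
\end{itemize}
This gives a $(t-1,t,\epsilon)$-cloner for $\mc{S}^\sigma_\alpha$, contradicting \Cref{thm:structured-cloning}. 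That theorem already covers every $t\le t_{G,\alpha}$, so your extra appeal to \Cref{lemma:diff_copies} is harmless but unnecessary.

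You should discard the first attempt rather than keep it as the main line, because it does not establish the statement. Suppose copies were counted in units of $|\sigma_{H,g}\rangle^{\otimes 2}$. Then a $(t-1,t)$-cloner turns $2(t-1)$ mixed copies into $2t$. Already for $t=t_{G,\alpha}\ge 2$ you have $2(t-1)\ge t_{G,\alpha}$.

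The distinguisher argument inherited from \Cref{thm:semi-general-lower-bound} needs the pre-cloning sample to be too small to pin down $H$, leaving at least two consistent hypotheses. That step is then unavailable, so no bound of order $\min_{\sigma_H} p_{\sigma_H}/2$ follows.

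The $\tau_H$ device is also not needed for the purification step: $C^{2(t-1)}$ applied to $\sigma_H^{\otimes 2(t-1)}$ already yields $\Ex_g|\sigma_{H,g}\rangle\langle\sigma_{H,g}|^{\otimes 2(t-1)}$. Moreover, your justification is inaccurate. The diagonal action $\mu\otimes\mu$ has isomorphic irreps; for example, $(\lambda_1,\lambda_2)$ and $(\lambda_2,\lambda_1)$ share a character. The proof of \Cref{thm:algo-purification} uses non-isomorphism to identify isomorphic tuples with permutations of one another, so it does not only rely on diagonality and Schur averaging as you claim.
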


The intuition for obtaining this lower bound from the lower bound for cloning $\mc{S}_{\alpha}$ is easy to understand: 
One way of cloning a mixed state from $\mc{S}_{\alpha}^{\sigma}$ is to first apply the structured random purification channel, yielding a pure state from $\mc{S}_{\alpha}^{\prime}$, then to apply a cloner for $\mc{S}_{\alpha}^{\prime}$, and finally to trace out the auxiliary systems. %

\begin{proof}
    We prove this by contradiction. Suppose there exists a $(t-1, t, \epsilon)$-quantum cloning scheme $\Lambda$ for $\mc{S}^{\prime}_{\alpha}$ for some $t \leq t_{G, \alpha}$ and $\epsilon < \max_{\sigma_H \in \mc{S}_{\alpha}^{\sigma}} p_{\sigma_H}/2$. Then, for the $g$-purifications of any states in $\mc{S}_{\alpha}^{\sigma}$, the cloner satisfies
    \begin{equation}\label{eq:assumption-pure-state-cloner}
        \td(\Lambda(|\sigma_{H, g}\rangle \langle \sigma_{H, g}|^{\otimes t-1}), |\sigma_{H, g}\rangle \langle \sigma_{H, g} |^{\otimes t}) \leq \epsilon , \hspace{3ex} \forall g, H\, .
    \end{equation}
    Define the map $\Lambda^{\prime} = \tr_{PR} \Lambda \circ C^{t-1}$,
    where $\tr_{PR}$ is the partial trace over the purifying registers, and where $C^{t-1}$ is the structured random purification channel for $t-1$ copies from \Cref{thm:algo-purification}. Then, 
    \begin{equation}
        \td\left(\Lambda^{\prime}(\sigma_H^{\otimes t-1}), \sigma_H^{\otimes t}\right) \leq \td \left(\Lambda\circ C^{t-1}(\sigma_H^{\otimes t-1}), \Ex_g |\sigma_{H, g}\rangle \langle \sigma_{H, g}|^{\otimes t}\right)\, ,
    \end{equation}
    since the partial trace is a CPTP map and thus cannot increase trace distance. We can further upper bound the trace distance as
   \begin{equation}
        \begin{split}
            \td\left(\Lambda \circ C^{t-1}(\sigma_H^{\otimes t-1}), \Ex_g|\sigma_{H, g}\rangle \langle \sigma_{H, g}|^{\otimes t}\right) &= \td\left(\Lambda (\Ex_g |\sigma_{H, g}\rangle\langle \sigma_{H, g}|^{\otimes t-1}), \Ex_g|\sigma_{H, g}\rangle \langle \sigma_{H, g}|^{\otimes t}\right) \\
            &\stackrel{(a)}{=} \td\left(\Ex_g\Lambda ( |\sigma_{H, g}\rangle\langle \sigma_{H, g}|^{\otimes t-1}), \Ex_g|\sigma_{H, g}\rangle \langle \sigma_{H, g}|^{\otimes t}\right) \\
            &\stackrel{(b)}{\leq} \Ex_g \left(\Lambda ( |\sigma_{H, g}\rangle\langle \sigma_{H, g}|^{\otimes t-1}), |\sigma_{H, g}\rangle \langle \sigma_{H, g}|^{\otimes t}\right) \leq \epsilon\, ,
        \end{split}
    \end{equation}
    where $(a)$ follows from linearity of $\Lambda$, $(b)$ follows by triangle inequality, and the final inequality is \Cref{eq:assumption-pure-state-cloner}. Thus, $\Lambda^{\prime}$ is a $(t-1,t,\epsilon)$-cloner for states in $\mc{S}_{\alpha}^\sigma$, which contradicts the cloning error lower bound in Theorem~\ref{thm:structured-cloning}.
\end{proof}

The above results provide a lower bound on sample complexity for cloning the states in $\mc{S}_{\alpha}$ and $\mc{S}_{\alpha}^{\prime}$ with arbitrarily small error. %

\begin{corollary}
    Let $G$ be an Abelian group, let $\mu$ be a representation of $G$ with non-isomorphic irreps. 
    Then, the sets of states $\mc{S_{\alpha}}$ and $\mc{S}_{\alpha}^{\prime}$ defined above do not admit $(t, t+1, \epsilon)$-quantum cloning schemes for $t\leq t_{G, \alpha}$ and $\epsilon < p_{\sigma}/2$.
\end{corollary}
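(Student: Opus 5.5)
The plan is to obtain the corollary directly from \Cref{thm:structured-cloning} for $\mc S_\alpha$, from \Cref{corollary:purified-stateHSP-cloning-lower-bound} for $\mc S'_\alpha$, and from the monotonicity in \Cref{lemma:diff_copies}, by taking the contrapositive of each lower bound. Throughout I read $p_\sigma$ as $\min_{\sigma_H\in\mc S^\sigma_\alpha}p_{\sigma_H}$, the quantity appearing in those two results. I treat the two classes in parallel, since the pure-state case is the mixed-state case composed with the structured random purification channel $C^{t}$ of \Cref{thm:algo-purification} and a partial trace.

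\textbf{Step 1 (bulk of the range).} Fix $t$ with $t+1\le t_{G,\alpha}$. Suppose a $(t,t+1,\epsilon)$-cloner $\Lambda$ for $\mc S_\alpha$ existed with $\epsilon<p_\sigma/2$. Its worst-case error over $\mc S_\alpha\supseteq\mc S_\alpha^\sigma$ would be at most $\epsilon$, so $\epsilon^\star_{\Cl}(\mc S_\alpha,t,t+1)\le\epsilon$. \Cref{thm:structured-cloning}, applied with its index equal to $t+1\le t_{G,\alpha}$, gives $\epsilon^\star_{\Cl}(\mc S_\alpha,t,t+1)\ge p_\sigma/2$, a contradiction. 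For $\mc S'_\alpha$ I do the same with \Cref{corollary:purified-stateHSP-cloning-lower-bound}. Equivalently, the pure cloner composed as $\tr_{PR}\circ\Lambda\circ C^{t}$ would clone $\mc S^\sigma_\alpha$ with error at most $\epsilon$, by the argument in that proof.

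\textbf{Step 2 (endpoint $t=t_{G,\alpha}$).} Here Step 1 does not apply, because amplifying from $t_{G,\alpha}$ to $t_{G,\alpha}+1$ copies is not covered by the theorem. Nor does \Cref{lemma:diff_copies} help: it only transports impossibility to fewer copies. My plan is to rerun the distinguisher argument of \Cref{thm:structured-cloning} at $d=t_{G,\alpha}+1$, mirroring \Cref{thm:general-lower-bound} rather than \Cref{thm:semi-general-lower-bound}.
\begin{enumerate}
\item By the optimality of character POVMs (\Cref{thm:optimality-statehsp}), a consistent learner run on the $d$ cloned samples succeeds with probability at most that of the same learner on the $d-1$ true samples. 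This is the analogue of Proposition~\ref{prop:sa-con} used in the proof of \Cref{thm:structured-cloning}.
\item The grouping computation of \Cref{thm:general-lower-bound} then gives
\begin{equation}
\mathrm{Adv}\ge \frac{p^{d}_{\sigma_H}}{2}-p^{d-1}_{\sigma_H},
\end{equation}
where $p^{s}_{\sigma_H}$ is the probability of learning $H$ from $s$ character-POVM samples of $\sigma_H$.
\end{enumerate}
I would evaluate this at the $H$ attaining $t_{G,\alpha}$ and try to show it is at least $p_\sigma/2$. This requires controlling how much one extra sample raises the probability of completing a generating set of $H^\perp$. For $H^\perp\cong\Z_2^k$ this is the linear-independence count of Lemma~\ref{lemma:lin_ind_parities}.

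\textbf{Main obstacle.} The obstacle is Step 2. The bound $p^{d}/2-p^{d-1}$ at $d=t_{G,\alpha}+1$ is not obviously at least $p_\sigma/2=p^{d-1}/2$; in the parity case it is in fact negative. So the first thing I would try is a sharper distinguisher. Instead of accepting on exact recovery, it would accept on the event that the sample contains a generating set of $H^\perp$ together with at least one extra, redundant sample. The aim is to make the acceptance probabilities on true and cloned data differ by a quantity proportional to $p_\sigma$. If this fails, the endpoint $t=t_{G,\alpha}$ would need a genuinely new argument, while Step 1 establishes the corollary for all $t<t_{G,\alpha}$.
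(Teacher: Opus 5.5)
Your Step 1 is exactly the paper's argument. The paper gives no separate proof and reads the corollary off \Cref{thm:structured-cloning} for $\mc S_\alpha$, and off \Cref{corollary:purified-stateHSP-cloning-lower-bound} for $\mc S'_\alpha$ (via $\tr_{PR}\circ\Lambda\circ C^{t}$). Your diagnosis of the endpoint is also correct. Those results bound $\epsilon^\star_{\Cl}(\cdot,t-1,t)$ only for $t\le t_{G,\alpha}$, so they exclude $(t,t+1,\epsilon)$-cloners only for $t\le t_{G,\alpha}-1$. The range in the statement is therefore an off-by-one slip. Compare \Cref{corollary:phaseless_stab_cloning} and the parity corollary following \Cref{corollary:parities-sa-lb}, both of which are stated for $(t-1,t,\epsilon)$ with $t\le n$. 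Read that way, Step 1 is a complete proof.

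Step 2 cannot be completed along the lines you propose, so drop it rather than look for a sharper distinguisher. Both the paper's argument and your Step 2 test only the twirled states $\sigma_H$, and on $\mc S^\sigma_\alpha$ the endpoint bound is false. Here is a counterexample:
\begin{itemize}
\item Take $G=\Z_2^2$, with $\mu(g)|\lambda\rangle=(-1)^{g\cdot\lambda}|\lambda\rangle$ on $\mathbb{C}^4$ (four distinct characters), and $\alpha=2$.
\item The order-$2$ subgroups are $H=v^\perp$ with $H^\perp=\{0,v\}$, $v\neq 0$. So $t_{G,\alpha}=1$ and $\sigma_H=\tfrac12(|0\rangle\langle 0|+|v\rangle\langle v|)$.
\item One copy identifies $H$ with probability $1/2$ (or $2/3$ if a random consistent guess is counted), so $p_\sigma/2\ge 1/4$.
\item Consider the channel that measures the copy in the character basis and keeps the outcome $\lambda_1$. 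It then appends $\tfrac12(|0\rangle\langle 0|+|\lambda_1\rangle\langle\lambda_1|)$ if $\lambda_1\neq 0$, and $\tfrac12|0\rangle\langle 0|+\tfrac16\sum_{w\neq 0}|w\rangle\langle w|$ if $\lambda_1=0$.
\item This channel has error exactly $\tfrac12\cdot\tfrac13=\tfrac16<p_\sigma/2$ on every $\sigma_H$.
\end{itemize}
Since trace distance is the optimal distinguishing advantage, no choice of distinguisher can rescue the endpoint here. Any endpoint statement for $\mc S_\alpha$ or $\mc S'_\alpha$ would need an argument that does not pass through $\mc S^\sigma_\alpha$.

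Your spanning-type test does work when a missing generator is exponentially hard to guess. For the phaseless stabilizer instance it gives an endpoint bound of $P_{n+1}-P_n-(2^{n+1}-1)^{-1}\approx 0.29$ for large $n$, where $P_k$ is the probability that $k$ samples span $L^\perp$. That is instance-specific, though, and not something the general corollary can claim.
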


\subsection{Cloning Lower Bound for Stabilizer States}
We now turn our attention from the general setting of cloning Abelian StateHSP instances to the concrete question of cloning stabilizer states. We consider the following kind of $(2n)$-qubit mixed states,
\begin{equation}\label{eq:sigma-stab-states}
    \sigma_L = \frac{1}{\sqrt{2^n}} \sum_{y \in L^{\perp}} |\Phi_y\rangle \langle \Phi_y|,
\end{equation}
where $|\Phi_x\rangle$ are Bell states as defined earlier, $L$ as some $n$-dimensional subspace of $\Z_2^{2n}$. Then, as mentioned previously, we consider the mixed phaseless stabilizer StateHSP as $(\Z_2^{2n}, \mu, \Psi_{L}^{1})$, where $\Psi_{L}^{1}$ is the singleton set 
\begin{equation}
    \Psi_{L}^{1} = \{\sigma_L\}, 
\end{equation}
and with the unitary representation $x \mapsto V_x^{\otimes 2}$, with $V_x$ as defined in~\eqref{eq:phaseless-weyl-ops}. Then, we consider the class states $\mc{S}_{n}$ defined as
\begin{equation}
    \mc{S}_{n} = \bigcup_{L \leq \Z_2^{2n}, |L| = 2^n} \Psi_{L}^{1} = \{\sigma_L\}_{L \leq \Z_2^{2n}, |L| = 2^n}.
\end{equation}
Note that 
\begin{equation}
    t_{G, n} = n \, .
\end{equation}
Since the states $\sigma_L$ are block-diagonal with respect to irreps (Bell basis), and since the representation has non-isomorphic irreps, measuring in the Bell basis is optimal from~\Cref{thm:optimality-statehsp}. 
As a consequence of this optimality, the probability of learning the hidden subspace $L$ and hence the hidden subgroup is given by Lemma~\ref{lemma:lin_ind_parities} as,
\begin{equation}
    p_{\sigma} = \prod_{i = 1}^n (1-2^{-i}) \geq 0.28.
\end{equation}
The following corollary is now immediate from Theorem~\ref{thm:structured-cloning} when noting that $\mc{S}_{n} = \mc{S}_{n}^{\sigma}$ in this case.

\begin{corollary}\label{corollary:phaseless_stab_cloning}
    The set of states $\mc{S}_{n}$ does not admit a $(t-1, t, \epsilon)$-quantum cloning scheme for $t \leq n$ and $\epsilon \leq 0.14$.
\end{corollary}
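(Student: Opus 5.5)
The plan is to instantiate Theorem~\ref{thm:structured-cloning} for the mixed-phaseless-stabilizer StateHSP $(\Z_2^{2n},\mu,\Psi_L^1)$ with $\mu(x)=V_x^{\otimes 2}$, and then pass from $t=n$ down to all smaller $t$ via Lemma~\ref{lemma:diff_copies}.

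\textbf{Step 1: verify the hypotheses.} The irreps of $\mu$ are the Bell states $|\Phi_y\rangle$ with characters $\chi_y(x)=(-1)^{[x,y]}$. The symplectic form is non-degenerate, so distinct $y$ give distinct characters, and the irreps are non-isomorphic. By Lemma~\ref{lemma:sigma-stab-states} (equivalently Lemma~\ref{lemma:sigma-general-statehsp}), each $\sigma_L$ $1$-hides $L$, where we read $\sigma_L$ as the normalized uniform mixture over $L^\perp$ of Bell projectors, since $|L^\perp|=2^n$. Moreover, $\sigma_L$ is precisely the state $\sigma_H$ of \eqref{eq:sigma-general-statehsp} with $H=L$. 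Hence $\mc S_n=\mc S_n^\sigma$.

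\textbf{Step 2: identify $t_{G,\alpha}$.} With $\alpha=2^n$, $L^\perp$ is an $n$-dimensional subspace, so $n_{L^\perp}=n$ and $t_{G,2^n}=n$.

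\textbf{Step 3: compute $p_{\sigma_L}$.} Measuring the character POVM (Bell basis) on $\sigma_L$ yields a uniformly random $y\in L^\perp\cong\Z_2^n$. The consistent learner identifies $L$ exactly when the $n$ outcomes span $L^\perp$. One subtlety here: the consistent learner in the distinguisher should range over $n$-dimensional subspaces, so it succeeds whenever the outcomes span an $n$-dimensional space; in that case $L^\perp$, and hence $L=(L^\perp)^\perp$, is determined. By Lemma~\ref{lemma:lin_ind_parities}, this happens with probability $\Pr(n,n)=\prod_{i=1}^n(1-2^{-i})\ge 0.28$, uniformly in $L$. So $\min_{\sigma_L} p_{\sigma_L}/2\ge 0.14$.

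\textbf{Step 4: conclude.} Theorem~\ref{thm:structured-cloning} gives $\epsilon^\star_{\Cl}(\mc S_n,t-1,t)\ge 0.14$ for all $t\le n$, and the corollary follows.

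\textbf{Main obstacle.} The inequality in the statement is non-strict ($\epsilon\le 0.14$), so I need the bound to be strict. The product $\prod_{i\ge1}(1-2^{-i})\approx 0.2888>0.28$, so $p/2>0.144>0.14$, which rules out $\epsilon\le 0.14$. I would also check two points inside Theorem~\ref{thm:structured-cloning} for this instance. First, the step that outcomes from $n-1$ copies never determine $L$: fewer than $n$ vectors cannot span $L^\perp$, so $n(z_{n-1})\ge 2$. Second, the data-processing bound for cloned copies, which rests on the optimality of Bell sampling from Theorem~\ref{thm:optimality-statehsp}.
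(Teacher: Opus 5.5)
Your proposal is correct and follows the paper's route exactly. The paper also derives this corollary immediately from Theorem~\ref{thm:structured-cloning}, using $\mathcal{S}_n=\mathcal{S}_n^{\sigma}$, $t_{G,\alpha}=n$, optimality of Bell-basis measurement, and $p_{\sigma}=\prod_{i=1}^n(1-2^{-i})$ via Lemma~\ref{lemma:lin_ind_parities}. Your extra checks are correct refinements of points the paper glosses over: the normalization $1/2^n$ of $\sigma_L$, and using $\prod_{i\geq 1}(1-2^{-i})\approx 0.2888$ so that the bound exceeds $0.14$ strictly, which the non-strict $\epsilon\leq 0.14$ in the statement requires.
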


We now show our cloning lower bound for stabilizer states. To obtain this from \Cref{corollary:phaseless_stab_cloning}, we prove that for states in $\mc{S}_n$, the $g$-purifications are $(4n)$-qubit stabilizer states. This allows us to ``lift'' the cloning lower bound for $\mc S_n$ to a cloning lower bound for $(4n)$-qubit stabilizer states. 

\begin{lemma}\label{lemma:stabilizer-purifications} 
Suppose $\sigma = \sigma_L$ as in~\eqref{eq:sigma-stab-states} for some $n$-dimensional subspace $L$. Then, the $g$-purifications $|\sigma_g\rangle$ of $\sigma$ (defined using \Cref{thm:algo-purification}) are $(4n)$-qubit stabilizer states.
\end{lemma}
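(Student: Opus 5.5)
The plan is to write the $g$-purification explicitly and exhibit $4n$ independent, commuting Pauli operators that stabilize it. By \Cref{thm:algo-purification}, with $\sigma_L$ normalized as $\frac{1}{2^n}\sum_{y\in L^\perp}|\Phi_y\rangle\langle\Phi_y|$ (so $a_y=2^{-n}$ on $L^\perp$), we get
\begin{equation}
|\sigma_{L,g}\rangle = \frac{1}{\sqrt{2^n}}\sum_{y\in L^\perp} |\Phi_y\rangle\otimes V_g^{\otimes 2}|\Phi_y\rangle = \frac{1}{\sqrt{2^n}}\sum_{y\in L^\perp} (-1)^{[g,y]}|\Phi_y\rangle\otimes|\Phi_y\rangle\, ,
\end{equation}
using that Bell states are eigenvectors of $V_g^{\otimes 2}$ with eigenvalue $(-1)^{[g,y]}$. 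Label the four $n$-qubit registers $A,B,C,D$, with $AB$ the first Bell pair and $CD$ the second.

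First, I would find stabilizers that are diagonal in the Bell basis. For any $x\in\Z_2^{2n}$, the operators $V_x^{\otimes 2}$ on $AB$ and $V_x^{\otimes 2}$ on $CD$ both act on $|\Phi_y\rangle|\Phi_y\rangle$ as $(-1)^{[x,y]}$. Hence their product $V_x^{\otimes 4}$ stabilizes the state for every $x$, which gives $2n$ independent generators. Moreover, for $x\in L$, $V_x^{\otimes 2}$ acting on $AB$ alone gives $+1$ on every term, since $y\in L^\perp$. This gives $n$ further generators, independent of the previous ones because they act trivially on $CD$.

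Second, I would find stabilizers that permute the Bell-basis terms. For $z\in\Z_2^{2n}$, the operator $V_z\otimes I$ on $A$ maps $|\Phi_y\rangle$ to $\pm|\Phi_{y\oplus z}\rangle$, with the sign coming from $V_zV_y=(-1)^{b\cdot c}V_{z\oplus y}$. So $W_z:=(V_z\otimes I)_{AB}\otimes(V_z\otimes I)_{CD}$ maps $|\Phi_y\rangle|\Phi_y\rangle$ to $|\Phi_{y\oplus z}\rangle|\Phi_{y\oplus z}\rangle$, because the two signs cancel. For $z\in L^\perp$ this permutes the support, and the phase changes by $(-1)^{[g,z]}$. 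Therefore $(-1)^{[g,z]}W_z$ stabilizes $|\sigma_{L,g}\rangle$. Since $L^\perp$ is $n$-dimensional (L is Lagrangian-sized), this adds $n$ generators, giving $4n$ in total.

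The remaining steps are to check three things. The first is pairwise commutation. $W_z$ commutes with $V_x^{\otimes4}$ because the two symplectic phases $(-1)^{[x,z]}$ appear twice. $W_z$ commutes with $V_x^{\otimes2}$ on $AB$ for $x\in L$ because $[x,z]=0$. The second is independence, which follows because the binary symplectic vectors restricted to the distinct registers are clearly independent. The third is that $-I$ is not in the group, which follows automatically since all generators fix a nonzero state. The main obstacle is this sign and phase bookkeeping for $W_z$, in particular verifying that $W_z$ is Hermitian, and replacing it by $\pm i W_z$ or a Weyl-operator form if it is not. If that becomes messy, the fallback is an alternative argument. The state $\frac{1}{\sqrt{2^n}}\sum_{y\in L^\perp}|\Phi_y\rangle|\Phi_y\rangle$ is the image of a computational-basis subspace state $\sum_{y\in L^\perp}|y\rangle|y\rangle$, which is a stabilizer state, under the Clifford map taking the computational basis to the Bell basis on each pair. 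Applying the Clifford phase $(-1)^{[g,y]}$, implemented as a Pauli $V_g^{\otimes2}$ on one pair, keeps the state within the stabilizer class.
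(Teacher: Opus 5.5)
Your proposal is correct and follows essentially the same route as the paper: your $W_z$ ($z\in L^\perp$) are exactly the paper's set $U$, and your generators $\{V_x^{\otimes 4}\}_{x}$ and $\{V_h^{\otimes 2}\otimes I^{\otimes 2}\}_{h\in L}$ generate the same group as the paper's $W_1\cup W_2\cup S$, since $V_h^{\otimes 4}(V_h^{\otimes 2}\otimes I^{\otimes 2}) = I^{\otimes 2}\otimes V_h^{\otimes 2}$. The only difference is that you build the signs $(-1)^{[g,z]}$ in directly, whereas the paper stabilizes the base purification and then conjugates by the Pauli $I^{\otimes 2}\otimes V_g^{\otimes 2}$, which produces exactly those signs; your Hermiticity concern is moot, since $V_z^\dagger=(-1)^{a\cdot b}V_z$ appears twice in $W_z$.
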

\begin{proof}
    Let $L$ be an $n$-dimensional subspace of $\mathbb{Z}_2^{2n}$.
    As $\sigma = \sigma_L$ is diagonal in the Bell basis and supported uniformly on $2^n$ elements, the corresponding base-purification takes the form
    \begin{equation}
        |\sigma \rangle = \frac{1}{\sqrt{2^n}}\sum_{x \in L^{\perp}} |\Phi_x\rangle \otimes |\Phi_x\rangle\, .
    \end{equation}
    We first show that this base purification $|\sigma\rangle$ is a $(4n)$-qubit stabilizer state. From the discussion in \Cref{section:intro-to-stab-states-and-stabhsp}, we can obtain,
    \begin{equation}
        (V_x \otimes I) |\Phi_y\rangle = (-1)^{b \cdot c} |\Phi_{x \oplus y}\rangle.
    \end{equation}
    where $x = (a, b)$ and $y = (c, d)$. Thus, on the level of two copies,
    \begin{equation}\label{eq:two-copy-action-of-Vx}
        (V_x \otimes I) ^{\otimes 2} |\Phi_y\rangle^{\otimes 2} = |\Phi_{x \oplus y}\rangle^{\otimes 2}.
    \end{equation}
    Now, we make a choice of $n$ basis vectors for $L^{\perp}$ as $\{u_i\}_i^n$, and of $n$ basis vectors of $L$, $\{h_i\}_i^n$. We then extend the basis $\{h_i\}_i^n$ to a basis for $\mathbb{Z}_2^{2n}$, $\{h_i\}_i^{n} \cup \{s_i\}_i^n$. Then, we define four sets
    \begin{equation}
        U = \{V_{u} \otimes I \otimes V_{u} \otimes I ~|~ u \in \{u_i\}_i^n\}\, ,
    \end{equation}
    \begin{equation}
        W_1 = \{V_{h} \otimes V_{h} \otimes I^{\otimes 2}~|~ h \in \{h_i\}_i^n\}\, ,
    \end{equation}
    \begin{equation}
        W_2 = \{I^{\otimes 2} \otimes V_{h}\otimes V_{h} ~|~ h \in \{h_i\}_i^n\}\, ,
    \end{equation}
    and
    \begin{equation}
        S = \{V_s^{\otimes 4} ~|~ s \in \{s_i\}_i^n\}\, .
    \end{equation}
    Then, notice that for any $V \in U$ described by $u'\in\{u_i\}_{i=1}^n$,
    \begin{equation}
        \begin{split}
            V|\sigma\rangle &= (V_{u^{\prime}} \otimes I \otimes V_{u^{\prime}} \otimes I)|\sigma\rangle \\
            &= \frac{1}{\sqrt{2^n}}\sum_{u \in L^{\perp}} (V_{u^{\prime}} \otimes I) |\Phi_{u}\rangle \otimes (V_{u^{\prime}} \otimes I) |\Phi_{u}\rangle \\
            &\stackrel{(a)}{=} \frac{1}{\sqrt{2^n}}\sum_{u \in L^{\perp}} |\Phi_{u \oplus u^{\prime}}\rangle \otimes |\Phi_{u \oplus u^{\prime}}\rangle \\ 
            &\stackrel{(b)}{=} \frac{1}{\sqrt{2^n}} \sum_{u \in L^{\perp}} |\Phi_{u}\rangle \otimes |\Phi_{u}\rangle = |\sigma\rangle\, ,
        \end{split}
    \end{equation}
    where for $(a)$ follows from~\eqref{eq:two-copy-action-of-Vx}, and where $(b)$ is a simple reindexing of the summation by noting that $L^{\perp}$ is closed under addition.
    Thus, elements of $U$ stabilize $|\sigma\rangle$. 
    
    Next, any $V \in W_1 \cup W_2$ can be written as $V = V_{h_i}^{\otimes 2} \otimes V_{h_j}^{\otimes 2}$ for $h_i, h_j \in \{h_k\}_{k=1}^n$, and thus
    \begin{equation}
        \begin{split}
            V |\sigma\rangle &= (V_{h_i}^{\otimes 2} \otimes V_{h_j}^{\otimes 2}) |\sigma\rangle \\
            &= \frac{1}{\sqrt{2^n}} \sum_{u \in L^{\perp}} V_{h_i}^{\otimes 2} |\Phi_{u}\rangle \otimes V_{h_j}^{\otimes 2} |\Phi_{u}\rangle \\
            &\stackrel{(c)}{=} \frac{1}{\sqrt{2^n}} \sum_{u \in L^{\perp}}\chi_{u}(h_i) |\Phi_{u}\rangle \otimes \chi_{u}(h_j)|\Phi_{u}\rangle \\
            &\stackrel{(d)}{=} \frac{1}{\sqrt{2^n}} \sum_{u \in L^{\perp}} |\Phi_u \rangle \otimes |\Phi_{u}\rangle = |\sigma\rangle \, ,
        \end{split}
    \end{equation}
    where $(c)$ follows by noting that $V_{h_i}^{\otimes 2}=\mu(h_i)$ is the underlying unitary representation and thus $V_{h_i}^{\otimes 2}|\Phi_{u}\rangle = \chi_{u}(h_i)|\Phi_{u}\rangle$ since $|\Phi_{u}\rangle$ are one-dimensional irreps of the representation (eigenstates) with character (eigenvalue) given by $\chi_{u}(h_i) = (-1)^{[h_1, u]}$, and where $(d)$ follows by noting that for $h\in L$, we have $\chi_{u}(h) = 1$ for any $u \in L^{\perp}$ by assumption. 
    
    Finally, take $V \in S$ described by $s\in\{s_i\}_{i=1}^n$. Then
    \begin{equation}
        \begin{split}
            V|\sigma\rangle &= V_{s}^{\otimes 4} |\sigma\rangle \\
            &= \frac{1}{\sqrt{2^n}} \sum_{u \in L^{\perp}} V_{s}^{\otimes 2} |\Phi_{u}\rangle \otimes V_{s}^{\otimes 2} |\Phi_{u}\rangle \\
            &\stackrel{(e)}{=}\frac{1}{\sqrt{2^n}} \sum_{u \in L^{\perp}} \chi_{u}(s)|\Phi_{u}\rangle \otimes \chi_{u}(s) |\Phi_{u}\rangle \\
            &\stackrel{(f)}{=} \frac{1}{\sqrt{2^n}} \sum_{u \in L^\perp} |\Phi_{u}\rangle \otimes |\Phi_{u}\rangle = |\sigma\rangle\, ,
        \end{split}
    \end{equation}
    where $(e)$ follows similarly to $(c)$ above, and $(f)$ follows by noting that $\chi_{u}(s)=(-1)^{[s, u]}$ and thus the square of the character equals $1$.

    These calculations show that the sets $U, W_1, W_2, S$ all stabilize $|\sigma\rangle$. Moreover, each of these sets has cardinality $n$.
    Hence, it remains to show that they are also independent, then we have $4n$ independent stabilizers for the state $|\sigma\rangle$, thus making it a $(4n)$-qubit stabilizer state. Firstly, note that these operators all commute pairwise. To see this, note that the operators $V_x$ either commute or anti-commute.
    As two anti-commuting operators cannot simultaneously stabilize the same state, by the above calculations we conclude that the operators from $U, W_1,W_2,S$ commute. Now, to show independence, we need to show that any non-trivial multiplication of elements of $A = U \cup W_1 \cup W_2 \cup S$ does not equal the identity. So, we consider a product
    \begin{equation}
        \begin{split}
            &\prod_{\substack{i\\ u_i \in \{u_i\}_i}}((V_{u_i} \otimes I)^{\otimes 2})^{a_i} \prod_{\substack{{j} \\ h_j \in \{h_i\}_i}} (V_{h_j}^{\otimes 2} \otimes I^{\otimes 2})^{b_{j}} \prod_{\substack{k \\ h_k \in \{h_i\}_i}}(I^{\otimes 2} \otimes V_{h_k}^{\otimes 2})^{c_k} \prod_{\substack{l \\ s_l \in \{s_i\}_i}} V_{s_l}^{\otimes 4, d_l}\\ &\stackrel{(g)}{=} \pm (V_{\sum_i a_i u_i} \otimes I)^{\otimes 2} (V^{\otimes 2}_{\sum_{j}h_j b_{j}} \otimes V^{\otimes 2}_{\sum_{k}h_k c_{k}})(V_{\sum_l d_l s_l})^{\otimes 4}\\ 
             &\stackrel{(h)}{=} \pm \left(V_{\sum_i a_i u_i + \sum_{j}h_j b_{j} + \sum_l d_l s_l} \otimes V_{\sum_{j}h_j b_{j} + \sum_l d_l s_l} \otimes V_{\sum_i a_i u_i + \sum_{k}h_k c_{k} + \sum_l d_l s_l} \otimes V_{\sum_{k}h_k c_{k} + \sum_l d_l s_l}\right)\, ,
        \end{split}
    \end{equation}
    where $(g)$ and $(h)$ follows by using the fact that $V_x V_y = V_{x \oplus y}$ up to a negative sign. Now, for the above term to equal the identity, it has to equal the identity on each factor of the tensor product. For it to be identity on the second register,
    \begin{equation}
        \sum_{j} b_j h_j + \sum_{l} d_l s_l = 0.
    \end{equation}
    As $\{h_j\}_{j=1}^n \cup \{s_l\}_{l=1}^n$ is a basis and thus linearly independent by assumption, the above equation can only hold for $b_j = 0, d_l = 0, \forall j, l$. Then, for the fourth register to be identity, it is immediate that $c_k = 0, \forall k$ due to linear independence of $\{h_j\}_j$. Finally, we see that for the first register to be identity as well, $a_i = 0, \forall i$ due to linear independence of $u_i$. Thus, the only product of elements in $A$ that yields the identity is the trivial product. We have thus shown the desired independence.
    
    The above arguments show that elements in $A$ are $4n$ stabilizers for the state $|\sigma\rangle$ and are independent, thus they generate the stabilizer group for $|\sigma\rangle$. So, $|\sigma\rangle$ is a $(4n)$-qubit stabilizer state.  To argue that each of $g$-purification,
    \begin{equation}
        |\sigma_{g}\rangle = (I \otimes V_g^{\otimes 2}) |\sigma\rangle
    \end{equation}
    is also an $(4n)$-qubit stabilizer state, we simply define their stabilizer group using the stabilizer group of the state $|\sigma\rangle$: We take the stabilizer generators of state $|\sigma\rangle$ and then conjugate by the Pauli $(I^{\otimes 2} \otimes V_g^{\otimes 2})$. Thus, all $g$-purifications $|\sigma_{g}\rangle$ of $\sigma$ are $(4n)$-qubit stabilizer state.
\end{proof}

The above tells us that we can use our structured random purification to map any $\sigma_L$ to a random purification, which is a $(4n)$-qubit stabilizer state. 
Thus, we can define a cloner for $\mc{S}_{n}$ by first using structured random purification, then applying a cloner for pure $(4n)$-qubit stabilizer states, and finally tracing out the environment registers. This allows us to carry the cloning lower bound from $\mc{S}_n$ over to pure $(4n)$-qubit stabilizer states.  Then, by noting that a $(4n)$-qubit stabilizer states can be trivially be embedded into $(4n+1)$-qubit stabilizer states (by fixing the last qubit to $|0\rangle$), we can lift the lower bounds to $(4n+1)$-qubit stabilizer states. This allows us to set up the following result for cloning of any $n$-qubit stabilizer states.

\begin{theorem}[Formal Statement of Theorem~\ref{thm:quantum-results-informal}, Point 2: Cloning Lower Bound for Stabilizer States]\label{thm:stab-states-clone}
    Let $\mathrm{Stab}_n$ be the set of pure $n$-qubit stabilizer states. Then $\mathrm{Stab}_n$ does not admit a $(t-1, t, \epsilon)$-quantum cloning scheme for
    \begin{equation}
        t \leq \lfloor{n/4}\rfloor \hspace{3ex} \text{ and } \hspace{3ex} \epsilon \leq 0.14.
    \end{equation}
\end{theorem}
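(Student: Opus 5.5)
The plan is a reduction by contradiction, following the chain sketched after \Cref{lemma:stabilizer-purifications}. Set $k = \lfloor n/4 \rfloor$ and suppose some $t \leq k$ and some CPTP map $\Lambda$ form a $(t-1, t, \epsilon)$-cloner for $\mathrm{Stab}_n$ with $\epsilon \leq 0.14$ (strictly, below the threshold $p_\sigma/2$ of \Cref{corollary:phaseless_stab_cloning}). From $\Lambda$ we will build a $(t-1,t,\epsilon)$-cloner for the mixed class $\mc{S}_k$ of $(2k)$-qubit states $\sigma_L$, with $L$ a $k$-dimensional subspace of $\Z_2^{2k}$. Since $t \leq k$, this contradicts \Cref{corollary:phaseless_stab_cloning} applied with $n$ replaced by $k$.

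\textbf{Step 1: embedding.} We have $4k \leq n$. Map every $(4k)$-qubit stabilizer state $|\psi\rangle$ to $|\psi\rangle \otimes |0\rangle^{\otimes (n-4k)} \in \mathrm{Stab}_n$, which is a stabilizer state: add the generators $Z_j$ on the padding qubits. Define the $(4k)$-qubit cloner
\begin{equation}
\tilde\Lambda(\omega) = \tr_{\mathrm{pad}}\, \Lambda\big(\omega \otimes (|0\rangle\langle 0|^{\otimes (n-4k)})^{\otimes (t-1)}\big),
\end{equation}
where $\omega$ is a state on $t-1$ copies of $4k$ qubits, the padding is placed appropriately per copy, and $\tr_{\mathrm{pad}}$ traces out the padding qubits of the output. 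The error of $\tilde\Lambda$ on $\mathrm{Stab}_{4k}$ is at most $\epsilon$, because the partial trace is contractive in trace distance and the padded target state traces back to $|\psi\rangle\langle\psi|^{\otimes t}$.

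\textbf{Step 2: lift to the mixed class.} Given $\sigma_L^{\otimes (t-1)}$ with $\sigma_L \in \mc{S}_k$, first apply the structured random purification channel $C^{t-1}$ of \Cref{thm:algo-purification}. Its hypotheses hold: the representation $x\mapsto V_x^{\otimes 2}$ has non-isomorphic irreps (the Bell states), and $\sigma_L$ is diagonal in that basis. The output is $\Ex_g |\sigma_{L,g}\rangle\langle\sigma_{L,g}|^{\otimes (t-1)}$, and by \Cref{lemma:stabilizer-purifications} each $|\sigma_{L,g}\rangle$ lies in $\mathrm{Stab}_{4k}$. Next apply $\tilde\Lambda$, then trace out the purifying registers. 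The trace-distance chain is the one in the proof of \Cref{corollary:purified-stateHSP-cloning-lower-bound}: linearity of $\tilde\Lambda$, convexity of trace distance, and the fact that the marginal of $|\sigma_{L,g}\rangle^{\otimes t}$ is $\sigma_L^{\otimes t}$. Together these give error at most $\epsilon$ on $\mc{S}_k$, which is the desired contradiction.

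\textbf{Main obstacle: the threshold.} \Cref{corollary:phaseless_stab_cloning} is stated for $\epsilon\le 0.14$, while the underlying bound is $\prod_{i=1}^k(1-2^{-i})/2 > 0.288/2 = 0.144$. So $0.14$ lies strictly below it and the contradiction is strict. I would also check that $p_\sigma$ in \Cref{thm:structured-cloning} is well defined. Note that the normalisation $1/\sqrt{2^n}$ in the definition of $\sigma_L$ is presumably a typo for $1/2^n$, and I would use the normalised state. Finally, there are degenerate cases. For $k=0$ the statement is vacuous. For $t=1$, cloning from zero copies, the reduction still goes through: $C^0$ is trivial and the same contradiction applies.
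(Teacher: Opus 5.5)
Your proposal is correct and takes essentially the paper's approach. The paper first proves the $(4k)$-qubit case at $t=k$ (Lemma~\ref{lemma:4n_qubit_stab}) using structured random purification and Lemma~\ref{lemma:stabilizer-purifications}, then lowers $t$ with Lemma~\ref{lemma:diff_copies} and pads one qubit at a time with Lemma~\ref{lemma:lifting-stab-states}. You instead pad all $n-4k$ qubits at once and take the range $t\le k$ directly from Corollary~\ref{corollary:phaseless_stab_cloning}, which already has the monotonicity in $t$ built in.
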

To prove this, we first present some useful lemmas.

\begin{lemma}\label{lemma:4n_qubit_stab}
    Let $\mathrm{Stab}_{4n}$ be the set of pure $(4n)$-qubit stabilizer states for $n\geq 1$. Then, $\mathrm{Stab}_{4n}$ does not admit an $(n-1, n, \epsilon)$ quantum cloning scheme with $\epsilon\leq 0.14$.
\end{lemma}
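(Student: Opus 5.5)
The plan is to argue by contradiction, reducing to \Cref{corollary:phaseless_stab_cloning} through the structured random purification channel, in the same way as the proof of \Cref{corollary:purified-stateHSP-cloning-lower-bound}. Suppose $\Lambda$ is an $(n-1,n,\epsilon)$-quantum cloning scheme for $\mathrm{Stab}_{4n}$ with $\epsilon\leq 0.14$. Then for every $n$-dimensional subspace $L\leq\Z_2^{2n}$ and every $g\in\Z_2^{2n}$, we have $\td\lr{\Lambda(|\sigma_{L,g}\rangle\langle\sigma_{L,g}|^{\otimes n-1}),|\sigma_{L,g}\rangle\langle\sigma_{L,g}|^{\otimes n}}\leq\epsilon$. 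This holds because, by \Cref{lemma:stabilizer-purifications}, each $g$-purification $|\sigma_{L,g}\rangle$ of $\sigma_L$ lies in $\mathrm{Stab}_{4n}$.

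Next we build a cloner for the mixed states $\mc S_n=\{\sigma_L\}_L$. The map is $\Lambda'=\tr_{PR}\circ\Lambda\circ C^{n-1}$, where $C^{n-1}$ is the structured random purification channel of \Cref{thm:algo-purification}. We apply it to the mixed-phaseless-stabilizer StateHSP $(\Z_2^{2n},\mu,\Psi_L^1)$ with $\mu(x)=V_x^{\otimes 2}$. This is allowed because the irreps (the Bell states) are non-isomorphic and $\sigma_L$ is diagonal in the Bell basis. The theorem then gives $C^{n-1}(\sigma_L^{\otimes n-1})=\Ex_g|\sigma_{L,g}\rangle\langle\sigma_{L,g}|^{\otimes n-1}$. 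By linearity of $\Lambda$ and convexity of trace distance (the triangle inequality), we get $\td\lr{\Lambda\circ C^{n-1}(\sigma_L^{\otimes n-1}),\Ex_g|\sigma_{L,g}\rangle\langle\sigma_{L,g}|^{\otimes n}}\leq\epsilon$. Tracing out the purifying registers of each of the $n$ output copies is CPTP, so it does not increase the distance. Since each $|\sigma_{L,g}\rangle$ purifies $\sigma_L$, the partial trace of the target is exactly $\sigma_L^{\otimes n}$. Hence $\Lambda'$ is an $(n-1,n,\epsilon)$-cloner for $\mc S_n$.

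This contradicts \Cref{corollary:phaseless_stab_cloning} at $t=n$, which rules out such cloners for $\epsilon\leq 0.14$. One small point needs care. The corollary rests on \Cref{thm:structured-cloning}, which gives $\epsilon^\star\geq p_\sigma/2$ with $p_\sigma=\prod_{i=1}^n(1-2^{-i})$. This product is strictly greater than $0.28$ (it converges to about $0.2888$), so any $\epsilon\leq 0.14$ is strictly below the bound and the contradiction goes through.

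The main thing to check is bookkeeping. The output of $\Lambda$ must be arranged as $n$ copies of $4n$ qubits, each consisting of a $2n$-qubit system register and a $2n$-qubit purifying register, with the tensor ordering matching the purification $|\sigma\rangle=\sum_x|\Phi_x\rangle|\Phi_x\rangle/\sqrt{2^n}$. Only then does $\tr_{PR}$ act copy-wise, and the identity $\tr_{PR}(|\sigma_{L,g}\rangle\langle\sigma_{L,g}|)^{\otimes n}=\sigma_L^{\otimes n}$ needs exactly this. The normalization of $\sigma_L$ must also be read as $2^{-n}$ rather than the paper's $1/\sqrt{2^n}$, so that it is a valid state. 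Finally, $n\geq1$ ensures the number of input copies $n-1\geq 0$ is well defined; for $n=1$ the zero-copy case is covered by the same argument, since $C^0$ is trivial.
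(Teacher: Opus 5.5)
Your proposal is correct and follows the paper's proof essentially verbatim. You assume a cloner $\Lambda$ for $\mathrm{Stab}_{4n}$, use \Cref{lemma:stabilizer-purifications} to apply it to the $g$-purifications, and then show via linearity, convexity of trace distance and monotonicity under partial trace that $\tr_{PR}\circ\Lambda\circ C^{n-1}$ would clone $\{\sigma_L\}_L$ to error $\epsilon$, contradicting \Cref{corollary:phaseless_stab_cloning}. Your added check that $\prod_{i=1}^n(1-2^{-i})/2 > 0.14$ for every $n$ is worth keeping: the paper's appendix proof only derives the contradiction for $\epsilon<0.14$, while the lemma claims it for $\epsilon\leq 0.14$.
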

\begin{proof}
    The proof is identical to that of \Cref{corollary:purified-stateHSP-cloning-lower-bound} and follows by lifting the cloning lower bound of $\mc{S}_n = \{\sigma_L\}_{L\leq \Z_2^{2n}, |L| = 2^n}$ from \Cref{corollary:phaseless_stab_cloning} to the random purification states, which we have shown to be stabilizer states in \Cref{lemma:stabilizer-purifications}. For completeness, the proof is provided in \Cref{a:proofs}.
\end{proof}

So far, we discussed stabilizer states with this peculiar $(4n)$-qubit number, which is an artifact of the proof technique that we used. 
A simple observation allows us to extend the result to any number of qubits.

\begin{lemma}\label{lemma:lifting-stab-states}
    Let $\mathrm{Stab}_n$ be the class of $n$-qubit stabilizer states.
    Suppose that $\mathrm{Stab}_n$ does not admit a $(t-1, t, \epsilon)$ quantum cloning scheme, then the class of states $\mathrm{Stab}_{n+1}$ (the $(n+1)$-qubit stabilizer states) does not admit a $(t-1, t, \epsilon)$ quantum cloning scheme.
\end{lemma}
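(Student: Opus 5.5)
We argue by contraposition, in the same way as \Cref{lemma:diff_copies} and \Cref{corollary:purified-stateHSP-cloning-lower-bound}. Suppose $\Lambda$ is a $(t-1,t,\epsilon)$-quantum cloning scheme for $\mathrm{Stab}_{n+1}$. We build from it a $(t-1,t,\epsilon)$-cloner $\Lambda'$ for $\mathrm{Stab}_n$, contradicting the hypothesis. The construction uses the embedding $|\psi\rangle \mapsto |\psi\rangle\otimes|0\rangle$: it sends $\mathrm{Stab}_n$ into $\mathrm{Stab}_{n+1}$, which we verify first. If $S$ is the stabilizer group of $|\psi\rangle$, generated by $n$ independent commuting Paulis $P_1,\dots,P_n$, then $P_1\otimes I,\dots,P_n\otimes I, I^{\otimes n}\otimes Z$ are $n+1$ independent commuting Paulis stabilizing $|\psi\rangle|0\rangle$. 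Hence $|\psi\rangle|0\rangle \in \mathrm{Stab}_{n+1}$.

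Concretely, define
\begin{equation*}
\Lambda'(\omega) = \tr_{\mathrm{anc}}\Bigl[\Lambda\bigl(\mathcal{E}(\omega)\bigr)\Bigr],
\end{equation*}
where $\mathcal{E}$ is the CPTP map that appends a fresh $|0\rangle\langle 0|$ ancilla to each of the $t-1$ input copies, reordering registers so the input has the form $(\rho\otimes|0\rangle\langle0|)^{\otimes (t-1)}$. The map $\tr_{\mathrm{anc}}$ traces out the last qubit of each of the $t$ output blocks. As a composition of CPTP maps, $\Lambda'$ is CPTP.

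For $\rho=|\psi\rangle\langle\psi|\in\mathrm{Stab}_n$, write $\rho' = \rho\otimes|0\rangle\langle 0|\in\mathrm{Stab}_{n+1}$. Then $\mathcal{E}(\rho^{\otimes(t-1)}) = \rho'^{\otimes (t-1)}$ and $\tr_{\mathrm{anc}}(\rho'^{\otimes t}) = \rho^{\otimes t}$. By monotonicity of trace distance under the partial trace (a CPTP map),
\begin{equation*}
\td\bigl(\Lambda'(\rho^{\otimes (t-1)}),\rho^{\otimes t}\bigr) = \td\bigl(\tr_{\mathrm{anc}}\Lambda(\rho'^{\otimes(t-1)}),\tr_{\mathrm{anc}}\rho'^{\otimes t}\bigr) \le \td\bigl(\Lambda(\rho'^{\otimes(t-1)}),\rho'^{\otimes t}\bigr)\le \epsilon.
\end{equation*}
Taking the supremum over $\rho\in\mathrm{Stab}_n$ shows that $\Lambda'$ is a $(t-1,t,\epsilon)$-cloner for $\mathrm{Stab}_n$, which is the desired contradiction.

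There is no real obstacle. The only points needing care are bookkeeping: the register ordering in $\mathcal{E}$ and $\tr_{\mathrm{anc}}$, and confirming that the embedded state is a genuine $(n+1)$-qubit stabilizer state. Iterating the lemma, together with \Cref{lemma:4n_qubit_stab} applied with $\lfloor n/4\rfloor$ and \Cref{lemma:diff_copies}, then yields \Cref{thm:stab-states-clone}.
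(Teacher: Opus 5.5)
Your proposal is correct and takes essentially the same approach as the paper. The paper also assumes a $(t-1,t,\epsilon)$-cloner for $\mathrm{Stab}_{n+1}$, precomposes it with the isometry $|\psi\rangle\mapsto|\psi\rangle\otimes|0\rangle$ on each copy (justified by the same stabilizer-generator argument as \Cref{claim_lifting_stab_states}), traces out the ancilla qubits, and concludes by contractivity of the trace distance under CPTP maps.
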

\begin{proof}
    The proof is provided in \Cref{a:proofs} and follows by noting that for a $(4n)$-qubit stabilizer state $|\psi\rangle$, $|\psi \rangle \otimes |0\rangle$ is a $(4n+1)$-qubit stabilizer state and thus any cloning scheme for $(4n+1)$-qubit stabilizer would be a cloning scheme for $(4n)$-qubit stabilizer states.
\end{proof}

We are now well-equipped to provide the proof of Theorem~\ref{thm:stab-states-clone}.

\begin{proof}[Proof of Theorem~\ref{thm:stab-states-clone}]
    To show this, we start with Lemma~\ref{lemma:4n_qubit_stab}, which says that there is no $(t-1, t, \epsilon)$ cloner for $n$-qubit stabilizer states with $n = 4a$ with integer $a \geq 1$ and with $t = n/4 = a$, 
    Next, we use Lemma~\ref{lemma:diff_copies} to lift this result to any $t \leq n/4$. Finally, we lift this using Lemma~\ref{lemma:lifting-stab-states} to stabilizer states on a number of qubits that is not an integer multiple of $4$. We use the notation $\nexists \Lambda(\mc S, t, t+1, \epsilon)$ to mean that $\mc S$ does not admit a $(t, t+1, \epsilon)$-quantum cloning scheme. Take $\epsilon \leq 0.14$, then
    \begin{equation}
    \begin{split}
        \mathrm{Lemma ~\ref{lemma:4n_qubit_stab} }\implies\nexists \Lambda(\mathrm{Stab}_n, n/4-1, n/4, \epsilon) &\xrightarrow[]{\mathrm{Lemma  ~\ref{lemma:diff_copies}}}\nexists \Lambda(\mathrm{Stab}_n, t-1, t, \epsilon), \hspace{3ex} \forall t\leq n/4 \\
        & \xrightarrow[]{\mathrm{Lemma~\ref{lemma:lifting-stab-states}}} \nexists \Lambda(\mathrm{Stab}_{n+1}, t-1, t, \epsilon), \forall t \leq n/4  = \left\lfloor \frac{n+1}{4}\right\rfloor.
    \end{split}
    \end{equation}
    Similarly, we can argue about $(n+2)$ and $(n+3)$-qubit stabilizer states. When we arrive at $(n+4)$-qubit states, we can again use Lemma~\ref{lemma:4n_qubit_stab} since it is also a multiple of $4$. 
\end{proof}

Thus, we have shown that cloning pure $n$-qubit stabilizer states to a small constant accuracy requires linearly-in-$n$ many copies. This matches the sample complexity of exactly learning the same class of states up to constant factors; stabilizer state cloning is no easier than stabilizer state learning.


\section*{Acknowledgements}

We thank Yanlin Chen, Daniel Grier, and Natalie McHugh for insightful discussions.

\sloppy
\printbibliography
\newpage
\appendix

\crefalias{section}{appendix}

\section{Deferred Proofs}\label{a:proofs}

\begin{proof}[Proof of \Cref{claim:t-twirl-state-closure}]
    Notice that, for any $ g^{\prime} \in G$,
    \begin{align*}
        \tr\left(\mu(g^{\prime})\sigma\right) &= \tr\left(\mu(g^{\prime})\frac{1}{|G|}\sum_{g \in G} \mu(g) \rho \mu(g)^{\dagger}\right) \\
        &= \frac{1}{|G|}\sum_{g \in G}\tr\left(\mu(g^{\prime}) \mu(g) \rho \mu(g)^{\dagger}\right)\tag{\stepcounter{equation}\theequation} \\
        &\stackrel{(a)}{=} \frac{1}{|G|}\sum_{g \in G}\tr\left(\mu(g) \mu(g^{\prime}) \rho \mu(g)^{\dagger}\right) \\
        &\stackrel{(b)}{=} \frac{1}{|G|}\sum_{g \in G}\tr\left(\mu(g^{\prime}) \rho\right) = \tr\left(\mu(g^{\prime}) \rho\right)\, ,
    \end{align*}
    where $(a)$ follows because the group is Abelian, and $(b)$ follows from cyclic property of trace. Hence, $\rho \in \Psi_H^{\epsilon}$ if and only if $\sigma \in \Psi_H^{\epsilon}$.

    Now, to show the second part, take any $g^{\prime} \in G$, then,
    \begin{equation}
        \begin{split}
            \mu(g^{\prime})\sigma &= \frac{1}{|G|}\sum_{g \in G}\mu(g^{\prime}) \mu(g) \rho \mu(g)^{\dagger} \\
            &=  \frac{1}{|G|}\sum_{g \in G} \mu(g^{\prime}g) \rho \mu(g)^{\dagger} \\
            &\stackrel{(c)}{=} \frac{1}{|G|}\sum_{g^{\prime \prime} \in G} \mu(g^{\prime \prime}) \rho \mu(g^{\prime \prime}(g^{\prime })^{-1})^{\dagger} \\
            &\stackrel{(d)}{=} \frac{1}{|G|}\sum_{g^{\prime \prime} \in G} \mu(g^{\prime \prime}) \rho \mu(g^{\prime \prime})^{\dagger} \mu(g^{\prime}) = \sigma \mu(g^{\prime})\, ,
        \end{split}
    \end{equation}
    where $(c)$ follows by redefining $g^{\prime \prime} = g^{\prime}g = g g^{\prime}$, and $(d)$ follows by Abelianity and noticing that $\mu((g^{\prime })^{-1})^{\dagger} = \mu(g^{\prime})$. 
\end{proof}

\begin{proof}[Proof of \Cref{lemma:sigma-general-statehsp}]
To show this, we use the property of characters that,
\begin{equation}
    \sum_{\lambda \in H^{\perp}} \chi_{\lambda}(g) = \begin{cases}
        |H^{\perp}| & \text{ if } g \in H, \\
        0 & \text{ otherwise}.
    \end{cases}
\end{equation}
Now,
\begin{equation}
    \begin{split}
        \tr\lr{\mu(g) \sigma_H} &= \tr\lr{\frac{1}{|H^{\perp}|} \sum_{\lambda \in H^{\perp}} \mu(g) |\lambda \rangle \langle \lambda|} \\
        &= \frac{1}{|H^{\perp}|} \sum_{\lambda \in H^{\perp}} \chi_{\lambda}(g) = 1 
    \end{split}
\end{equation}
Thus, using the above equations,
\begin{equation}
    \tr\lr{\mu(g) \sigma_H} = \begin{cases}
        1 & \text{if } g \in H,\\
        0 & \text{otherwise}\, .
    \end{cases}
\end{equation}
\end{proof}

\begin{proof}[Proof of \Cref{lemma:sigma-stab-states}]
    As mentioned previously, 
    \begin{equation}
        V_x^{\otimes 2} |\Phi_y\rangle = (-1)^{[x, y]} |\Phi_y\rangle,
    \end{equation}
    where $[x, y] = a\cdot d + b \cdot c \mod 2$ with $x = (a, b)$ and $y = (c, d)$. Now, by definition,
    \begin{equation}
        L^{\perp} = \{y \in \Z_2^{2n} ~|~ (-1)^{[x, y]} = 1, \forall x \in L \}.
    \end{equation}
    Thus, for any $x \in L$,
    \begin{equation}
        \begin{split}
            \tr\lr{V_x^{\otimes 2} \sigma} &= \frac{1}{{2^n}}\sum_{y \in L^{\perp}} \tr\lr{V_x^{\otimes 2} |\Phi_y\rangle \langle \Phi_y|} \\
            &= \frac{1}{{2^n}} \sum_{y \in L^{\perp}} (-1)^{[x, y]} = 1.
        \end{split}
    \end{equation}
    Now, for any $x \notin L$, there exists $y \in L^{\perp}$ such that, 
    \begin{equation}
         (-1)^{[x, y]} = -1 \implies  {[x, y]} \neq 0.
    \end{equation}
    If we choose a basis $\{l_i\}_{i=1}^n$ for $L^{\perp}$, then $y = \sum_i a_i l_i$ for some $a_i \in \{0, 1\}, \forall i$. Then,
    \begin{equation}
        \sum_i a_i [x, l_i] \neq 0.
    \end{equation}
    So, there exists at least one $l_i$ such that $[x, l_i] \neq 0$. Then, 
    \begin{equation}
        \begin{split}
             \tr\lr{V_x^{\otimes 2} \sigma} &=  \frac{1}{{2^n}} \sum_{y \in L^{\perp}} (-1)^{[x, y]} \\
             &= \frac{1}{2^n}\sum_{a_1 \ldots a_n} (-1)^{\sum_i a_i [x, l_i]} \\
             &= \frac{1}{2^n} \prod_{i=1}^n \lr{1+ (-1)^{[x, l_i]}}.
        \end{split}
    \end{equation}
    Since, there exists atleast one $l_i$ such that $[x, l_i] \neq 0$, above quantity is 0.
\end{proof}

\begin{proof}[Proof of \Cref{lemma:4n_qubit_stab}]
    We prove this by contradiction. Assume that there exist a $(n-1, n, \epsilon)$ cloner $\Lambda$ for pure $(4n)$-qubit stabilizer states with error $\epsilon$. Then, we consider the mixed states $\{\sigma_L\}_{L}$ 
    \begin{equation}
        \sigma_L = \frac{1}{2^n} \sum_{u \in L^{\perp}} |\Phi_{u}\rangle \langle \Phi_{u}|,
    \end{equation}
    where $L$ is some $n$ dimensional subspace of $\Z_2^{2n}$ and their $g$-purifications $\ket{\sigma_{L, g}}$. As $g$-purifications are pure $(4n)$-qubit stabilizer states,
    \begin{equation}
        \td(\Lambda(|\sigma_{L, g}\rangle \langle \sigma_{L, g}|^{\otimes n-1}), |\sigma_{L, g}\rangle \langle \sigma_{L, g} |^{\otimes n}) \leq \epsilon , \hspace{3ex} \forall g, L.
    \end{equation}
    Define the map $\Lambda^{\prime} = \tr_{PR} \Lambda \circ C^{n-1}$, where $\tr_{PR}$ is the partial trace on purifying registers, and where $C^{n-1}$ is the structured random purification channel for $n-1$ copies from \Cref{thm:algo-purification}. Then, 
    \begin{equation}
        \td\left(\Lambda^{\prime}(\sigma^{\otimes n-1}), \sigma^{\otimes n}\right) \leq \td (\Lambda\circ C^{n-1}(\sigma^{\otimes n-1}), \Ex_g |\sigma_{L, g}\rangle \langle \sigma_{L, g}|^{\otimes n}),
    \end{equation}
    since the partial trace is a CPTP map and thus cannot increase trace distance. We can further upper bound the trace distance as
   \begin{equation}
        \begin{split}
            \td\left(\Lambda \circ C^{n-1}(\sigma^{\otimes n-1}), \Ex_g|\sigma_{L, g}\rangle \langle \sigma_{L, g}|^{\otimes n}\right) &= \td\left(\Lambda (\Ex_g |\sigma_{L, g}\rangle\langle \sigma_{L, g}|^{\otimes n-1}), \Ex_g|\sigma_{L, g}\rangle \langle \sigma_{L, g}|^{\otimes n}\right) \\
            &\stackrel{(a)}{=} \td\left(\Ex_g\Lambda ( |\sigma_{L, g}\rangle\langle \sigma_{L, g}|^{\otimes n-1}), \Ex_g|\sigma_{L, g}\rangle \langle \sigma_{L, g}|^{\otimes n}\right) \\
            &\stackrel{(b)}{\leq} \Ex_g \left(\Lambda ( |\sigma_{L, g}\rangle\langle \sigma_{L, g}|^{\otimes n-1}), |\sigma_{L, g}\rangle \langle \sigma_{L, g}|^{\otimes n}\right) \leq \epsilon\, ,
        \end{split}
    \end{equation}
    where $(a)$ follows from linearity of $\Lambda$ and $(b)$ follows by triangle inequality. 
    Thus, $\Lambda^\prime$ is an $(n-1,n,\epsilon)$-cloner for $\{\sigma_L\}_{L}$. If $\epsilon <0.14$, this contradicts \Cref{corollary:phaseless_stab_cloning}.
\end{proof}

\begin{proof}[Proof of \Cref{lemma:lifting-stab-states}]
    We show this by using contradiction. Assume that there exist a $(t-1, t, \epsilon)$ quantum cloner $\Lambda$ for $\mathrm{Stab}_{n+1}$. Then, we simply define a new cloner $\Lambda^{\prime}$ for $n$-qubit stabilizer states as
    \begin{equation}
        \Lambda^{\prime} = \tr_{1}\Lambda \circ V^{\otimes t-1},
    \end{equation}
    where $V$ is an isometry from $2^n$ dimensions to $2^{n+1}$ dimensions such that for any $n$-qubit stabilizer state $|\psi\rangle$, $V|\psi\rangle = |\psi\rangle \otimes |0\rangle$ which is $(n+1)$-qubit stabilizer state from Claim~\ref{claim_lifting_stab_states}, and $\tr_1$ is partial trace on all copies of this single qubit register. Now, notice that for any $n$-qubit stabilizer state $|\psi\rangle$,
    \begin{equation}
        \td\left(\Lambda^{\prime}(|\psi\rangle^{\otimes t-1}), |\psi\rangle^{\otimes t}\right) \leq \td\left(\Lambda \circ V^{\otimes t-1}(|\psi\rangle^{\otimes t-1}), |\psi 0\rangle^{\otimes t}\right) ,
    \end{equation}
    which follows again by using the contraction property of trace distance under CPTP maps. Now,
    \begin{equation}
        \begin{split}
            \td\left(\Lambda \circ V^{\otimes t-1}(|\psi\rangle^{\otimes t-1}), |\psi 0\rangle^{\otimes t}\right)  &= \td\left(\Lambda(|\psi 0\rangle^{\otimes t-1}), |\psi0\rangle^{\otimes t}\right) \leq \epsilon.
        \end{split}
    \end{equation}
    This gives a $( t-1, t, \epsilon)$ cloner for $n$-qubit stabilizer states by using the $(t-1, t, \epsilon)$ cloner for $n+1$ qubit stabilizer states, but this contradicts the assumption. 
\end{proof}

\begin{claim}\label{claim_lifting_stab_states}
    For any $n$-qubit stabilizer state $|\psi\rangle$, $|\psi 0\rangle$ is a $(n+1)$-qubit stabilizer state.
\end{claim}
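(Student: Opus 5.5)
We need to show: for any $n$-qubit stabilizer state $|\psi\rangle$, the state $|\psi\rangle\otimes|0\rangle$ is an $(n+1)$-qubit stabilizer state. The plan is to exhibit an explicit stabilizer group of order $2^{n+1}$ for $|\psi\rangle\otimes|0\rangle$ built from the stabilizer group $S\subset\mc P_n$ of $|\psi\rangle$.

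First, we take the $n$ independent commuting generators $P_1,\dots,P_n$ of $S$, each satisfying $P_i|\psi\rangle=|\psi\rangle$. We then form the $n+1$ operators
\begin{equation}
    P_1\otimes I,\ \dots,\ P_n\otimes I,\ I^{\otimes n}\otimes Z,
\end{equation}
all of which lie in $\mc P_{n+1}$. Each of them fixes $|\psi\rangle\otimes|0\rangle$: this is immediate from $P_i|\psi\rangle=|\psi\rangle$ and $Z|0\rangle=|0\rangle$.

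Second, we verify that these operators pairwise commute. For the first $n$ this is inherited from the commutation of the $P_i$, and $I^{\otimes n}\otimes Z$ commutes with every $P_i\otimes I$ because they act on disjoint tensor factors.

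Third, we check independence. Take a product $\prod_i (P_i\otimes I)^{a_i}\,(I\otimes Z)^{c}$ that is proportional to the identity. Its last tensor factor is $Z^c$, which forces $c=0$. The remaining product $\prod_i P_i^{a_i}\otimes I$ then has to be proportional to the identity, and independence of the generators of $S$ gives $a_i=0$ for all $i$. Therefore the group generated by these $n+1$ operators has order $2^{n+1}$. Moreover, it does not contain $-I$, since every element stabilizes a nonzero vector.

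Putting these steps together, $|\psi\rangle\otimes|0\rangle$ is stabilized by a subgroup of $\mc P_{n+1}$ of order $2^{n+1}$, so it is an $(n+1)$-qubit stabilizer state, exactly matching the definition in \Cref{section:intro-to-stab-states-and-stabhsp}. There is no real obstacle here; the only care needed is in the independence step, where we use the tensor factorization to separate the last qubit from the rest.
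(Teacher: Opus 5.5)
Your proof is correct and follows the paper's argument: tensor the $n$ generators of the stabilizer group of $|\psi\rangle$ with $I$, adjoin $I^{\otimes n}\otimes Z$, and check that these $n+1$ operators stabilize $|\psi\rangle\otimes|0\rangle$, commute, and are independent. The only difference is that you spell out the independence step (the last tensor factor forces $c=0$, then independence of the $P_i$ gives $a_i=0$), which the paper merely asserts.
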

\begin{proof}
    To show this, we create stabilizer generators of $|\psi0\rangle$. Let $g_1, \ldots g_m$ be stabilizer generators of $|\psi\rangle$, then we take the set,
    \begin{equation}
        A = \{g_1 \otimes I, \ldots g_m \otimes I, I^{\otimes m} \otimes Z\}.
    \end{equation}
    Notice that the elements of $A$ stabilizes $|\psi 0\rangle$. Moreover, they can be shown to be independent by leveraging the independence of $g_i$, and they also commute. Hence, the above $n+1$ elements would be stabilizer generators for $|\psi 0\rangle$, and so it is an $(n+1)$-qubit stabilizer state.
\end{proof}

\end{document}